\crefname{hypothesis}{Hypothesis}{Hypotheses}
\crefname{fact}{Fact}{Facts}
\newcommand{\C}{\mathbb{C}}
\newcommand{\R}{\mathbb{R}}
\newcommand{\Tr}{\operatorname{Tr}}
\newcommand{\Skew}{\operatorname{Skew}}
\newcommand{\Exp}{\operatorname{Exp}}
\newcommand{\grad}{\operatorname{grad}}
\title{
A Grover-compatible Manifold Optimization Algorithm for Quantum Search\thanks{Submitted to the editors DATE. $^\triangle$Corresponding author. \funding{This work was supported by the National Natural Science Foundation of China under the grant numbers 12501419, 12331010 and 12288101, the National Key R\&D Program of China under the grant number 2024YFA1012901, the Quantum Science and Technology-National Science and Technology Major Project via Project 2024ZD0301900, and the Fundamental Research Funds for the Central Universities, Peking University.}
}
}
\author{
  Zhijian Lai\thanks{Beijing International Center for Mathematical Research, Peking University, Beijing, 100871, People's Republic of China (\email{lai\_zhijian@pku.edu.cn,
    dongan@pku.edu.cn,
    wenzw@pku.edu.cn}).
}
  \and
  Dong An\footnotemark[2]
  \and
  Jiang Hu\thanks{Yau Mathematical Sciences Center,
    Tsinghua University, Beijing, 100190, People's Republic of China (\email{jianghu@tsinghua.edu.cn}).
  }
  \and
  Zaiwen Wen\footnotemark[2] $^\triangle$
}
\definecolor{mygreen}{RGB}{205, 222, 194}
\newcolumntype{C}[1]{>{\centering\arraybackslash}m{#1}}
\newcolumntype{L}[1]{>{\raggedright\arraybackslash}m{#1}}
\begin{document}

\maketitle

\begin{abstract}
Grover's algorithm is a fundamental quantum algorithm that offers a quadratic speedup for the unstructured search problem by alternately applying physically implementable oracle and diffusion operators.
In this paper, we reformulate the unstructured search as a maximization problem on the unitary manifold and solve it via the Riemannian gradient ascent (RGA) method.
To overcome the difficulty that generic RGA updates do not, in general, correspond to physically implementable quantum operators, we introduce Grover-compatible retractions to restrict RGA updates to valid oracle and diffusion operators.
Theoretically, we establish a local Riemannian $\mu$-Polyak--\L{}ojasiewicz (PL) inequality with $\mu = \tfrac{1}{2}$, which yields a linear convergence rate of $1 - \kappa^{-1}$ toward the global solution. Here, the condition number $\kappa = L_{\mathrm{Rie}} / \mu$, where $L_{\mathrm{Rie}}$ denotes the Riemannian Lipschitz constant of the gradient. Taking into account both the geometry of the unitary manifold and the special structure of the cost function, we show that $L_{\mathrm{Rie}} = \mathcal{O}(\sqrt{N})$ for problem size $N = 2^n$. Consequently, the resulting iteration complexity is $\mathcal{O}(\sqrt{N} \log(1/\varepsilon))$ for attaining an $\varepsilon$-accurate solution, which matches the quadratic speedup of $\mathcal{O}(\sqrt{N})$ achieved by Grover's algorithm.
These results demonstrate that an optimization-based viewpoint can offer fresh conceptual insights and lead to new advances in the design of quantum algorithms.
\end{abstract}

\begin{keywords}
quantum search, Grover's algorithm, Riemannian optimization, complexity
\end{keywords}

\begin{MSCcodes}
81P68, 90C26, 65K10
\end{MSCcodes}

\section{Introduction}

Unstructured search represents a fundamental challenge in computational science, finding broad applications in data retrieval \cite{knuth1997art,manning2008introduction}, optimization \cite{durr1996quantum,baritompa2005grover,gilliam2021grover}, cryptanalysis \cite{grassl2016applying,bernstein2025post}, and machine learning \cite{dong2008quantum,du2021grover}. Its goal is to identify one or more target elements within an unsorted search space of size $N$. In the absence of exploitable structure, any classical algorithm inevitably requires $\Omega(N)$ queries in the worst case \cite{bennett1997strengths}, constituting a critical bottleneck for large-scale search.

In 1996, L. K. Grover introduced a groundbreaking quantum algorithm, known as Grover's algorithm \cite{grover1996fast}, which solves the unstructured search problem with a query complexity of $\mathcal{O}(\sqrt{N})$. This quadratic speedup serves as a landmark demonstration of the power of quantum computation over classical algorithms and has become a paradigmatic example of quantum advantage.
Moreover, this $\mathcal{O}(\sqrt{N})$ query complexity has been rigorously shown to be optimal \cite{nielsen2010quantum,zalka1999grover,bennett1997strengths,beals2001quantum},
establishing a tight bound of $\Theta(\sqrt{N})$.
The significance of Grover's algorithm extends far beyond the specific problem of unstructured search.
It reveals a fundamental \textit{primitive} of quantum computation.
Its underlying mechanism was later generalized to the amplitude amplification (AA) technique \cite{brassard2000quantum,ambainis2012variable,suzuki2020amplitude}, an indispensable tool for boosting the success probability of various quantum algorithms \cite{durr1996quantum,szegedy2004quantum,biamonte2017quantum,acampora2022using}.
These developments were subsequently unified within the broader framework of quantum singular value transformation (QSVT) \cite{gilyen2019quantum,martyn2021grand}.

To understand the origin of Grover's quadratic speedup, several geometric and dynamical interpretations have been proposed. For instance, Grover's iteration has been characterized as a geodesic in complex projective space connecting the initial and target states \cite{miyake2001geometric}, and modeled as geodesic dynamics within an information-geometric framework under the Wigner--Yanase metric \cite{cafaro2012grover}.
More recently, an interpretation based on imaginary-time evolution (ITE) has been introduced in \cite{suzuki2025grover}.

Manifold (Riemannian) optimization \cite{absil2008optimization,boumal2023introduction,hu2020brief} emerges as a compelling framework for understanding Grover's algorithms. Indeed, manifolds arise naturally in quantum systems, and manifold-based methods have already been successfully applied to tasks such as optimizing quantum circuits \cite{luchnikov2021qgopt,wiersema2023optimizing} and quantum state and comb tomography \cite{hsu2024quantum,li2025quantum}. In particular, the set of quantum circuits constitutes the unitary manifold $\{U \in \mathbb{C}^{N \times N} \mid U^{\dagger} U = I\}$.
Since many quantum tasks aim to find a circuit that prepares a target state or implements a desired transformation, they can be naturally formulated as optimization problems over this unitary manifold \cite{wiersema2023optimizing}.
A series of double-bracket quantum algorithms \cite{gluza2024double,zander2025role,suzuki2025double,gluza2024doublediag,robbiati2024double,xiaoyue2024strategies} shows that the ITE trajectories follow a Riemannian gradient flow on the unitary manifold, induced by an associated manifold optimization problem.
These works suggest that manifold optimization can potentially provide a promising and systematic tool for analyzing quantum algorithms.

Recent developments in manifold optimization indicate that algorithm design and convergence theory depend critically on the choice of a \textit{retraction} \cite{adler2002newton,absil2008optimization}, the operator that maps a tangent vector back onto the manifold and thereby yields the fundamental update step. On the unitary manifold, a natural retraction is the Riemannian exponential\footnote{The Riemannian exponential map is a geometric concept extending straight lines to manifolds (i.e., geodesics).} map \cite{absil2008optimization,abrudan2008efficient}, which leads to the update step $U_{k+1} = e^{i H t_k} U_k$.
In the context of classical computing, this choice is often replaced by computationally more efficient retractions, such as the Cayley transform \cite{wen2013feasible}, QR-based retractions \cite{sato2019cholesky,absil2008optimization}, or polar decompositions \cite{absil2012projection,absil2008optimization}, to avoid explicit matrix exponentials.
However, these alternative retractions typically rely on matrix inversions or factorizations that are difficult to realize on quantum hardware, whereas the exponential factor $e^{i H t_k}$ is exactly a quantum gate.
Consequently, many retractions that are attractive for a classical computer may not be physically implementable on a quantum computer.
Despite the intrinsic suitability of manifold optimization for quantum computing, existing results neither provide an interpretation of Grover-type iterations nor yield Riemannian algorithms with rigorously proven quadratic speedups.
This motivates us to develop manifold optimization algorithms whose retractions are compatible with Grover's quantum gates and to establish convergence guarantees that recover Grover's quadratic speedup.

\subsection{Our contribution}
Building on the manifold optimization viewpoint outlined above, our contributions are twofold:
\begin{itemize}
\item \textbf{Grover-compatible Riemannian gradient ascent algorithms.} We first reformulate the unstructured search problem as a maximization problem on the unitary manifold. We then introduce several \emph{Grover-compatible retractions} on this manifold and employ them to design Riemannian gradient ascent (RGA) algorithms. By construction, each update of these RGA methods is implemented as a finite composition of oracle and diffusion operators. This design ensures that the updates are physically realizable on quantum hardware while simultaneously constituting valid Riemannian retractions within the manifold optimization framework.

\item \textbf{Linear convergence with Grover's quadratic speedup.}
To analyze the complexity of our algorithm, we first establish a local Riemannian $\mu$-Polyak--\L{}ojasiewicz (PL) inequality with parameter $\mu = \tfrac{1}{2}$ for the proposed cost function.
This property ensures a linear convergence rate towards the global solution with a per-iteration contraction factor of $1 - \kappa^{-1}$, where the condition number is given by $\kappa = L_{\mathrm{Rie}} / \mu$ and $L_{\mathrm{Rie}}$ denotes the Riemannian Lipschitz constant of the gradient.
By carefully exploiting the geometry of the unitary manifold and the structure of the cost function, we prove that $L_{\mathrm{Rie}} = \mathcal{O}(\sqrt{N})$ for problem size $N$. Consequently, with a step size of order $1 / L_{\mathrm{Rie}}$, the resulting Grover-compatible RGA algorithms attain an $\varepsilon$-accurate solution in $\mathcal{O}(\sqrt{N} \log(1/\varepsilon))$ iterations.
This result matches Grover's quadratic speedup in terms of $N$, while achieving linear convergence with the optimal error dependence $\log(1/\varepsilon)$ under the PL condition.

\end{itemize}

\subsection{Organization}
In \cref{sec-prelim}, we review the unstructured search problem and the basic geometry of the unitary manifold.
In \cref{sec-grover-alg}, we reformulate the search problem as an optimization problem, introduce the Grover-compatible retraction, and derive the associated Riemannian gradient ascent method.
We analyze the query complexity in \cref{sec-complexity}, showing that our method recovers Grover's optimal quadratic speedup.
Numerical simulations are given in \cref{sec-experiments}.
We conclude in \cref{sec-discussion}.

\section{Preliminary}\label{sec-prelim}

In this section, we begin by reviewing the notation for the unstructured search and introducing the geometric tools required for optimization on the unitary manifold. Afterwards, we summarize the fundamental Riemannian optimization framework that builds upon these ingredients.
For convenience, \cref{tab:notation} collects the basic algebraic notation used throughout the paper.

\begin{table}[!htbp]
\centering
\small
\renewcommand{\arraystretch}{1.05}

\caption{Basic algebraic notation. \(A\) and \(B\) in the table are \(N\times N\) complex matrices.}
\label{tab:notation}

\begin{tabular}{@{}
L{0.18\textwidth}
L{0.75\textwidth}
@{}}
\hline
\textbf{Notation} & \textbf{Meaning} \\
\hline
\(\mathbb{R}\), \(\mathbb{C}\)
& Real field and complex field.\\

\(N\)
& Size of the search space.\\

\(\mathcal{H} \cong \C^N \)
& \(N\)-dimensional Hilbert space. \\

\(\mathbb{C}^{N\times N}\)
& Space of all \(N\times N\) complex matrices, identified with operators on \(\mathcal{H}\).\\

$\bar{z}$, \(\Re z\), \(\Im z\)
& Conjugate, real part and imaginary part of \(z\in\mathbb{C}\). \\

\(\Tr(A)\)
& Trace of \(A\).\\

\(A^\dagger\)
& Conjugate transpose of \(A\). \\

\(\langle A,B\rangle\)
& Real Hilbert--Schmidt inner product,
\(\langle A,B\rangle= \Re \Tr(A^\dagger B)\).\\

\(\|A\|_F\)
& Frobenius norm of \(A\),
\(\|A\|_F=\sqrt{\Tr(A^\dagger A)}\).\\

\(\|A\|_2\)
& Spectral norm of \(A\), i.e., its largest singular value. \\

\(\Skew(A)\)
& Skew-Hermitian part of \(A\),
\(\Skew(A)=\tfrac12(A-A^\dagger)\). \\

\([A,B]\)
& Commutator of \(A\) and \(B\),
\([A,B]=AB-BA\).\\

\(A\otimes B\)
& Tensor product, or Kronecker product, of \(A\) and \(B\).\\

\(\operatorname{span}_{\mathbb{F}}\{\cdot\}\)
& Linear span over \(\mathbb{F}\), where \(\mathbb{F}=\mathbb{R}\) or \(\mathbb{C}\). \\

\([A]_{\mathcal{B}}\)
& Matrix representation of \(A|_{\mathcal{S}}\) with respect to the ordered basis
\(\mathcal{B}=\{b_1,\ldots,b_d\}\) of a subspace
\(\mathcal{S}\subseteq\mathcal{H}\), assuming \(A\mathcal{S}\subseteq\mathcal{S}\);
namely, \([A]_{\mathcal{B}}\in \C^{d \times d}\) satisfies
\(A b_j=\sum_{i=1}^d ([A]_{\mathcal{B}})_{ij} b_i\), \(j=1,\ldots,d\).\\

\hline
\end{tabular}

\end{table}

\subsection{Problem statement}\label{sec-problem}

We consider a search problem over the set $[N]:=\{0,1, \ldots, N-1\}$, where a binary oracle function $g(x) \in\{0,1\}$ identifies the marked items. The marked set is denoted as $S = \{x \in[N]: g (x) = 1\}$ with $ M: = |S|,$ and $1 \leq M \ll N.$
The task of the search problem is to identify at least one marked item.

The quantum version of the search problem is to prepare a superposition of the marked states.
Specifically, let $\mathcal{H}\cong \C^N $ be the $N$-dimensional Hilbert space associated with $n$ qubits and $N=2^n$, and $\{|j\rangle\}_{j=0}^{N-1}$ denote its computational basis (an orthonormal basis) of $\mathcal{H}$ with the $j$-th standard basis vector $|j\rangle$.
Let the marked subspace be $\mathcal{T} = \operatorname{span}_{\mathbb{C}}\{|x\rangle: x \in S\} \subseteq \mathcal{H}$.
Then, the task of the quantum search problem is to prepare a quantum state (i.e., a normalized $N$-dimensional vector) that lies in $\mathcal{T}$.
The projector onto the marked subspace
$\mathcal{T}$ is expressed as
\begin{equation}\label{eq-defn-Hf}
    H_g = \sum_{x \in S} |x\rangle\langle x|.
\end{equation}
Here $\langle x|$ denotes the conjugate transpose of $|x\rangle$.
Notice that $H_g$ is Hermitian and idempotent, i.e., $H_g = H_g^\dagger = H_g^2$.
The initial state of Grover's algorithm is the uniform superposition over all computational basis states,
\begin{equation}\label{eq-defn-psi0}
    |\psi_0\rangle = \frac{1}{\sqrt{N}} \sum_{x=0}^{N-1} |x\rangle,
\end{equation}
and we denote the corresponding rank-one projector by $\psi_0 = |\psi_0\rangle\langle\psi_0|$.
Then, Grover's algorithm is implemented using two types of quantum gates, namely the oracle operator and the diffusion operator, which are defined as follows:
\begin{align}
  U_g (\beta) &= e^{i \beta H_g}
             = I + \bigl (e^{i\beta} - 1\bigr)H_g, \label{eq-defn-Ug} \\
  D (\alpha) &= e^{i \alpha \psi_0}
            = I + \bigl (e^{i\alpha} - 1\bigr)\psi_0, \label{eq-defn-D}
\end{align}
where $\alpha, \beta \in \mathbb{R}$ control the rotation angles. Grover's algorithm alternately applies these two operators through $G (\alpha_k, \beta_k) = - D (\alpha_k) U_g (\beta_k),$ so that after $T$ iterations the final state is given by
\[
|\psi_{\text{final}}\rangle
    = \prod_{k=1}^{T} G (\alpha_k, \beta_k) |\psi_0\rangle,
\]
which is designed to approximate the ideal target state, namely, the uniform superposition over all marked items,
\begin{equation}\label{eq-defn-psistar}
    |\psi^{\star}\rangle
    = \frac{1}{\sqrt{M}} \sum_{x \in S} |x\rangle
    \in \mathcal{T}.
\end{equation}

The query complexity of the Grover's algorithm is typically considered as the number of calls to $U_g(\beta)=e^{i \beta H_g}$.
The circuit implementations of these two types of gates can be found in many standard textbooks on quantum computing \cite{kaye2006introduction,nielsen2010quantum} and are not the focus of this work.

We next place several Grover-type algorithms within the above framework to highlight how different choices of the angle parameters $\alpha_k$ and $\beta_k$ lead to different variants.
In the original formulation of Grover's algorithm \cite{grover1996fast}, the parameters are chosen as $\alpha_k=\beta_k=\pi$ for all $k$.
This choice, however, makes the procedure vulnerable to the soufflé problem \cite{brassard1997searching}, where performing too many Grover iterations drives the state past the target instead of toward it when the number of marked states $M$ is underestimated.
To mitigate this overshooting issue, the $\pi/3$ algorithm \cite{grover2005fixed} uses the fixed angles $\alpha_k=\beta_k=\pi/3$, though at the cost of losing the quadratic speedup.
The fixed-point algorithm \cite{yoder2014fixed} uses a carefully designed sequence of $\alpha_k$ and $\beta_k$ to suppress overshooting while preserving the quadratic speedup.
These Grover-type methods differ mainly in their deterministic choices of angles.
In contrast, the method proposed later in this paper adaptively determines $\alpha_k$ and $\beta_k$ at each iteration using the latest Riemannian gradient and step size information.

\subsection{Geometric tools on \texorpdfstring{$\mathrm{U}(N)$}{U(N)}}

In the next section, we will reformulate the unstructured search problem above as an optimization problem over the compact Lie group of $N\times N$ unitary matrices:
\begin{equation}\label{eq-UN}
    \mathrm{U} (N)=\bigl\{ U\in\C^{N\times N} \big| U^{\dagger}U=I_N\bigr\},
\end{equation}
which is a Riemannian manifold of dimension $N^2$.
Manifold optimization is deeply rooted in the geometry of manifolds. Here, we present only the minimal set of geometric tools required for our interested manifold $\mathrm{U} (N)$.
For a systematic treatment of manifold optimization, the reader is referred to textbooks \cite{absil2008optimization, boumal2023introduction}.

\paragraph{Tangent space}
The tangent space at a point on a manifold represents all possible directions in which one can move infinitesimally from that point.
For example, on the sphere manifold $\mathbb{S}^{n-1}=\left\{x \in \mathbb{R}^n: \|x\|=1\right\}$, the tangent space at $x\in \mathbb{S}^{n-1}$ is $T_x=\left\{v \in \mathbb{R}^n: x^{T} v=0\right\} \subseteq \mathbb{R}^n$, which consists of all directions orthogonal to $x$.
For the unitary manifold, the tangent space at $U \in \mathrm{U} (N)$ is given by
\begin{equation*}
    T_U=\left\{\Omega U: \Omega^{\dagger}=-\Omega\right\}=\mathfrak{u} (N) U,
\end{equation*}
where $\mathfrak{u} (N)=\left\{\Omega \in \mathbb{C}^{N \times N}: \Omega^{\dagger}=-\Omega\right\}$ denotes the Lie algebra of $\mathrm{U} (N)$, i.e., the real vector space of all $N \times N$ skew-Hermitian matrices.
Consider the ambient space $\mathbb{C}^{N \times N}$ equipped with the real inner product $\langle A, B \rangle = \Re \Tr (A^\dagger B). $ With respect to this inner product, the orthogonal projection of an arbitrary $Z \in \mathbb{C}^{N \times N}$ onto $T_U$ is given by $\mathcal{P}_{U} (Z) = \frac12 (Z - U Z^\dagger U) =\Skew (Z U^\dagger)U\in T_U$ where $\Skew (A): = \frac{1}{2} (A - A^\dagger). $
This construction parallels the projection in the sphere manifold, where $\mathcal{P}_x (z)=z-x\left (z^{T} x\right)\in T_x$ removes the component of $z$ along $x$ for any $z \in \mathbb{R}^n$.

\paragraph{Riemannian gradient}
At a point $U$ on the unitary manifold, the Riemannian gradient, $\operatorname{grad} f (U)$, represents the tangent direction of the steepest ascent of a smooth function $f: \mathbb{C}^{N \times N} \rightarrow \mathbb{R}$; it serves as the Riemannian analog of the usual gradient in Euclidean space.
Here, we view $\mathrm{U} (N)$ as a Riemannian submanifold of the real vector space $\mathbb{C}^{N \times N} \cong \mathbb{R}^{2 N^2}$. Each tangent space $T_U$ is endowed with the inner product
$\langle X, Y\rangle_U: =\Re\Tr (X^\dagger Y)=\Tr (\widetilde{X}^\dagger\widetilde{Y})$,
where $X=\widetilde{X} U, Y=\widetilde{Y} U$, and $\widetilde{X}, \widetilde{Y} \in \mathfrak{u} (N)$.
Under this metric, the Riemannian gradient of $f$ is given by (\cite[Proposition 3.61]{boumal2023introduction}):
\begin{equation}\label{eq-gradf-formula}
    \operatorname{grad} f (U)=\mathcal{P}_U (\nabla f (U)) \in T_U,
\end{equation}
where $\nabla f (U)$ is the usual gradient.

\paragraph{Retractions}
Retractions provide a mechanism for moving along tangent directions while ensuring that the iterates remain on the manifold.
For any $U \in \mathrm{U} (N)$, a retraction is a smooth mapping $\mathrm{R}_U: T_U \rightarrow \mathrm{U} (N)$ such that the induced curve $\gamma (t):=\mathrm{R}_U (t \eta)$ satisfies
\begin{equation}\label{eq-defn-retraction}
\gamma(0)=U \text { and } \dot{\gamma}(0)=\eta \text { for all } \eta \in T_U .
\end{equation}
As a simple example, on the sphere manifold a retraction can be implemented by normalization: $\mathrm{R}_x (\eta)=\frac{x+\eta}{\|x+\eta\|}$.
A natural case of retraction on $\mathrm{U} (N)$ is Riemannian exponential map, defined by
\begin{equation}\label{eq-Exp}
    \Exp_U (\eta)=e^{\widetilde{\eta}} U, \qquad \widetilde{\eta}:=\eta U^\dagger\in\mathfrak{u} (N).
\end{equation}
Its induced curve $\gamma (t)=e^{t\widetilde{\eta}}  U$ satisfies $\gamma (0)=U$ and $\dot{\gamma} (0)=\widetilde{\eta}U=\eta$, verifying that $\Exp_U$ is a valid retraction.
Later, in order to ensure a physically implementable retraction, we will restrict the retraction to a specific subspace of  $T_U $, so that the subsequent Riemannian gradients remain within this subspace.

\subsection{Riemannian algorithms}

Recall that in the classical setting, to solve $\max_{x\in\mathbb{R}^n} f (x), $ one constructs a sequence $\{x_k\}\subseteq\mathbb{R}^n$ from an initial iterate $x_0$ via the update rule $x_{k+1} = x_k + t_k \eta_k,$
where $\eta_k\in\mathbb{R}^n$ is the search direction and $t_k>0$ is the step size.
The direction $\eta_k$ can be taken as gradient, or a Newton direction.
The step size may be chosen in various ways: as a fixed constant (e.g., reciprocal of the Lipschitz constant), by backtracking to satisfy the Armijo condition \cite{nocedal2006numerical}, by exact line search \cite{nocedal2006numerical}, or by the Barzilai-Borwein rule \cite{raydan1993barzilai}. Each choice of $\eta_k$ and $t_k$ gives rise to a different optimization algorithm.

This Euclidean update scheme extends naturally to Riemannian manifolds for solving
\begin{equation}\label{eq-max-f}
    \max _{U \in \mathrm{U} (N)} f (U),
\end{equation}
where straight-line updates are replaced by movement along tangent directions, followed by a retraction that maps the iterate back onto the manifold.
Given an initial $U_0\in \mathrm{U} (N)$, the Riemannian update rule constructs a sequence $\{U_k\}\subseteq \mathrm{U} (N)$ via
\begin{equation}\label{eq-update-basic}
    U_{k+1} = \mathrm{R}_{U_k} (t_k \eta_k),
\end{equation}
where $\eta_k\in T_{U_k}$ is the chosen search direction (e.g., Riemannian gradient $\grad f(U_k)$) and $\mathrm{R}_{U_k}: T_{U_k}\to\mathrm{U} (N)$ is a retraction. By choosing appropriate combinations of $\eta_k$ and $t_k$, one recovers the Riemannian analogues of classical algorithms, each inheriting the same convergence guarantees as their Euclidean counterparts.

\section{Grover-compatible Riemannian algorithm}\label{sec-grover-alg}

In this section, we first reformulate the unstructured search problem as a maximization problem over the unitary manifold (as in \cref{eq-max-f}) and establish its global optimality conditions. While one could apply standard Riemannian gradient ascent directly, ensuring that each iteration is physically realizable on a quantum device introduces additional challenges. To address this, we introduce the notion of a Grover-compatible retraction, which enables each update to be both physically implementable and fully consistent with the Riemannian optimization framework. Finally, we find that the resulting iteration process is classically simulable, allowing all parameters required to construct the quantum circuit to be precomputed on a classical computer.

\subsection{Optimization problem on unitary manifold}\label{sec-be-opt}

We first summarize some properties that can be easily observed in \cref{sec-problem}. Applying $H_g$ to the target state yields
$H_g\left|\psi^{\star}\right\rangle=\left|\psi^{\star}\right\rangle$
indicating that $\left|\psi^{\star}\right\rangle$ is an eigenstate of $H_g$ associated with eigenvalue 1. Indeed, the spectrum of $H_g$ consists of eigenvalue 1 on the marked subspace $\mathcal{T}$ with multiplicity $M$, and eigenvalue 0 on its orthogonal complement $\mathcal{T}^{\perp}$ with multiplicity $N-M$. Let
\begin{equation*}
    q_0: =\frac{M}{N} \in (0, 1)
\end{equation*}
denote the probability of selecting a marked item uniformly at random. We find that applying $H_g$ to the initial state gives $H_g\left|\psi_0\right\rangle=\sqrt{q_0}\left|\psi^{\star}\right\rangle$; thus, the initial expectation value satisfies
$\left\langle\psi_0\right| H_g\left|\psi_0\right\rangle=\| H_g\left|\psi_0\right\rangle \|^2=\| \sqrt{q_0}\left|\psi^{\star}\right\rangle \|^2=q_0.$
Indeed, for any state $|\psi\rangle$, the expectation value
$\langle\psi| H_g|\psi\rangle = \sum_{x \in S}|\langle x | \psi\rangle|^2 \in[0, 1]$
represents the probability of observing a marked item when measuring $|\psi\rangle$ in the computational basis. Notice that $\langle\psi^{\star}| H_g|\psi^{\star}\rangle =1$.

From a high-level view, Grover's algorithm (see \cref{sec-problem}) can be interpreted as the process of searching for a quantum state whose expectation value with respect to $H_g$ (thus, the probability of observing a marked item) equals one. To construct such a state, we can adopt the philosophy of quantum circuit design. Starting from an easily prepared state $\left|\psi_0\right\rangle$ as the initial state (e.g., \cref{eq-defn-psi0}), we aim to design a quantum circuit $U$ whose gates are implementable on quantum hardware (e.g., \cref{eq-defn-Ug,eq-defn-D}), such that the resulting state $U\left|\psi_0\right\rangle$ becomes an eigenstate of $H_g$ corresponding to its largest eigenvalue 1.
Consequently, this leads to the following optimization problem.

\begin{problem}[Riemannian optimization formulation]\label{pro-1}
Let $H = H^\dagger = H^2$ be an orthogonal projector on a Hilbert space $\mathcal{H}$, and $M:= \operatorname{rank}(H)$. Fix a unit vector $|\psi_0\rangle$, and define the rank-one projector $\psi_0: = |\psi_0\rangle\langle\psi_0|$.\footnote{In this paper, the kets (e.g., $|\psi\rangle$) denote quantum states, while the corresponding symbols without bra-ket notation (e.g., $\psi$) represent their associated density operators.} Assume that $\left\langle\psi_0\right| H\left|\psi_0\right\rangle \notin\{0, 1\}$.
Consider the optimization problem
\begin{equation}\label{eq-cost}
    \max_{U\in \mathrm{U} (N)} f (U), \quad\quad f(U) = \operatorname{Tr}\left(H U \psi_0 U^\dagger\right)= \langle \psi_0 | U^\dagger H U| \psi_0 \rangle,
\end{equation}
where the feasible region is the compact unitary manifold $\mathrm{U} (N)$ defined in \cref{eq-UN}.
\end{problem}

We now discuss the connection between the above optimization problem and Grover's algorithm. In \cref{pro-1}, the Hermitian observable $H$ is required to satisfy the idempotent condition $H=H^2$, implying that its expectation value $\langle\psi| H|\psi\rangle$ can take values only within the interval $[0, 1]$. The operator $H_g$ introduced in \cref{eq-defn-Hf} is a specific instance of such an observable. The initial state $\left|\psi_0\right\rangle$ in \cref{pro-1} may be any state satisfying $\left\langle\psi_0\right| H\left|\psi_0\right\rangle \notin\{0, 1\}$; equivalently, $H\left|\psi_0\right\rangle \neq 0$ and $H\left|\psi_0\right\rangle \neq\left|\psi_0\right\rangle$.\footnote{Since $H=H^{\dagger}=H^2$, we have $\left\langle\psi_0\right| H\left|\psi_0\right\rangle=\| H\left|\psi_0\right\rangle \|^2 . $
Decompose $\left|\psi_0\right\rangle=H\left|\psi_0\right\rangle+ (I-H)\left|\psi_0\right\rangle$ into orthogonal components. Then $1 = \|H|\psi_0\rangle\|^2 + \| (I - H)|\psi_0\rangle\|^2$.
Hence, $\left\langle\psi_0\right| H\left|\psi_0\right\rangle=1$ if and only if $\| (I-H)\left|\psi_0\right\rangle \|=0$, i.e., $H\left|\psi_0\right\rangle=\left|\psi_0\right\rangle$. }
The uniform state $\left|\psi_0\right\rangle$ defined in \cref{eq-defn-psi0} also constitutes a particular example, as we have shown that $\left\langle\psi_0\right| H_g\left|\psi_0\right\rangle=q_0 \in (0, 1)$.
Nevertheless, the direct optimization formulation in \cref{pro-1} differs from Grover's algorithm in the following two aspects.

\textit{Challenge 1.} Grover's algorithm constrains the circuit $U$ to have the specific structure $U=\prod_{k=1}^T G(\alpha_k,\beta_k)$, reflecting implementability considerations. The oracle operator $U_g(\beta)$ can be realized from a standard Boolean oracle $O_g:|x\rangle|b\rangle\mapsto |x\rangle|b\oplus g(x)\rangle$, where $\oplus$ denotes addition modulo two, by computing $g(x)$ into an ancilla, applying the phase gate $\left(\begin{smallmatrix}1&0\\0&e^{i\beta}\end{smallmatrix}\right)$, and uncomputing $g(x)$. The diffusion operator $D(\alpha)$ is also structured: for the uniform state $|\psi_0\rangle=\frac{1}{\sqrt{N}}\sum_{x=0}^{N-1}|x\rangle=\mathrm{H}^{\otimes n}|0^n\rangle$, where $|0^n\rangle$ is the all-zero computational basis and $\mathrm{H}=\frac{1}{\sqrt{2}}\left(\begin{smallmatrix}1&1\\1&-1\end{smallmatrix}\right)$ is the single-qubit Hadamard gate, we have
\[
D(\alpha)=e^{i\alpha |\psi_0\rangle \langle\psi_0|}
=\mathrm{H}^{\otimes n} (e^{i\alpha |0^n\rangle\langle 0^n|}) \mathrm{H}^{\otimes n}.
\]
Here, $e^{i\alpha|0^n\rangle\langle0^n|} =I+(e^{i\alpha}-1)|0^n\rangle\langle0^n|$ is a multi-controlled phase operation, which has a standard circuit implementation.
In contrast, \cref{pro-1} imposes no structural constraint on $U$. Thus, a feasible point in $\mathrm{U}(N)$ need not be efficiently implementable. For $N=2^n$, a generic unitary is an $N\times N$ matrix with real dimension $N^2=4^n$. Since each one or two-qubit elementary gate contributes only $O(1)$ continuous degrees of freedom, a simple parameter counting argument already indicates an $\Omega(4^n)$ gate lower bound for representing generic $n$-qubit unitaries.
This intuition is made precise in standard results: when arbitrary one-qubit gates and CNOT gates are used as elementary gates, generic $n$-qubit unitaries require at least $\lceil(4^n-3n-1)/4\rceil$ CNOT gates~\cite{shende2005synthesis}. Moreover, some $n$-qubit unitaries require $\Omega(2^n\log(1/\epsilon)/\log n)$ elementary operations to approximate within precision $\epsilon$~\cite[Sec.~4.5.4]{nielsen2010quantum}. Therefore, generic unitaries in $\mathrm{U}(N)$ are generally hard to implement, whereas structured unitaries, such as Grover's oracle and diffusion operators discussed above, may admit circuit implementations of polynomial size.

\textit{Challenge 2.} Grover's algorithm requires the final output state $U\left|\psi_0\right\rangle$ to be the uniform superposition over all marked items, namely the state $\left|\psi^{\star}\right\rangle$ defined in \cref{eq-defn-psistar}. In \cref{pro-1}, by comparison, any unit vector within the 1-eigenspace of $H$ constitutes a valid theoretical solution.

In what follows, we will design an optimization algorithm to solve \cref{pro-1} while preserving the two characteristic features of Grover's algorithm discussed above.
In this way, we successfully combine Grover's algorithm with optimization techniques, thereby enriching it with new perspectives and strengths drawn from the extensive toolbox and theoretical framework of modern optimization.

\subsubsection{Global optimality}

For our cost $f (U)=\Tr(HU\psi_0U^\dagger)$ in \cref{pro-1}, its Euclidean gradient is $\nabla f (U)=2HU\psi_0$. Using identity $\Skew (2 A B)=[A, B]$ for (skew-)Hermitian matrices $A$ and $B$, by \cref{eq-gradf-formula}, we obtain Riemannian gradient:
\begin{equation}\label{eq-grad-defn}
    \operatorname{grad} f (U)=\operatorname{Skew}\left (\nabla f (U) U^{\dagger}\right) U=\left[H, U \psi_0 U^{\dagger}\right] U \in T_U.
\end{equation}
Throughout the paper, we use the tilde symbol to denote the skew-Hermitian component associated with an element of $T_U=\mathfrak{u}(N) U$, e.g., $\widetilde{\operatorname{grad}} f (U)=\left[H, U \psi_0 U^{\dagger}\right]$.
In practice, the rightmost factor $U$ often cancels out, leaving only the skew-Hermitian part as the essential term. Note that $\operatorname{\widetilde{grad}} f (U)=0$ if and only if $\operatorname{grad} f (U)=0$.

Let us now examine the optimality conditions of \cref{pro-1} from the perspective of optimization.
When the manifold constraint for $U$ is ignored, the cost function $f$ reduces to a standard quadratic function in the Euclidean space $\mathbb{C}^{N \times N} \cong \mathbb{C}^{N^2}$. Indeed, it can be written as
\[
f (U)=\operatorname{vec} (U)^{\dagger}\left (\psi_0 \otimes H\right) \operatorname{vec} (U),
\]
where \(\operatorname{vec}(U)\in\mathbb{C}^{N^2}\) denotes the column-stacked vectorization of matrix \(U\), and \(\otimes\) denotes the tensor product.

It is straightforward to verify that $\psi_0 \otimes H$ is positive semidefinite, so $f(U)$ is convex as a quadratic function on the ambient Euclidean space. However, the feasible set $\mathrm{U}(N)$ is nonconvex in $\mathbb{C}^{N\times N}$. Therefore, restricting $f$ to $\mathrm{U}(N)$ yields a nonconvex constrained optimization problem.

Nevertheless, for \cref{pro-1} we can still derive a global optimality condition analogous to the convex case, as stated in \cref{thm-optimality}. In particular, a vanishing Riemannian gradient characterizes global optimality.

\begin{theorem}[Global optimality]\label{thm-optimality}
The skew-Hermitian part of the Riemannian gradient of $f (U)$ in \cref{eq-cost} on $\mathrm{U} (N)$ is given by
\begin{equation*}
    \operatorname{\widetilde{grad}} f (U)=\left[H, \psi_U\right], \quad \psi_U: =U \psi_0 U^{\dagger},
\end{equation*}
where $\psi_U$ denotes the density operator corresponding to the output state of the circuit $U$.
Then, $\operatorname{\widetilde{grad}} f (U^\star)=0$ if and only if $U^\star \in \mathrm{U} (N)$ globally minimizes $f (U)$ over $\mathrm{U} (N)$ with the minimum value 0, or globally maximizes it with the maximum value 1.
\end{theorem}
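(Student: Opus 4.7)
The identity $\operatorname{\widetilde{grad}} f(U) = [H, \psi_U]$ is a direct restatement of \cref{eq-grad-defn}, once we introduce the notation $\psi_U = U\psi_0 U^\dagger$; no new computation is required. The real content of the theorem is the characterization of critical points, which I plan to establish in two steps: (i) identify the range of $f$ on $\mathrm{U}(N)$ and confirm that both endpoints are attained; (ii) show that the commutator $[H,\psi_U]$ vanishes iff $f(U)$ takes one of these two extreme values.

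For step (i), the plan is to use that $H$ is an orthogonal projector and write $f(U) = \langle v | H | v\rangle = \bigl\| H |v\rangle\bigr\|^2$ with $|v\rangle := U|\psi_0\rangle$. Combining with $\bigl\|H|v\rangle\bigr\|^2 + \bigl\|(I-H)|v\rangle\bigr\|^2 = 1$ yields $f(U) \in [0,1]$, with $f(U) = 1$ iff $|v\rangle \in \mathrm{range}(H)$ and $f(U) = 0$ iff $|v\rangle \in \ker H$. The standing assumption $\langle\psi_0|H|\psi_0\rangle \notin \{0,1\}$ forces $1 \leq M \leq N-1$, so both $\mathrm{range}(H)$ and $\ker H$ are nontrivial subspaces; since any unit vector can be reached from $|\psi_0\rangle$ by a suitable unitary, both extremes are actually attained. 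This pins down $\min_{U} f = 0$ and $\max_{U} f = 1$.

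For step (ii), the key observation is that $\psi_U = |v\rangle\langle v|$ is a rank-one projector. The reverse direction is immediate: if $H|v\rangle = \lambda|v\rangle$, then $H\psi_U = \lambda\psi_U = \psi_U H$. For the forward direction, I apply the identity $H\psi_U = \psi_U H$ to the unit vector $|v\rangle$ itself, obtaining $H|v\rangle = \langle v|H|v\rangle\cdot |v\rangle$, which exhibits $|v\rangle$ as an eigenvector of $H$ with eigenvalue $f(U)$. Since $H$ is an orthogonal projector, its only eigenvalues are $0$ and $1$, hence $f(U) \in \{0,1\}$.

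Combining the two pieces gives the claim: $\operatorname{\widetilde{grad}} f(U^\star) = 0$ iff $f(U^\star) \in \{0,1\}$, and by step (i) this is exactly iff $U^\star$ is a global minimizer (value $0$) or global maximizer (value $1$) of $f$ over $\mathrm{U}(N)$. I do not foresee a serious obstacle: the argument reduces to a spectral analysis of a rank-one projector commuting with $H$, which is forced by $H^2 = H$. The only point requiring mild care is verifying attainability of the extremes $0$ and $1$, so that they may legitimately be labeled global minimum and maximum rather than merely critical values of $f$.
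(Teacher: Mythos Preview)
Your proposal is correct and follows essentially the same approach as the paper's proof: both reduce the vanishing of $[H,\psi_U]$ to the statement that the pure state $|v\rangle = U|\psi_0\rangle$ is an eigenvector of $H$, and then invoke that the spectrum of the projector $H$ is $\{0,1\}$. Your version is somewhat more detailed---you spell out the forward direction by applying $H\psi_U=\psi_U H$ to $|v\rangle$, and you explicitly verify that both extreme values $0$ and $1$ are attained---whereas the paper states the eigenvector equivalence without justification and leaves attainability implicit from the earlier discussion of the problem setup.
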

\begin{proof}
For any Hermitian operator $H$ and any pure state $\psi=|\psi\rangle\langle\psi|$, it holds that $[H, \psi]=0$ if and only if $H|\psi\rangle=\lambda|\psi\rangle$ for some real $\lambda$. Since the eigenvalues of $H$ are $\{0, 1\}$, this condition is equivalent to $H|\psi\rangle=0$ or $H|\psi\rangle=|\psi\rangle$, i.e., $\langle\psi| H|\psi\rangle \in\{0, 1\}$.
\end{proof}

\subsubsection{Naive Riemannian implementation}

Now, if we directly apply update \cref{eq-update-basic} to \cref{pro-1}, we observe that each iteration simply appends new gates to the quantum circuit.
We begin by considering the Riemannian gradient ascent (RGA) method with Exponential map $\mathrm{R}= \Exp$ in \cref{eq-Exp}, that takes the update form
\begin{equation}\label{eq-Exp-updat}
U_{k+1}
=\Exp_{U_k}\left (t_k\grad f (U_k)\right)
= e^{t_k \, \widetilde{\grad}f (U_k)} U_k
= e^{t_k [H, \psi_k]} U_k, \quad k = 0, 1, \dots,
\end{equation}
where $\psi_k: = U_k \psi_0 U_k^\dagger$ denotes the intermediate quantum state after $k$-th iterations.
Setting the initial gate to $U_0 = I$ and the initial uniform state, the resulting state $\lvert \psi_T \rangle = U_T \lvert \psi_0 \rangle$ after $T$ iterations is given by \cref{fig:exp_update}.

\begin{figure}
    \centering
    \includegraphics[width=0.85\linewidth]{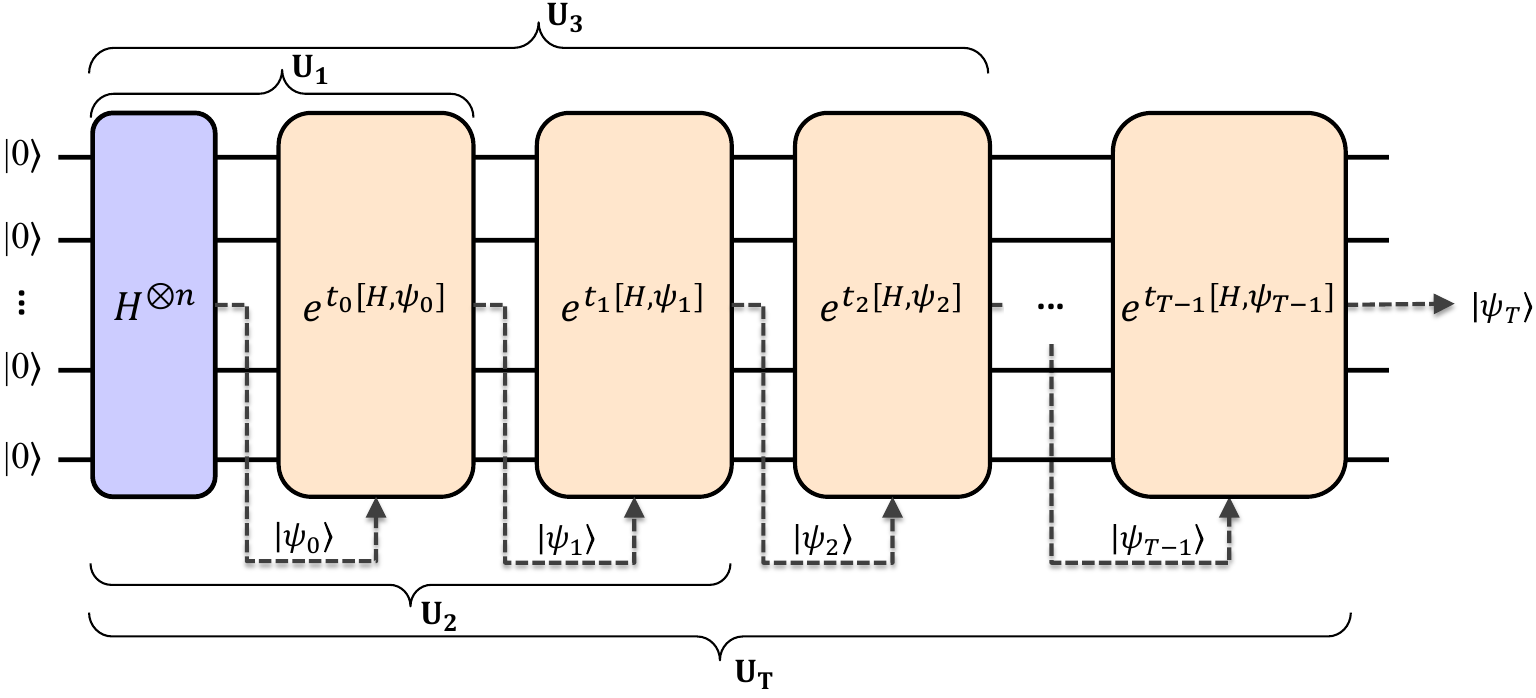}
    \caption{Quantum circuit generated by Riemannian gradient ascent method with the exponential map update \cref{eq-Exp-updat}, starting from $U_0=I$ and the uniform state. At each iteration, a new gate $e^{t_k\left[H, \psi_k\right]}$ is appended, where $\left[H, \psi_k\right]$ is the current Riemannian gradient and $t_k$ is the step size.
    }
    \label{fig:exp_update}
\end{figure}

The above procedure was also proposed in \cite{wiersema2023optimizing} for general quantum circuit design problems.
The overall approach can be summarized as follows: based on the information extracted from the quantum state produced by the current circuit, new gates are added so that the updated circuit output becomes closer to the ground state. This process is then repeated iteratively. Unlike traditional parameterized quantum circuits (PQCs), the ansatz is not fixed but grows dynamically.

Unfortunately, although the above optimization update in \cref{eq-Exp-updat} is theoretically valid, we do not know how to efficiently implement the newly added gate $e^{t_k\left[H, \psi_k\right]}$ in practice. If we repeatedly pause and perform certain measurements or operations on the current quantum state $\left|\psi_k\right\rangle$ to obtain information for approximating $e^{t_k\left[H, \psi_k\right]}$, the procedure becomes inefficient as the complexity would grow exponentially in terms of the number of iterations. Therefore, this naive Riemannian implementation is not working. In the following, we propose a more refined approach.

\subsection{Grover-compatible retractions}

We will introduce the key concept, i.e., Grover-compatible retractions, a class of retractions that not only satisfy the standard defining properties in \cref{eq-defn-retraction} (albeit restricted to a subspace of the tangent space), thereby ensuring convergence from an optimization standpoint, but also admit a physically implementable structure built from the gates $D(\alpha)$ and $U_g(\beta)$.

\subsubsection{Invariant 2D subspace of Riemannian gradients}

Throughout the remainder of the paper, let \(H=H^\dagger=H^2\) be an orthogonal projector on the Hilbert space \(\mathcal H\), let \(|\psi_0\rangle\in\mathcal H\) be a fixed unit vector such that \(0<q_0:=\langle\psi_0|H|\psi_0\rangle<1\), and set \(\psi_0=|\psi_0\rangle\langle\psi_0|\).

To facilitate the analysis of the trajectory of the Riemannian gradient, we first establish two auxiliary lemmas. The first characterizes the norms and orthogonality of the commutators that naturally arise in the computation of the Riemannian gradient.

\begin{lemma}\label{lem-X0YO}
For any state $|\psi\rangle \in \mathcal{H}$, define the rank-one projector $\psi := |\psi\rangle\langle\psi|$ and the scalar $q := \langle\psi|H|\psi\rangle$. Consider the skew-Hermitian operators
\begin{equation*}
    X: =[H, \psi], \qquad Y: = i[H, X] = i[H, [H, \psi]].
\end{equation*}
Then, it holds that $\|X\|_F = \|Y\|_F=\sqrt{2 q (1-q)}$ and $\langle X, Y \rangle = 0$.
\end{lemma}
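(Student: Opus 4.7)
The plan is to exploit that $|\psi\rangle$ occupies at most a $2$-dimensional subspace relative to the projector $H$, and that both $X$ and $Y$ act nontrivially only within that subspace; the lemma then reduces to a Pauli-like $2\times 2$ computation in an orthonormal basis of this invariant plane.

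Assuming first that $q\in(0,1)$, I would decompose
\[
|\psi\rangle = \sqrt{q}\,|a\rangle + \sqrt{1-q}\,|b\rangle, \qquad |a\rangle := \tfrac{1}{\sqrt{q}}H|\psi\rangle, \quad |b\rangle := \tfrac{1}{\sqrt{1-q}}(I-H)|\psi\rangle,
\]
so that $H|a\rangle=|a\rangle$, $H|b\rangle=0$, $\langle a|b\rangle=0$, and $\||a\rangle\|=\||b\rangle\|=1$. Substituting this into $\psi=|\psi\rangle\langle\psi|$, expanding $H\psi$ and $\psi H$ using the eigen-relations, and then repeating the same manipulation to compute $i[H,X]$, I would arrive at the explicit forms
\[
X = \sqrt{q(1-q)}\bigl(|a\rangle\langle b| - |b\rangle\langle a|\bigr), \qquad Y = i\sqrt{q(1-q)}\bigl(|a\rangle\langle b| + |b\rangle\langle a|\bigr).
\]
Up to the scalar $\sqrt{q(1-q)}$, these are exactly the matrices $-i\sigma_y$ and $\sigma_x$ written in the orthonormal basis $\{|a\rangle,|b\rangle\}$, from which the remaining claims are transparent.

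From here, orthonormality of $\{|a\rangle,|b\rangle\}$ reduces each of $X^\dagger X$ and $Y^\dagger Y$ to $q(1-q)\bigl(|a\rangle\langle a|+|b\rangle\langle b|\bigr)$, so taking traces yields $\|X\|_F^2=\|Y\|_F^2 = 2q(1-q)$. For the orthogonality claim I would expand the product and find $X^\dagger Y = -i\,q(1-q)\bigl(|a\rangle\langle a|-|b\rangle\langle b|\bigr)$, whose trace vanishes, giving $\langle X,Y\rangle = \tr(X^\dagger Y) = 0$. The degenerate cases $q\in\{0,1\}$ need a separate (but trivial) treatment: if $q=0$ then $H|\psi\rangle=0$, so $H\psi = \psi H = 0$ and thus $X=Y=0$; if $q=1$ then $H|\psi\rangle=|\psi\rangle$, so $H\psi=\psi=\psi H$ and again $X=Y=0$. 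In both boundary cases $\sqrt{2q(1-q)}=0$, so the formulas still hold.

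There is no genuine obstacle in this proof; the only point requiring care is recognizing early that everything happens in the two-dimensional invariant subspace $\operatorname{span}\{|a\rangle,|b\rangle\}$, since otherwise one is tempted to carry out $N$-dimensional trace computations. Once this reduction is made, the result follows from the elementary algebra of the Pauli matrices $\sigma_x$ and $\sigma_y$, namely that they have equal Frobenius norm and are trace-orthogonal.
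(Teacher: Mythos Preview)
Your proof is correct. The explicit expressions for $X$ and $Y$ in the orthonormal basis $\{|a\rangle,|b\rangle\}$ are right, and the subsequent trace computations go through as you describe. (A cosmetic point: in that basis one has $|a\rangle\langle b|-|b\rangle\langle a|=i\sigma_y$ and $i(|a\rangle\langle b|+|b\rangle\langle a|)=i\sigma_x$, so the Pauli identification is off by harmless phases; this does not touch any of the norm or orthogonality claims.)

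Your route differs from the paper's. The paper never introduces an explicit basis for this lemma: it proves $\|Y\|_F=\|X\|_F$ abstractly from the projector identity $[H,[H,[H,A]]]=[H,A]$ together with $\Tr(X^\dagger[H,[H,X]])=\Tr(X^\dagger X)$, obtains $\langle X,Y\rangle=0$ from $\Tr(X[H,X])=\Tr([X,X]H)=0$, and only then computes $\|X\|_F^2=2q(1-q)$ by expanding $X^\dagger X$ and using cyclicity. Your approach trades these trace manipulations for a concrete $2\times2$ reduction; it is arguably more transparent and immediately exhibits $X,Y$ as (scaled) Pauli generators, at the cost of needing the case split $q\in\{0,1\}$ that the paper's abstract argument handles uniformly. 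The paper's identity $\operatorname{ad}_H^3=\operatorname{ad}_H$ is reused later in the construction of the Grover-compatible retractions, which is one reason it appears here; your method is self-contained for this lemma but does not surface that identity.
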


\begin{proof}
We begin with a useful identity. Since $H^2=H$, for any matrix $A$,
\begin{equation}\label{eq-411}
    [H,[H,[H, A]]]=[H, A] .
\end{equation}
Applying \cref{eq-411} with $A=\psi$ gives $[H, [H, X]]=X$. Then,
\begin{align*}
    \|Y\|_F^2
    =\operatorname{Tr}\big ([H, X]^\dagger[H, X]\big)
   =\operatorname{Tr}\big (X^\dagger[H, [H, X]]\big)
    =\operatorname{Tr} (X^\dagger X)
    =\|X\|_F^2.
\end{align*}
For the inner product, note that
\begin{equation*}
    \left\langle X, Y\right\rangle
    =-i\operatorname{Tr}\left (X[H, X]\right)
    =i\operatorname{Tr}\left (X[X, H]\right)
    =i\operatorname{Tr}\left ([X, X] H\right)=0.
\end{equation*}
To evaluate $\|X\|_F^2$, we expand $X^\dagger X=\psi H\psi - \psi H \psi H - H\psi H\psi + H\psi H.$ Taking the trace and using cyclicity gives $\|X\|_F^2 =\operatorname{Tr} (H\psi)+\operatorname{Tr} (H\psi) -2 \operatorname{Tr} (|\psi\rangle\langle\psi|H|\psi\rangle\langle\psi|H) = 2 q (1-q).$ This completes the proof.
\end{proof}

A key geometric feature of Grover-type algorithms \cite{grover1996fast,grover2005fixed,yoder2014fixed} is that all generated states $|\psi_k\rangle$ remain confined to a fixed two-dimensional subspace $\mathcal{S} \subseteq \mathcal{H}$, commonly referred to as the \textit{Grover plane} \cite{miyake2001geometric}. We now define this invariant subspace $\mathcal{S}$ and summarize the action of the relevant operators on $\mathcal{S}$ and its orthogonal complement.

\begin{lemma}[Grover plane]\label{lem-subspace-inv}
Define the Grover plane as the complex subspace
\begin{equation}\label{eq-grover-plane}
\mathcal{S} := \operatorname{span}_{\mathbb{C}} \{ |\psi_0\rangle, H|\psi_0\rangle \} \subseteq \mathcal{H},
\end{equation}
and let $\mathcal{S}^\perp$ be its orthogonal complement.
For any state $|\psi\rangle \in \mathcal{S}$, define the operators $\psi := |\psi\rangle\langle\psi|$, $X := [H,\psi]$ and $Y := i[H,X]$. Then the following properties hold:
\begin{enumerate}[label=(\alph*)]
    \item
    The actions of $H$, $I-H$, and operators derived from $\psi$ on the subspaces $\mathcal{S}$ and $\mathcal{S}^\perp$ are summarized as follows:

    \vspace{1mm}
    \begin{center}
    \renewcommand{\arraystretch}{1.25}
    \begin{tabular}{lll}
    \hline
    {Operator} ($T: \mathcal{H} \to \mathcal{H}$)& {Action on $\mathcal{S}$} & {Action on $\mathcal{S}^\perp$} \\
    \hline
    $H, \ I-H$ & Preserves ($T\mathcal{S} \subseteq \mathcal{S}$) & Preserves ($T\mathcal{S}^\perp \subseteq \mathcal{S}^\perp$) \\

    $\psi, \ X, \ Y$ & Preserves ($T\mathcal{S} \subseteq \mathcal{S}$) & Vanishes ($T\mathcal{S}^\perp = \{0\}$) \\

    $e^{i\theta H}$
    & Preserves ($T\mathcal{S} \subseteq \mathcal{S}$)
    & Preserves ($T\mathcal{S}^\perp \subseteq \mathcal{S}^\perp$) \\

    $e^{i\theta \psi}, \ e^{\theta X}$
    & Preserves ($T\mathcal{S} \subseteq \mathcal{S}$)
    & Identity ($T|_{\mathcal{S}^\perp}=I_{\mathcal{S}^\perp}$) \\
    \hline
    \end{tabular}
    \end{center}
    \vspace{1mm}

    \item For any two linear operators $T_1, T_2: \mathcal{H} \to \mathcal{H}$, if they both vanish on the orthogonal complement $\mathcal{S}^\perp$ (i.e., $T_1|_{\mathcal{S}^\perp} = T_2|_{\mathcal{S}^\perp} = 0$), then $T_1 = T_2$ on the entire space $\mathcal{H}$ if and only if their restrictions to $\mathcal{S}$ are equal ($T_1|_{\mathcal{S}} = T_2|_{\mathcal{S}}$).

\end{enumerate}
\end{lemma}

\begin{proof}
For property (a), we first verify the actions on $\mathcal{S}$. By definition, $H|\psi_0\rangle \in \mathcal{S}$, and since $H^2 = H$, $H(H|\psi_0\rangle) \in \mathcal{S}$. Therefore, $H\mathcal{S}\subseteq \mathcal{S}$. Since the identity operator $I$ trivially preserves $\mathcal{S}$, it follows that $(I-H)\mathcal{S} \subseteq \mathcal{S}$. For any $|\psi\rangle \in \mathcal{S}$, the rank-one projector $\psi$ maps any vector to $\operatorname{span}_{\mathbb{C}}\{|\psi\rangle\} \subseteq \mathcal{S}$, so $\psi \mathcal{S} \subseteq \mathcal{S}$. It follows that the commutators $X$ and $Y$ also preserve $\mathcal{S}$. Operator exponentials $e^{i\theta H}$, $e^{i\theta \psi}$ and $e^{\theta X}$ preserve $\mathcal{S}$ because every term in their Taylor expansions preserves $\mathcal{S}$.
Regarding the actions on $\mathcal{S}^\perp$ in property (a), since $H$ is Hermitian and preserves $\mathcal{S}$, it must preserve its orthogonal complement $\mathcal{S}^\perp$, and so does $I-H$. Consequently, $e^{i\theta H}$ also preserves $\mathcal{S}^\perp$ by its Taylor expansion. For any $|w\rangle \in \mathcal{S}^\perp$, since $|\psi\rangle \in \mathcal{S}$, we have $\langle \psi | w \rangle = 0$, which yields $\psi |w\rangle = 0$. Consequently, $X |w\rangle = (H\psi - \psi H)|w\rangle = - \psi H|w\rangle = 0,$ where we used $H|w\rangle \in \mathcal{S}^\perp$. Similarly, $Y |w\rangle = 0$. Thus $\psi$, $X$ and $Y$ vanish on $\mathcal{S}^\perp$. Moreover, since $\psi|_{\mathcal{S}^\perp}=0$ and $X|_{\mathcal{S}^\perp}=0$, their exponential maps satisfy $e^{i\theta\psi}|_{\mathcal{S}^\perp}=I_{\mathcal{S}^\perp},$ $e^{\theta X}|_{\mathcal{S}^\perp}=I_{\mathcal{S}^\perp}.$ This proves property (a).
Finally, property (b) is a direct consequence of the orthogonal decomposition $\mathcal{H} = \mathcal{S} \oplus \mathcal{S}^\perp$.
\end{proof}

The following theorem establishes an important equivalence: for any non-optimal current state $|\psi_k\rangle=U_k|\psi_0\rangle$, belonging to the Grover plane is equivalent to having its skew-Hermitian gradient component $[H,\psi_k]$ lie in a fixed two-dimensional real subspace $\mathcal{W}\subseteq\mathfrak u(N)$.

\begin{theorem}[2D dynamics of states and gradients]\label{thm-2d-dynamics}
Let $\mathcal{S}$ be the Grover plane defined in \cref{eq-grover-plane}, and assume that $0<q_0:=\langle\psi_0|H|\psi_0\rangle<1.$ Define
\begin{equation}\label{eq-W}
    X_0 := [H,\psi_0],
    \qquad
    Y_0 := i[H,X_0],
    \qquad
    \mathcal{W}:=\operatorname{span}_{\mathbb{R}}\{X_0,Y_0\}\subseteq \mathfrak{u}(N).
\end{equation}
Then $\mathcal W$ is a fixed two-dimensional real subspace of $\mathfrak u(N)$.
Moreover, for any pure state $\psi=|\psi\rangle\langle\psi|$, the following claims hold:
\begin{enumerate}[label=(\alph*)]
    \item If $|\psi\rangle\in\mathcal{S}$, then $[H,\psi]\in\mathcal{W}$.
    \item If $0<q:=\langle\psi|H|\psi\rangle<1$ and $[H,\psi]\in\mathcal{W}$, then $|\psi\rangle\in\mathcal{S}$.
\end{enumerate}
\end{theorem}

\begin{proof}
Since \(q_0\in(0,1)\), \cref{lem-X0YO} gives $\|X_0\|_F=\|Y_0\|_F=\sqrt{2q_0(1-q_0)}\neq0,$ and $\langle X_0,Y_0\rangle=0.$ Hence \(\mathcal W\) is a well-defined two-dimensional real subspace.
To prove the claims (a) and (b), let us first introduce an orthogonal, but not normalized, basis $\mathcal{B} := \{u, v\}$ for Grover plane $\mathcal{S}$, where
\begin{equation}\label{eq-1125}
u := H|\psi_0\rangle, \quad v := (I - H)|\psi_0\rangle, \quad \text{such that} \quad |\psi_0\rangle = u + v.
\end{equation}
We have $\|u\|^2 = \langle \psi_0 | H | \psi_0 \rangle = q_0$ and $\|v\|^2 = 1 - q_0$. Moreover,
\[
    \mathcal S
    =
    \operatorname{span}_{\mathbb C}\{|\psi_0\rangle,H|\psi_0\rangle\}
    =
    \operatorname{span}_{\mathbb C}\{u,v\}.
\]
When representing an operator restricted to $\mathcal{S}$ in the basis $\mathcal{B}$, we observe that $Hu=u$ and $Hv=0$, yielding $[H]_{\mathcal{B}}=\left(\begin{smallmatrix}1&0\\0&0\end{smallmatrix}\right)$. The rank-one projector $\psi_0$ acts as $\psi_0 u=q_0(u+v)$ and $\psi_0 v=(1-q_0)(u+v)$, hence $[\psi_0]_{\mathcal{B}}=\left(\begin{smallmatrix}q_0&1-q_0\\q_0&1-q_0\end{smallmatrix}\right)$.
It then follows that
\begin{gather*}
[X_0]_{\mathcal{B}} = [H, \psi_0]_{\mathcal{B}} = [H]_{\mathcal{B}}[\psi_0]_{\mathcal{B}} - [\psi_0]_{\mathcal{B}}[H]_{\mathcal{B}} =
\begin{pmatrix}
0 & 1-q_0 \\
-q_0 & 0
\end{pmatrix},\\
[Y_0]_{\mathcal{B}} = i[H, X_0]_{\mathcal{B}} = i[H]_{\mathcal{B}}[X_0]_{\mathcal{B}} - i[X_0]_{\mathcal{B}}[H]_{\mathcal{B}} =
i\begin{pmatrix}
0 & 1-q_0 \\
q_0 & 0
\end{pmatrix}.
\end{gather*}

We now prove claim (a). Suppose $|\psi\rangle\in\mathcal S$. Then there exist $\alpha,\beta\in\mathbb C$ such that $|\psi\rangle=\alpha u+\beta v.$
The normalization implies $|\alpha|^2 q_0 + |\beta|^2(1-q_0) = 1$. Therefore,
\begin{equation}\label{eq-1117}
q := \langle\psi| H |\psi\rangle = (\bar{\alpha} u^\dagger + \bar{\beta} v^\dagger) H (\alpha u + \beta v) = |\alpha|^2 q_0,
\end{equation}
which immediately yields $1 - q = |\beta|^2(1-q_0)$.
To find the matrix representation of operator $\psi = |\psi\rangle\langle\psi|$ in $\mathcal{B}$, we consider its action on the basis vectors. Using $ u^\dagger u = q_0$ and $v^\dagger v = 1-q_0$, we obtain $\psi u = \bar{\alpha} q_0 |\psi\rangle$ and $\psi v = \bar{\beta} (1-q_0) |\psi\rangle$. Substituting $|\psi\rangle = \alpha u + \beta v$ into these expressions directly yields
\begin{equation*}
[\psi]_{\mathcal{B}} =
\begin{pmatrix}
|\alpha|^2 q_0 & \alpha \bar{\beta} (1-q_0) \\
\bar{\alpha} \beta q_0 & |\beta|^2(1-q_0)
\end{pmatrix}
=
\begin{pmatrix}
q & (1-q_0) z \\
q_0 \bar{z} & 1-q
\end{pmatrix},
\end{equation*}
where we define $z := \alpha \bar{\beta}$.
With $[\psi]_{\mathcal{B}}$ and $[H]_{\mathcal{B}}$ established, we can concisely compute
\begin{align}
[H, \psi]_{\mathcal{B}}
&= [H]_{\mathcal{B}}[\psi]_{\mathcal{B}} - [\psi]_{\mathcal{B}}[H]_{\mathcal{B}} \notag \\
&=
\begin{pmatrix}
0 & (1-q_0) z \\
-q_0 \bar{z} & 0
\end{pmatrix} \notag  \\
&= \Re(z)
\begin{pmatrix}
0 & 1-q_0 \\
-q_0 & 0
\end{pmatrix}
+ \Im(z)
\begin{pmatrix}
0 & i(1-q_0) \\
i q_0 & 0
\end{pmatrix} \notag  \\
&= \Re (z) [X_0]_{\mathcal{B}} + \Im (z) [Y_0]_{\mathcal{B}}. \label{eq-1123}
\end{align}
Thus, we have established the operator identity $[H, \psi]|_{\mathcal{S}} = (\Re(z) X_0 + \Im(z) Y_0)|_{\mathcal{S}}$.

To extend this equivalence to the entire Hilbert space $\mathcal{H}$, we examine the action of these operators on $\mathcal{S}^\perp$. Since $|\psi_0\rangle, |\psi\rangle \in \mathcal{S}$, applying \cref{lem-subspace-inv} (a) ensures that the operators $[H, \psi]$, $X_0 $, and $Y_0$ all vanish on $\mathcal{S}^\perp$. Therefore, according to \cref{lem-subspace-inv} (b), the operators are globally identical:
\begin{equation*}
[H, \psi] = \Re (z) X_0 + \Im (z) Y_0 \in \mathcal{W}.
\end{equation*}
This completes the proof that $[H, \psi] \in \mathcal{W}$.

We next prove claim (b). Let \(a:=H|\psi\rangle\) and \(b:=(I-H)|\psi\rangle\), so that \(|\psi\rangle=a+b\). The assumption \(0<q=\langle\psi|H|\psi\rangle<1\) implies that \(a\) and \(b\) are nonzero. Moreover, $[H,\psi] = |a\rangle\langle b| - |b\rangle\langle a|.$
On the other hand, from \eqref{eq-1125} we have
\[
X_0 = |u\rangle\langle v| - |v\rangle\langle u|,
\qquad
 Y_0 = i|u\rangle\langle v| + i|v\rangle\langle u|.
\]
Hence, we can rewrite $\mathcal W = \{xX_0 + y Y_0:x,y\in\mathbb R\}$ as $\mathcal W = \left\{ z|u\rangle\langle v| - \bar z |v\rangle\langle u| : z\in\mathbb C \right\}.$ Since $[H,\psi]\in\mathcal W$, there exists $z\in\mathbb C$ such that
\[
    |a\rangle\langle b|
    -
    |b\rangle\langle a|
    =
    z|u\rangle\langle v|
    -
    \bar z |v\rangle\langle u|.
\]
Taking the $H(\cdot)(I-H)$ block of both sides gives $|a\rangle\langle b| = z|u\rangle\langle v|.$ Since \(a,b,u,v\) are all nonzero, the equality of these rank-one operators implies \(a\in\operatorname{span}_{\mathbb C}\{u\}\) and \(b\in\operatorname{span}_{\mathbb C}\{v\}\). Consequently, $|\psi\rangle=a+b\in\operatorname{span}_{\mathbb C}\{u,v\}=\mathcal S.$ This proves claim (b), and completes the proof.
\end{proof}

We next identify an explicit class of updates that preserves the 2D dynamics. The following theorem shows that if the state is evolved using only the two types of Grover gates, $D(\alpha)$ and $U_g(\beta)$, regardless of the choices of the angle parameters, then all states remain in the Grover plane. Consequently, the skew-Hermitian part of Riemannian gradient remains in the fixed subspace $\mathcal{W}$.

\begin{theorem}[Grover-type updates preserve 2D dynamics]\label{thm-grad-in-W}
Initialize the unitary as $U_0=I$. For $k=0,1, \ldots$, consider the update rule:
\begin{equation*}
V_k := \text{a finite product of the form } e^{i \theta H} \text{ and } e^{i \theta \psi_0}, \quad U_{k+1} := V_k U_k,
\end{equation*}
and $\psi_k := |\psi_k\rangle\langle\psi_k|$ with $|\psi_k\rangle = U_k |\psi_0\rangle$; i.e., $|\psi_{k+1}\rangle=V_k|\psi_k\rangle$. Define the scalars $q_k:=\langle\psi_k|H|\psi_k\rangle \in[0, 1]$.
Let $\mathcal{S}$ be the Grover plane defined in \cref{eq-grover-plane}, and $\mathcal{W}$ be the gradient subspace defined in \cref{eq-W}.
Then, for all $k \geq 0$:
\begin{enumerate}[label=(\arabic*)]
\item the state $|\psi_k\rangle$ remains in $\mathcal{S}$,
\item the skew-Hermitian part of Riemannian gradient, $[H, \psi_k]$, remains in $\mathcal{W}$ and $\|[H,\psi_k]\|_F=\sqrt{2q_k(1-q_k)}$.
\end{enumerate}
\end{theorem}

\begin{proof}
We first prove claim (1) by induction. Since $|\psi_0\rangle\in\mathcal{S}$ by definition, the base case holds. By \cref{lem-subspace-inv} (a), both $e^{i\theta H}$ and $e^{i\theta\psi_0}$ preserve $\mathcal{S}$. Hence any finite product $V_k$ of these operators also preserves $\mathcal{S}$. Therefore, if $|\psi_k\rangle\in\mathcal{S}$, then $|\psi_{k+1}\rangle=V_k|\psi_k\rangle\in\mathcal{S}.$ This completes the induction and proves claim (1). For claim (2), since $|\psi_k\rangle\in\mathcal{S}$ for all $k$, \cref{thm-2d-dynamics} (a) implies that $[H,\psi_k]\in\mathcal{W}.$ For the Frobenius norm $\|[H,\psi_k]\|_F=\sqrt{2q_k(1-q_k)}$, see \cref{lem-X0YO}.
\end{proof}

\begin{remark}
The RGA update with the Exponential map in \cref{eq-Exp-updat} also preserves the 2D dynamics by taking $V_k=e^{t_k[H,\psi_k]}$ in \cref{thm-grad-in-W}. Starting from $k=0$, we have $|\psi_0\rangle\in\mathcal{S}$ by definition. If $|\psi_k\rangle\in\mathcal{S}$, then \cref{lem-subspace-inv} (a) implies that $[H,\psi_k]$ preserves $\mathcal{S}$, and hence so does $V_k$. Therefore $|\psi_{k+1}\rangle = V_k|\psi_k\rangle \in \mathcal{S}$, proving by induction that all iterates remain in the Grover plane. However, as discussed above, implementing $e^{t_k[H,\psi_k]}$ as a quantum circuit is not straightforward. We therefore do not pursue this update further.
\end{remark}

\subsubsection{Grover-compatible retractions}

In what follows, whenever it causes no ambiguity, the term \textit{gradient} will also refer to the skew-Hermitian component of the Riemannian gradient. Likewise, when discussing the subspace $\mathcal{W} U \subseteq  \mathfrak{u} (N) U = T_U$, we sometimes, for simplicity, focus on $\mathcal{W} \subseteq \mathfrak{u}(N)$ itself and disregard the right $U$. Recall the notations
\begin{equation*}
    X_0 :=\left[H, \psi_0\right], \quad Y_0:=i\left[H, X_0\right], \quad \mathcal{W} :=\operatorname{span}_{\mathbb{R}}\left\{X_0, Y_0\right\} \subseteq \mathfrak{u} (N).
\end{equation*}

Our key idea is to restrict the optimization dynamics to the fixed 2D subspace $\mathcal{W}$.
At the initial step $k=0$, the gradient $\left[H, \psi_0\right]$ belongs to $\mathcal{W}$.
So we assume that, at some iteration $k$, the current gradient $\left[H, \psi_k\right]$ lies in $\mathcal{W}$. Next, we may employ an elaborate retraction, termed the Grover-compatible retraction $\mathrm{R}_U$ below, to update.

\begin{definition}\label{defn-grover-retraction}
A mapping $\mathrm{R}_U: \mathcal{W} U \rightarrow \mathrm{U}(N)$ is called a \emph{Grover-compatible retraction} if it satisfies the following two conditions:
\begin{enumerate}[label=(\arabic*)]
    \item For all $U \in \mathrm{U}(N)$, \(\mathrm{R}_U\) is a valid retraction on the restricted tangent subspace \(\mathcal WU\). Namely, for every \(\eta\in \mathcal WU\), the induced curve \(\gamma(t):=\mathrm{R}_U(t\eta)\), \(t\geq 0\), satisfies $\gamma(0)=U$ and $\dot{\gamma}(0)=\eta$.

    \item  For any tangent vector $\eta \in \mathcal{W} U \cong \mathbb{R}^2$, let $(x, y) \in \mathbb{R}^2$ be the unique coefficients such that its skew-Hermitian part is given by $\widetilde{\eta}:=\eta U^{\dagger} = x X_0 + y Y_0$. Then, the retraction must take the form
\begin{equation}\label{eq-grover-form}
        \mathrm{R}_{U}(t\eta) = V(t; x, y) U,
        \qquad
        V(t; x, y) = \prod_{\ell=1}^{K} e^{i \theta_\ell^{(1)}(t; x, y) H} e^{i \theta_\ell^{(2)}(t; x, y) \psi_0},
\end{equation}
for some finite integer \(K\) and real-valued parameter functions
\(\theta_\ell^{(1)}\) and \(\theta_\ell^{(2)}\). Here, $V(t; x, y)$ is a unitary operator (representing the newly added gates) constructed as a finite product of gates generated by $H$ and $\psi_0$. Equivalently, we can set $t=1$ above to obtain the expression for $\mathrm{R}_U(\eta)$.
\end{enumerate}
\end{definition}

\begin{remark}
Since the step sizes in our algorithms are always positive, we may safely restrict the domain of the curve $\gamma(t)$ to $t \geq 0$. This modification has no impact on the subsequent theoretical analysis but helps avoid certain technical subtleties. In this setting, $\dot{\gamma} (0)$ should be understood as the right-hand derivative.
\end{remark}

Given a Grover-compatible retraction $\mathrm{R}_U$, we naturally choose the tangent direction $\eta_k$ to be the current gradient. Then, the update $U_{k+1} = \mathrm{R}_{U_k}\left (t_k \eta_k\right)$ is well-defined, since the current gradient $\widetilde{\eta}_k=\left[H, \psi_k\right]$ lies within $\mathcal{W}$.
By construction, the additional circuit components $V$ (in \cref{eq-grover-form}) introduced by $\mathrm{R}_U$ consist solely of gates of the form $e^{i \theta H}$ and $e^{i \theta \psi_0}$.
According to \cref{thm-grad-in-W}, such an update preserves the gradient subspace invariance, ensuring that the next gradient $\left[H, \psi_{k+1}\right]$ also remains in $\mathcal{W}$.

Repeating this process iteratively, the Riemannian gradient ascent (RGA) method based on the Grover-compatible $\mathrm{R}_U$ exactly follows the iterative structure described in \cref{thm-grad-in-W}, ensuring that each gradient $\left[H, \psi_k\right]$ always stays within $\mathcal{W}$. Finally, since the mapping $\mathrm{R}_U$ satisfies the axioms of a valid retraction (the condition (1) in \cref{defn-grover-retraction}; though restricted to $\mathcal{W}$), the standard convergence guarantees of Riemannian optimization remain applicable.

Indeed, there exist numerous retractions that fulfill \cref{defn-grover-retraction}, with varying structures and lengths. Three representative examples are given in the next proposition. It is worth noting that while we label them as 8-factor, 6-factor, and 5-factor retractions, there are multiple possible realizations for each given length.

\begin{proposition}\label{prop-grover-retr}
For all $U \in \mathrm{U} (N)$ and any $\eta \in \mathcal{W} U$, decompose the skew-Hermitian part $\widetilde{\eta}: =\eta U^{\dagger} \in \mathcal{W}$ as $\widetilde{\eta}=x X_0+y Y_0$ for unique coefficients $ (x, y) \in \mathbb{R}^2$. The following are some Grover-compatible retractions as in \cref{defn-grover-retraction}.

\begin{itemize}
    \item  \textbf{8-factor retraction.} Define
\begin{equation*}
    \mathrm{R}^{(8)}_U (\eta):=e^{i \frac{y}{2}  \psi_0} e^{i \frac{\pi}{2} H} e^{-i \frac{x}{2}  \psi_0} e^{-i \pi H} e^{i \frac{x}{2}  \psi_0} e^{-i \frac{\pi}{2} H} e^{-i \frac{y}{2}  \psi_0} e^{i \pi H} \, U.
\end{equation*}
We can show that $\gamma (t)=\mathrm{R}^{(8)}_U (t\eta)$ satisfies $\gamma (0)=U$ and $\dot{\gamma} (0)=\eta \in \mathcal{W} U$.
    \item \textbf{6-factor retraction.} Let $c_1: =-\tfrac{x+y}{2},$ and $c_2: =\tfrac{x-y}{2}$. Define
\begin{equation*}
    \mathrm{R}^{(6)}_U (\eta): =e^{i \frac{\pi}{2} H} e^{i c_1 \psi_0} e^{-i \pi H} e^{i c_2 \psi_0} e^{i \frac{\pi}{2} H} e^{i y \psi_0} \, U.
\end{equation*}
We can show that $\gamma (t)=\mathrm{R}^{(6)}_U (t\eta)$ satisfies $\gamma (0)=U$ and $\dot{\gamma} (0)=\eta \in \mathcal{W} U$.
\item \textbf{5-factor retraction.} Let $A: =\operatorname{atan2} (y, x), R: =\sqrt{x^2+y^2},$ (the argument and modulus of the complex number $x+i y$) and set
\begin{equation}\label{eq-5factor-paras}
    a_1=A+\frac{\pi}{2},  \quad a_2=A-\frac{\pi}{2},  \quad b_1=-\frac{R}{2},   \quad b_2= \frac{R}{2}.
\end{equation}
Define
\begin{equation}\label{eq-5factor-retraction}
    \mathrm{R}^{(5)}_U (\eta): =e^{i a_1 H} e^{i b_1 \psi_0} e^{i\left (a_2-a_1\right) H} e^{i b_2 \psi_0} e^{-i a_2 H} \, U.
\end{equation}
Since $a_2-a_1=-\pi$, the middle $H$-factor simplifies to $e^{-i \pi H}$. Then, for $t \geq 0$,
\begin{gather}
\gamma (t) =
\mathrm{R}^{(5)}_U (t\eta)=V(t )\,  U,  \label{eq-5factor-gamma} \\
V(t )=e^{i a_1 H} e^{i t b_1 \psi_0} e^{i\left(a_2-a_1\right) H} e^{i t b_2 \psi_0} e^{-i a_2 H}.  \label{eq-5factor-gamma-V}
\end{gather}
We can show that $\gamma (t)$ satisfies $\gamma (0)=U$ and $\dot{\gamma} (0)=\eta \in \mathcal{W} U$.
\end{itemize}
\end{proposition}
\begin{proof}
We provide the proof only for the 5-factor retraction, as the arguments for the other cases are similar. Let $\operatorname{ad}_H$ be the linear map on matrices, $\operatorname{ad}_H (A): =[H, A].$
The identity $[H, [H, [H, A]]]=[H, A]$ in \cref{eq-411} implies $\operatorname{ad}_H^3=\operatorname{ad}_H$. Then, any power of $\operatorname{ad}_H$ reduces to one of the three operators $I, \operatorname{ad}_H$, or $\operatorname{ad}_H^2$. In fact, by applying the adjoint action identity in \cite[Proposition 3.35]{hall2015liegroups} (the second equality below), we obtain
\begin{align*}
\Psi (\theta): =e^{i \theta H} \psi_0 e^{-i \theta H}=\sum_{n \geq 0} \frac{ (i \theta)^n}{n! } \operatorname{ad}_H^n (\psi_0)
 =\psi_0+i \sin \theta X_0+ (\cos \theta-1)\left[H, \left[H, \psi_0\right]\right].
\end{align*}
Since $Y_0: =i\left[H, X_0\right]=i\left[H, \left[H, \psi_0\right]\right]$, we have $\left[H, \left[H, \psi_0\right]\right]=-i Y_0$. Hence
\begin{equation}\label{eq-adj}
    i \Psi (\theta)=i \psi_0-\sin \theta X_0+ (\cos \theta-1) Y_0.
\end{equation}
For any $U \in \mathrm{U} (N)$ and $\eta \in \mathcal{W} U$, write the skew-Hermitian part $\widetilde{\eta}: =\eta U^{\dagger} \in \mathcal{W}$ as $\widetilde{\eta}=x X_0+y Y_0$.
Consider the induced curve $\gamma (t)$ in \cref{eq-5factor-gamma} with parameters given in \cref{eq-5factor-paras}. At $t=0$, $\gamma (0)=e^{i a_1 H} e^{i (a_2-a_1) H} e^{-i a_2 H} U=e^{i 0 \cdot H} U=U.$
Recall from \cref{defn-grover-retraction} the definition of the newly added gates $V$, specifically given by \cref{eq-5factor-gamma-V} for the 5-factor retraction. Because only the matrix exponentials involving $\psi_0$ depend on $t$, the product rule yields
\begin{equation*}
    \begin{aligned}
    \dot{V} (0)
    & =i b_1 \, e^{i a_1 H} \psi_0 e^{-i\left (\pi+a_2\right) H}+i b_2 \, e^{i\left (a_1-\pi\right) H} \psi_0 e^{-i a_2 H} \\
    & =i b_1\, \underbrace{e^{i a_1 H} \psi_0 e^{-i a_1 H}}_{\Psi\left (a_1\right): =}+i b_2 \, \underbrace{e^{i a_2 H} \psi_0 e^{-i a_2 H}}_{\Psi\left (a_2\right): =},
    \end{aligned}
\end{equation*}
where we used $a_1-\pi=a_2$ and $\pi+a_2=a_1$. Therefore
\begin{equation*}
    \begin{aligned}
    & \dot{V} (0) =i b_1 \Psi\left (a_1\right)+i b_2 \Psi\left (a_2\right) \\
    & =i\left (b_1+b_2\right) \psi_0-\left (b_1 \sin a_1+b_2 \sin a_2\right) X_0+\left (b_1\left (\cos a_1-1\right)+b_2\left (\cos a_2-1\right)\right) Y_0.
    \end{aligned}
\end{equation*}
With $b_1=-R / 2$ and $b_2=R / 2$, we have $b_1+b_2=0$, so the $i \psi_0$ term vanishes. Using $a_1=A+\frac{\pi}{2}, a_2=A-\frac{\pi}{2}$, then $\sin a_1=\cos A, \sin a_2=-\cos A, \cos a_1=-\sin A, \cos a_2=\sin A$. Hence, $-\left (b_1 \sin a_1+b_2 \sin a_2\right) =R \cos A,$ and $b_1\left (\cos a_1-1\right)+b_2\left (\cos a_2-1\right)=R \sin A.$
Therefore, $\dot{V} (0)= (R \cos A) X_0+ (R \sin A) Y_0=x X_0+y Y_0=\widetilde{\eta}.$
Finally, $\dot{\gamma} (0)=\dot{V} (0) U=\widetilde{\eta} U=\eta .$
\end{proof}

For a general method of constructing Grover-compatible retractions, please refer to Supplementary Material SM1. \cref{tab:retractions-summary} summarizes all the retractions discussed, detailing their domains, Grover compatibility, and whether they preserve the Grover plane. A detailed discussion on whether the Cayley, Polar, and QR retractions preserve the Grover plane is provided in Supplementary Material SM2 and SM3.

\begin{remark}\label{rem:retraction-tradeoff}
While the $5$-factor retraction minimizes per-iteration circuit depth for NISQ hardware, longer retractions ($6$- or $8$-factor) can potentially improve the optimization landscape, such as by yielding a smaller effective Riemannian (pullback) Lipschitz constant. Although rigorously establishing this theoretical relationship remains future work, our simulations (\Cref{sec-experiments}) empirically reveal a clear trade-off: despite requiring deeper circuits per step, the $6$- and $8$-factor retractions converge in fewer iterations, ultimately reducing the total operational cost (measured by $e^{i\theta H}$ calls).
\end{remark}

\begin{landscape}
\begin{table}[!htbp]
\centering
\small
\renewcommand{\arraystretch}{2}

\caption{Summary of retractions on the unitary manifold.
Here \(\eta=\widetilde{\eta}U\) lies either in the full tangent space
\(T_U=\mathfrak{u}(N)U\) or in the fixed two-dimensional subspace
\(\mathcal{W}U\), where
\(\mathcal{W}:=\operatorname{span}_{\mathbb{R}}\{X_0,Y_0\}\subseteq\mathfrak{u}(N)\).
The matrices \(X_0\) and \(Y_0\) are determined by the underlying Grover search problem.
For Grover-compatible retractions, we write
\(\widetilde{\eta}=xX_0+yY_0\in\mathcal{W}\)
with unique coefficients \((x,y)\in\mathbb{R}^2\).
The column ``Preserve Grover plane'' indicates whether the updated state always remains in the Grover plane, while ``Grover-compatible'' indicates whether the retraction can be explicitly written as a finite product of Grover gates.
}
\label{tab:retractions-summary}

\begin{tabular}{@{}
L{0.12\linewidth}
L{0.45\linewidth}
C{0.10\linewidth}
C{0.15\linewidth}
C{0.10\linewidth}
@{}}
\hline
\textbf{Retraction}
& \textbf{Formula on unitary manifold}
& \textbf{Domain}
& \textbf{Preserve Grover plane}
& \textbf{Grover-compatible} \\
\hline

Riemannian exponential
&
\(
\Exp_U(\eta)=e^{\widetilde{\eta}}U
\)&
\(T_U\)
&
\ding{51}
&
\ding{55} \\

\hline

Cayley
&
\(
\mathrm{R}^{\mathrm{cay}}_U(\eta)
=
\left(I-\tfrac{1}{2}\widetilde{\eta}\right)^{-1}
\left(I+\tfrac{1}{2}\widetilde{\eta}\right)U
\)&
\(T_U\)
&
\ding{51}
&
\ding{55} \\

\hline

Polar
&
$\mathrm{R}^{\mathrm{polar}}_U(\eta) = (U+\eta)(I+\eta^\dagger\eta)^{-1/2}$
&
\(T_U\)
&
\ding{51}
&
\ding{55} \\

\hline

QR
&
$\mathrm{R}^{\mathrm{qr}}_U(\eta) = \operatorname{qf}(U+\eta)$, where \(\operatorname{qf}\) returns the \(Q\)-factor of a QR decomposition&
\(T_U\)
&
\ding{55}
&
\ding{55} \\

\hline

$8$-factor Grover-compatible
&
\(
\mathrm{R}^{(8)}_U(\eta)
=
e^{i\tfrac{y}{2}\psi_0}
e^{i\tfrac{\pi}{2}H}
e^{-i\tfrac{x}{2}\psi_0}
e^{-i\pi H}
e^{i\tfrac{x}{2}\psi_0}
e^{-i\tfrac{\pi}{2}H}
e^{-i\tfrac{y}{2}\psi_0}
e^{i\pi H}U
\)
&
\(\mathcal{W}U\)
&
\ding{51}
&
\ding{51} \\

\hline

$6$-factor Grover-compatible
&
$\mathrm{R}^{(6)}_U(\eta) = e^{i\tfrac{\pi}{2}H} e^{ic_1\psi_0} e^{-i\pi H} e^{ic_2\psi_0} e^{i\tfrac{\pi}{2}H} e^{iy\psi_0}U,$ where \( c_{1,2}=\mp\tfrac{x\pm y}{2}\)
&
\(\mathcal{W}U\)
&
\ding{51}
&
\ding{51} \\

\hline

$5$-factor Grover-compatible
&
\(\mathrm{R}^{(5)}_U(\eta) = e^{ia_1H} e^{ib_1\psi_0} e^{- i\pi H} e^{ib_2\psi_0} e^{-ia_2H}U,\) where \(A=\operatorname{atan2}(y,x)\), \(R=(x^2+y^2)^{1/2}\), \(a_{1,2}=A\pm\tfrac{\pi}{2}\), and \(b_{1,2}=\mp\tfrac{R}{2}\)&
\(\mathcal{W}U\)
&
\ding{51}
&
\ding{51} \\

\hline

General Grover-compatible
&
$\mathrm{R}_U(\eta)=(\prod_{\ell=1}^K e^{i \theta_{\ell}^{(1)}(x, y) H} e^{i \theta_{\ell}^{(2)}(x, y) \psi_0}) U$
&
\(\mathcal{W}U\)
&
\ding{51}
&
\ding{51} \\

\hline
\end{tabular}

\end{table}
\end{landscape}

\subsubsection{Grover-compatible Riemannian gradient ascent method}

Now, we provide the Riemannian gradient ascent (RGA) method employing Grover-compatible retractions in \cref{alg-grover-ret}. Moreover, the resulting state $\lvert \psi_T \rangle = U_T \lvert \psi_0 \rangle$ after $T$ iterations ($K T$ calls to $e^{i \theta H}$) is given by \cref{fig:grover_retr_update}.

\begin{algorithm}
\caption{Riemannian gradient ascent method using the Grover-compatible retraction (RGA-Grover)}
\label{alg-grover-ret}
\begin{algorithmic}[1]
\STATE{Choose a Grover-compatible retraction $\mathrm{R}$ (\cref{defn-grover-retraction}), explicitly given by $\mathrm{R}_{U}(t\eta) = V(t; x, y) U$, where  $V(t; x, y) = \prod_{\ell=1}^{K} e^{i \theta_\ell^{(1)}(t; x, y) H} e^{i \theta_\ell^{(2)}(t; x, y) \psi_0}$. Choose initial point $U_0=I$ and tolerance $\varepsilon$. Set $q_0 := \operatorname{Tr}(H\psi_0)$.}
\STATE{Set $k := 0$.}
\WHILE{$\left\|\operatorname{grad} f\left(U_k\right)\right\|> \varepsilon$}
  \STATE{Decompose $\widetilde{\operatorname{grad}} f (U_k)=[H,\psi_k] = x_k X_0 + y_k Y_0$ to find $(x_k,y_k) \in \R^2$.}
  \STATE{Choose a step size $t_k > 0$.}
  \STATE{Update
\begin{equation}\label{eq-grover-trt}
    U_{k+1}
= \mathrm{R}_{U_k}\bigl (t_k\, \operatorname{grad} f (U_k)\bigr)
    = V(t_k;x_k,y_k) U_k.
\end{equation}
\vspace{-5mm}
  }
  \STATE{Update the state $|\psi_{k+1}\rangle := V(t_k;x_k,y_k) |\psi_k \rangle$.}
  \STATE{Compute new cost value $q_{k+1} := \langle\psi_{k+1}|H |\psi_{k+1}\rangle$.}
  \STATE{Compute norm
  $\|\operatorname{grad} f(U_{k+1})\|_F = \sqrt{2q_{k+1}(1-q_{k+1})}$.}
  \STATE{Set $k := k+1$.}
\ENDWHILE
\end{algorithmic}
\end{algorithm}

\begin{figure}
    \centering
    \includegraphics[width=0.85\linewidth]{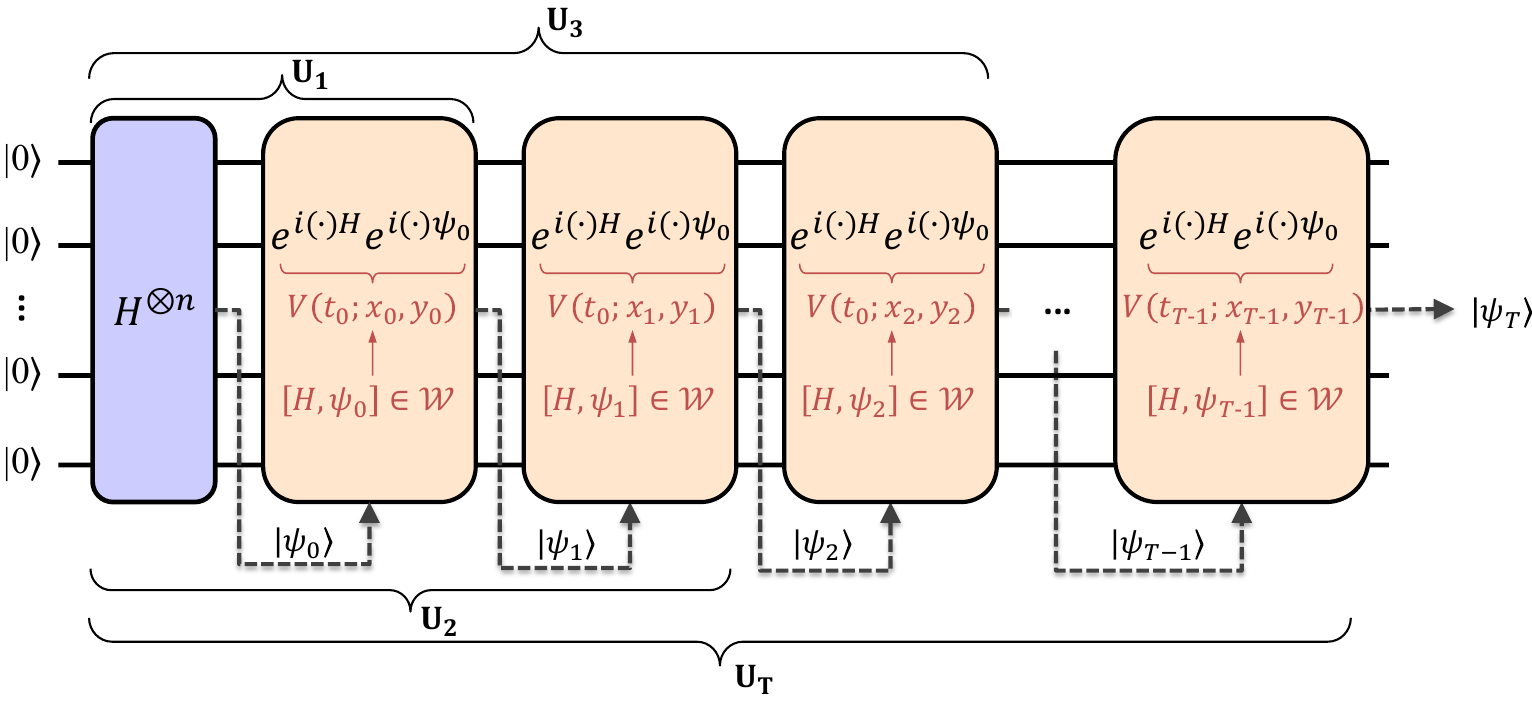}
    \caption{Quantum circuit generated by Riemannian gradient ascent method with the Grover-compatible retraction update \cref{eq-grover-trt}, starting from $U_0=I$ and the uniform state. At each iteration, a new gate $V(t_k;x_k,y_k)$ is appended.}
    \label{fig:grover_retr_update}
\end{figure}

This procedure is essentially identical to the standard Riemannian optimization algorithm, yet it preserves the defining feature of Grover's algorithm: every circuit consists solely of gates of the form $e^{i \theta H}$ and $e^{i \theta \psi_0}$. The key difference is that we now use gradient information to adaptively choose the rotation angle $\theta$.
At this point, Challenge 1 in \cref{sec-be-opt} has been resolved.

Challenge 2 can also be addressed by the following proposition.

\begin{proposition}\label{prop:challeng2}
If $|\psi_0\rangle$ is the uniform superposition over all items, then $|\psi_k\rangle$ in \cref{alg-grover-ret} converges to the uniform superposition over all marked items.
\end{proposition}

\begin{proof}
Let $q_0:=\langle\psi_0|H|\psi_0\rangle$ and define the target state $|\psi^\star\rangle:=(1/\sqrt{q_0}) H|\psi_0\rangle .$ By the proof of \cref{thm-grad-in-W}, every iterate satisfies $|\psi_k\rangle\in\mathcal S =\operatorname{span}_{\mathbb C}\{u,v\},$ where $u:=H|\psi_0\rangle$, $v:=(I-H)|\psi_0\rangle$ and $u\perp v$. Hence, for each $k$, there exist $\alpha_k,\beta_k\in\mathbb C$ such that $|\psi_k\rangle=\alpha_k u+\beta_k v.$ Recall that equation \cref{eq-1117} gives $q_k:=\langle\psi_k|H|\psi_k\rangle=|\alpha_k|^2q_0.$ On the other hand, we have \(\langle\psi^\star|\psi_k\rangle
=(1/\sqrt{q_0})\langle u|\alpha_k u+\beta_k v\rangle
=(1/\sqrt{q_0})\bigl(\alpha_k\langle u|u\rangle+\beta_k\langle u|v\rangle\bigr).\) Substituting $\langle u | u\rangle=q_0$ and $\langle u | v\rangle=0$, we obtain $\left|\left\langle \psi_k | \psi^\star \right\rangle\right|^2=|\alpha_k|^2 q_0.$ Thus,
\[
q_k
=
\left\langle\psi_k\right|H\left|\psi_k\right\rangle
=
\left|\left\langle\psi_k | \psi^\star\right\rangle\right|^2.
\]
Thus, when $q_k \to 1$, the states $|\psi_k\rangle$ and $|\psi^{\star}\rangle$ differ only by a global phase and are therefore physically indistinguishable.\footnote{For any unit complex vectors $|x\rangle$ and $|y\rangle$, the condition $\bigl|\langle x|y\rangle\bigr|=1$ holds if and only if the two vectors differ only by a global phase, i.e., $|y\rangle=e^{i\theta}|x\rangle$ for some $\theta\in\mathbb R$.}
In the Grover setting, with $H=H_g$ and $|\psi_0\rangle$ defined in \cref{eq-defn-Hf,eq-defn-psi0}, the state $|\psi^\star\rangle=(1/\sqrt{q_0})H|\psi_0\rangle$ coincides with the ideal target state in \cref{eq-defn-psistar}, namely the uniform superposition over all marked items.
\end{proof}

\subsection{Classical simulability of \texorpdfstring{\cref{alg-grover-ret}}{Algorithm 3.1}}

A main concern is how to compute the gradient coordinates $(x_k, y_k)$ in \cref{alg-grover-ret}. We know that $x_k$ (likewise for $y_k$) can be obtained via the inner product $x_k=\left\langle\left[H, \psi_k\right], X_0\right\rangle /(2 q_0(1-q_0))$, which seems to suggest that one might need to perform certain measurements on the current state $|\psi_k\rangle$ before proceeding with the circuit update. In fact, this is not an issue.

The following theorem shows that all key quantities generated throughout \cref{alg-grover-ret} at each iteration, including the cost values $q_k$, the gradient coordinates $(x_k, y_k)$ within $\mathcal{W}$, are classically simulatable. That is, the entire iterative process of \cref{alg-grover-ret} can be fully simulated on a classical computer. Once all angles have been computed, this Grover-type circuit can then be implemented accordingly.

\begin{theorem}[Classical simulability]\label{thm-simulability}
Consider the iterative process and notation introduced in \cref{alg-grover-ret}.
With initialization triplet $(x_0, y_0, q_0): = (1, 0, q_0)$, define the $2 \times 2$ matrix
$ \Psi_0 =
  \left(\begin{array}{ll}
q_0 & 1-q_0 \\
q_0 & 1-q_0
\end{array}\right).$
Then there exists an explicit, classically computable process
\begin{equation}\label{eq-F}
    F: (x_k, y_k, q_k; t_k) \mapsto (x_{k+1}, y_{k+1}, q_{k+1})
\end{equation}
described as follows. For each $k = 0, 1, \ldots,$
\begin{enumerate}
    \item If $k=0$, set $\alpha_k=1,$ $\beta_k=1$ and $z_k = \alpha_k \bar{\beta}_k$. ($\alpha_k,\beta_k,z_k \in \C$ are auxiliary variables)
    \item For the additional gate sequence in updates \cref{eq-grover-trt,eq-grover-form}, recall that $K$ denotes the number of oracle, diffusion gate pairs, while $\theta_{\ell}^{(1)}$ and $\theta_{\ell}^{(2)}$ denote the real-valued angle functions of the $\ell$-th pair. Thus,
\begin{equation*}
    V (t_k; x_k, y_k)
    = \prod_{\ell=1}^K
    e^{i \,\theta_{\ell}^{ (1)} (t_k; x_k, y_k) \,H}\,
    e^{i \,\theta_{\ell}^{ (2)} (t_k; x_k, y_k) \,\psi_0}.
\end{equation*}
We define the corresponding $2 \times 2$ matrix by
\begin{equation}\label{eq-Mk}
    M_k: = M (t_k; x_k, y_k)
    = \prod_{\ell=1}^{K}
    E_H\! \big (\theta_{\ell}^{ (1)} (t_k; x_k, y_k)\big)\,
    E_{\psi_0}\! \big (\theta_{\ell}^{ (2)} (t_k; x_k, y_k)\big),
\end{equation}
where $E_H(\theta)=\left(\begin{array}{cc}
e^{i \theta} & 0 \\
0 & 1
\end{array}\right)$ and $E_{\psi_0} (\theta) = I_2 + (e^{i\theta} - 1)\Psi_0$.
    \item Update
\begin{align*}
    & \left[\begin{array}{l}
    \alpha_{k+1} \\
    \beta_{k+1}
    \end{array}\right] \leftarrow M_k\left[\begin{array}{l}
    \alpha_k \\
    \beta_k
    \end{array}\right], \text{ and }
    \\
    & q_{k+1} \leftarrow q_0\left|\alpha_{k+1}\right|^2, \quad
    z_{k+1}\leftarrow \alpha_{k+1} \bar{\beta}_{k+1}, \quad
    x_{k+1} \leftarrow \Re z_{k+1}, \quad y_{k+1} \leftarrow \Im z_{k+1}.
\end{align*}
\end{enumerate}
\end{theorem}

\begin{proof}
We follow all notations introduced in the proof of \cref{thm-grad-in-W}.
Let $\mathcal{B}= \{u, v\}$ denote the orthogonal (but non-normalized) basis of the Grover plane $\mathcal{S}$, defined in \cref{eq-1125}.
By \cref{thm-grad-in-W}, for all $k$, the states $\lvert\psi_k\rangle$ lie in $\mathcal{S}$. Hence, we can write $\left|\psi_k\right\rangle=\alpha_k u+\beta_k v$ for some $\alpha_k, \beta_k \in \mathbb{C},$ or equivalently, write $\lvert\psi_k\rangle_{\mathcal{B}} = \binom{\alpha_k}{\beta_k} \in \mathbb{C}^2$.
We further define $z_k: = \alpha_k \bar{\beta}_k$ for all $k=0, 1, \ldots$, as in the proof of \cref{thm-grad-in-W}.
For $k=0$, since $\lvert\psi_0\rangle = u+v$, we have $\alpha_0 = 1$, $\beta_0 = 1$, and therefore $z_0 = 1$.
Since $\lvert\psi_{k+1}\rangle = V (t_k; x_k, y_k)\lvert\psi_k\rangle$, it follows that
\begin{equation*}
    \left|\psi_{k+1}\right\rangle_{\mathcal{B}}=M_k\left|\psi_k\right\rangle_{\mathcal{B}}, \quad \text { i.e., } \quad \left[\begin{array}{l}
\alpha_{k+1} \\
\beta_{k+1}
\end{array}\right] = M_k\left[\begin{array}{l}
\alpha_k \\
\beta_k
\end{array}\right],
\end{equation*}
where $M_k$ denotes the matrix representation of the operator $V (t_k; x_k, y_k)$. By \cref{eq-1117,eq-1123}, we have $q_{k+1} = q_0\lvert\alpha_{k+1}\rvert^2 $, $x_{k+1}=\Re\left (z_{k+1}\right),$ and $y_{k+1}=\Im\left (z_{k+1}\right).$ It remains to derive the explicit form of $M_k$.
We have already shown that
\begin{equation*}
[H]_{\mathcal{B}}=\left (\begin{array}{ll}
    1 & 0 \\
    0 & 0
    \end{array}\right), \quad \Psi_0: =\left[\psi_0\right]_{\mathcal{B}}=\left(\begin{array}{ll}
q_0 & 1-q_0 \\
q_0 & 1-q_0
\end{array}\right) .
\end{equation*}
Using the projector identity $e^{i\theta P} = I + (e^{i\theta} - 1)P$ for any $P^2 = P$, we define the following $2 \times 2$ matrices:
\begin{align*}
E_H (\theta)  =I_2+ (e^{i\theta}-1)[H]_{\mathcal{B}}=\left (\begin{array}{cc}
e^{i \theta} & 0 \\
0 & 1
\end{array}\right), \quad
E_{\psi_0} (\theta)  =I_2+ (e^{i\theta}-1)\Psi_0.
\end{align*}
Hence, the matrix representation of the operator $V (t_k; x_k, y_k)$ is given by \cref{eq-Mk}, which depends on the step size $t_k$ and the gradient coordinates $x_k, y_k$.
\end{proof}

\section{Complexity analysis}\label{sec-complexity}

It is well known that Grover's algorithm achieves a query complexity of $\mathcal{O}(\sqrt{N / M})$, yielding a quadratic speedup over classical search. Moreover, any quantum algorithm for the unstructured search problem must use at least $\Omega(\sqrt{N / M})$ queries \cite{zalka1999grover}, establishing that the $\mathcal{O}(\sqrt{N/M})$ bound is optimal.
Since the previous section showed how Grover's algorithm integrates cleanly into a Riemannian optimization framework, a natural question arises: \textit{Can standard complexity analysis techniques from optimization theory also recover the $\mathcal{O}(\sqrt{N/M})$ result?}

In this section, we recover the optimal scaling using tools from Riemannian optimization. We begin by carefully estimating the Riemannian Lipschitz constant $L_{\mathrm{Rie}}$ and show that $L_{\mathrm{Rie}} = \mathcal{O}(\sqrt{N/M})$. Next, a local Riemannian $\mu$-Polyak--\L{}ojasiewicz (PL) inequality with $\mu = \tfrac{1}{2}$ is established. Combining these two regularity conditions yields an iteration complexity of $\mathcal{O}(\sqrt{N/M} \log(1/\varepsilon))$ for \cref{alg-grover-ret} to reach an $\epsilon$-global maximizer, in agreement with Grover's quadratic speedup.

\subsection{Riemannian Lipschitz constants}

When optimizing a cost function in Euclidean space, its (gradient) Lipschitz constant is a crucial quantity in the analysis of iteration complexity.
Similarly, for optimizations defined on a manifold, one can also introduce notions of Riemannian Lipschitz continuity/constants intrinsic to the manifold.
In this subsection, we will compute the Riemannian Lipschitz constant of our cost function on the unitary manifold. We begin with the following lemma, which provides its Lipschitz constant in the Euclidean setting.

\begin{lemma}[Euclidean Lipschitz constant]\label{lem-lip-nablaf}
Consider the cost function $f (U)=\operatorname{Tr}\left (H U \psi_0 U^{\dagger}\right)$, then the Euclidean gradient $\nabla f(U)=2 H U \psi_0$ is Lipschitz continuous with (smallest) constant $L_{\mathrm{Euc}}= 2$, i.e.,
\begin{equation*}
    \left\|\nabla f\left (U_1\right)-\nabla f\left (U_2\right)\right\|_F \leq 2\left\|U_1-U_2\right\|_F,
\quad \text{ for all } U_1, U_2 \in \mathbb{C}^{N \times N}.
\end{equation*}
\end{lemma}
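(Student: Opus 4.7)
The plan is to exploit the linearity of $\nabla f$ in $U$ together with the fact that both $H$ and $\psi_0$ are orthogonal projectors of operator norm at most one.

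First I would observe that because $\nabla f(U) = 2HU\psi_0$ depends linearly on $U$, the difference telescopes as
\begin{equation*}
\nabla f(U_1) - \nabla f(U_2) = 2H(U_1 - U_2)\psi_0.
\end{equation*}
Next I would apply submultiplicativity of the Frobenius norm under left and right multiplication by bounded operators, namely the standard inequality $\|ABC\|_F \le \|A\|_{\mathrm{op}}\, \|B\|_F\, \|C\|_{\mathrm{op}}$, to obtain
\begin{equation*}
\|\nabla f(U_1) - \nabla f(U_2)\|_F \le 2\,\|H\|_{\mathrm{op}}\,\|U_1 - U_2\|_F\,\|\psi_0\|_{\mathrm{op}}.
\end{equation*}
Since $H$ is an orthogonal projector and $\psi_0$ is a rank-one projector, both have operator norm equal to $1$, which immediately yields the bound with constant $L_{\mathrm{Euc}} = 2$. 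This part is entirely routine; there is no serious obstacle, since the structure of $\nabla f$ makes the estimate reduce to elementary operator-norm inequalities.

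To confirm that $2$ is the \emph{smallest} admissible constant (and not merely an upper bound), I would construct an example saturating the inequality. A natural choice is to pick any unit vector $|w\rangle$ in the range of $H$ and set $U_1 - U_2 := |w\rangle\langle\psi_0|$, which is a rank-one matrix with $\|U_1 - U_2\|_F = 1$. A direct computation using $H|w\rangle = |w\rangle$ and $\psi_0 = |\psi_0\rangle\langle\psi_0|$ gives
\begin{equation*}
H(U_1 - U_2)\psi_0 = |w\rangle\langle\psi_0|,
\end{equation*}
whose Frobenius norm equals $1$, so the left-hand side is exactly $2$. This shows that $L_{\mathrm{Euc}} = 2$ is sharp and completes the proof.
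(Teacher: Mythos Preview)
Your proof is correct and follows essentially the same approach as the paper's own proof: both use the submultiplicative inequality $\|HA\psi_0\|_F \le \|H\|_{\mathrm{op}}\|A\|_F\|\psi_0\|_{\mathrm{op}}$ together with $\|H\|_{\mathrm{op}}=\|\psi_0\|_{\mathrm{op}}=1$ for the upper bound, and both exhibit tightness via the rank-one matrix $|w\rangle\langle\psi_0|$ with $|w\rangle$ a unit vector in $\operatorname{range}(H)$.
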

\begin{proof}
Since $H$ and $\psi_0$ are projectors, their spectral norms $\|H\|_2=\left\|\psi_0\right\|_2=1$. For any $U_1, U_2 \in \mathbb{C}^{N \times N}$,
\begin{align*}
2\left\|H\left (U_1-U_2\right) \psi_0\right\|_F
    \leq 2\|H\|_2\left\|\psi_0\right\|_2\left\|U_1-U_2\right\|_F
    =2\left\|U_1-U_2\right\|_F.
\end{align*}
Thus the smallest Lipschitz constant $L_{\mathrm{Euc}} \leq 2$.
We next show that it is tight. Let $w=\left|\psi_0\right\rangle$ so that $\psi_0=w w^{\dagger}$, and choose a unit vector $v \in \operatorname{range} (H)=\{v\in\mathbb C^N: Hv=v\}$ (since $H^2=H$). Set $A: =v w^{\dagger}$. Then $\|A\|_F=\|v\|\|w\|=1$,
and $\|H A \psi_0\|_F=\|H v w^{\dagger} w w^{\dagger}\|_F=\| (H v) w^{\dagger}\|_F=\|H v\| \|w\|=\|v\|=1.$ Choose $U_1, U_2$ such that $U_1-U_2=A$, then $\left\|\nabla f\left (U_1\right)-\nabla f\left (U_2\right)\right\|_F=2\left\|H A \psi_0\right\|_F=2\|A\|_F.$ Hence, $L_{\mathrm{Euc}}= 2.$
\end{proof}

Recall the notation $X_0 :=\left[H, \psi_0\right], $ $Y_0:=i\left[H, X_0\right],$ and $\mathcal{W} :=\operatorname{span}_{\mathbb{R}}\left\{X_0, Y_0\right\} \subseteq \mathfrak{u} (N).$ By \cref{lem-X0YO}, we define a constant
\begin{equation}\label{eq-c0}
c_0 :=\left\|X_0\right\|_F=\left\|Y_0\right\|_F=\left\|\left[X_0, \psi_0\right]\right\|_F=\frac{\sqrt{2 M (N-M)}}{N},
\end{equation}
since $c_0=\sqrt{2 q_0\left(1-q_0\right)}$ and $q_0=\operatorname{Tr}\left(H \psi_0\right)=\frac{M}{N}$.
Next, we establish two important bounds for Grover-compatible retractions. For concreteness, we present the proof using the 5-factor retraction as an example; the arguments for other retractions proceed in an analogous manner.

\begin{lemma}[Tight first and second order bounds]\label{lem-5factor-bounds}
Consider the 5-factor retraction $\mathrm{R}_U \equiv \mathrm{R}_U^{(5)}$ defined in \cref{eq-5factor-retraction}, and let $c_0$ be the constant defined in \cref{eq-c0}. For all $U \in \mathrm{U} (N)$ and $\eta \in \mathcal{W} U$, we have
\begin{equation*}
    \left\|\mathrm{R}_U (\eta)-U\right\|_F \leq\|\eta\|_F, \qquad\left\|\mathrm{R}_U (\eta)-U-\eta\right\|_F \leq \frac{1}{4 c_0}\|\eta\|_F^2.
\end{equation*}
Moreover, both are globally tight: $\tfrac{\left\|\mathrm{R}_U (\eta)-U\right\|_F}{\|\eta\|_F} \rightarrow 1,$ $\tfrac{\left\|\mathrm{R}_U (\eta)-U-\eta\right\|_F}{\|\eta\|_F^2} \rightarrow \frac{1}{4 c_0},$ as $\eta \rightarrow 0$.
\end{lemma}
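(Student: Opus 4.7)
My plan is to reduce both estimates to explicit $2\times 2$ matrix computations on the Grover plane $\mathcal{S}$, and then to verify elementary scalar inequalities. Write $\mathrm{R}_U^{(5)}(\eta) = V U$ with $V$ the five-factor product in \cref{eq-5factor-retraction}. Unitary invariance of the Frobenius norm gives $\|\mathrm{R}_U(\eta) - U\|_F = \|V - I\|_F$ and $\|\mathrm{R}_U(\eta) - U - \eta\|_F = \|V - I - \widetilde{\eta}\|_F$. Following the invariance arguments in the proof of \cref{thm-grad-in-W}, both $\widetilde{\eta}\in\mathcal{W}$ and $V - I$ are supported on $\mathcal{S}$: the former because $\psi_0|_{\mathcal{S}^\perp} = 0$, and the latter because each $e^{i\theta\psi_0}$ restricts to $I$ on $\mathcal{S}^\perp$ while $a_1 + (-\pi) + (-a_2) = 0$ collapses the product of the three $H$-exponentials into $I$ there. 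I therefore work with $V_0 := V|_{\mathcal{S}}$ and $\widetilde{\eta}_0 := \widetilde{\eta}|_{\mathcal{S}}$ in the orthonormal basis $\{u/\sqrt{q_0},\, v/\sqrt{1-q_0}\}$, where $H = \operatorname{diag}(1, 0)$ and $\psi_0$ takes its standard Grover $2\times 2$ form with $s := \sqrt{q_0(1-q_0)}$.

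For the first bound, I will apply the projector identity $e^{i\theta P} = I + (e^{i\theta}-1)P$, trace cyclicity, and the collapse $E_5 E_1|_{\mathcal{S}} = \operatorname{diag}(-1, 1)$ (from $a_1 - a_2 = \pi$) to obtain $\operatorname{Tr}(V_0) = 2\cos(R/2) + 2(2q_0 - 1)^2\bigl(1 - \cos(R/2)\bigr)$. Since $\det V_0 = 1$ (direct product computation), and using $1 - (2q_0 - 1)^2 = 2c_0^2$,
\begin{equation*}
\|V - I\|_F^2 \;=\; 4 - 2\operatorname{Tr}(V_0) \;=\; 8 c_0^2 \bigl(1 - \cos(R/2)\bigr).
\end{equation*}
Combined with $1 - \cos(R/2) \le R^2/8$, this proves $\|V - I\|_F \le R c_0 = \|\eta\|_F$; the asymptotic $1 - \cos(R/2) = R^2/8 + O(R^4)$ yields tightness as $R \to 0$.

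For the second bound, I will first derive a closed form for $V_0 - I_2$. Using $E_3 = I - 2H$, $E_4 = E_2^\dagger$, and $E_1 E_5|_{\mathcal{S}} = \operatorname{diag}(-1, 1)$, I rewrite $V_0 - I_2 = -2\bigl(H + E_1 E_2 H E_2^\dagger E_5\bigr)$. Expanding $E_2 H E_2^\dagger$ via the projector identity together with $\psi_0 H \psi_0 = q_0 \psi_0$, and then absorbing the diagonal factors $E_1, E_5$, a direct calculation collapses the $A$-dependence and yields
\begin{equation*}
\|V_0 - I_2 - \widetilde{\eta}_0\|_F^2 \;=\; c_0^2 \bigl[\delta^2 + \bigl(2\sin(R/2) - R\bigr)^2\bigr], \qquad \delta := 2\bigl(1 - \cos(R/2)\bigr).
\end{equation*}
The target $\|V - I - \widetilde{\eta}\|_F \le \|\eta\|_F^2/(4c_0) = c_0 R^2/4$ then reduces to the scalar inequality $\delta^2 + (2\sin(R/2) - R)^2 \le R^4/16$. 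Setting $u := R/4$ and using $\delta = 4\sin^2 u$, $2\sin(R/2) = 4\sin u\cos u$, this becomes $g(u) := u^4 - u^2 + u\sin(2u) - \sin^2 u \ge 0$, which I verify from $g(0) = 0$ together with $g'(u) = 2u\bigl[2u^2 - (1 - \cos 2u)\bigr] \ge 0$ (the bracket being nonnegative via $1 - \cos\theta \le \theta^2/2$). Tightness follows from $\delta^2 + (2\sin(R/2) - R)^2 = R^4/16 + O(R^6)$.

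The main obstacle is the algebraic cancellation that makes $\|V_0 - I_2 - \widetilde{\eta}_0\|_F^2$ independent of the phase $A = \operatorname{atan2}(y, x)$: the off-diagonal entries of $V_0 - I_2$ carry an explicit $e^{iA}$ factor whose imaginary and real parts must precisely align with the $e^{iA}$ structure of $\widetilde{\eta}_0$ so that only the phase-free quantities $\delta$ and $2\sin(R/2) - R$ survive. Once this cancellation is checked, everything else is the elementary one-variable analysis above. The analogous bounds for the 6- and 8-factor retractions follow from the same reduction strategy, with longer but equally mechanical $2\times 2$ products.
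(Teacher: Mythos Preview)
Your argument is correct but follows a genuinely different route from the paper. The paper never reduces to the Grover plane here: it works with the curve $\gamma(t)=\mathrm{R}_U(t\eta)$ directly in $\mathbb{C}^{N\times N}$, differentiates through the two $\psi_0$-exponentials (using $e^{-i\pi H}=I-2H$ so that $[e^{-i\pi H},\psi_0]=-2X_0$), and observes that $\|\gamma'(t)\|_F=R\|X_0\|_F=\|\eta\|_F$ and $\|\gamma''(t)\|_F=\tfrac{R^2}{2}\|[X_0,\psi_0]\|_F=\tfrac{R^2}{2}c_0$ are \emph{constant} in $t$. The two bounds then drop out of the integral Taylor remainders $\int_0^1\gamma'(t)\,dt$ and $\int_0^1(1-t)\gamma''(t)\,dt$ with no further computation, and tightness is immediate from the small-$t$ expansion.

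Your approach instead leverages the $\mathcal{S}$-support of $V-I$ and $\widetilde\eta$ to collapse everything to explicit $2\times2$ algebra, yielding the \emph{exact} identities $\|V-I\|_F^2=8c_0^2(1-\cos(R/2))$ and $\|V-I-\widetilde\eta\|_F^2=c_0^2[\delta^2+(2\sin(R/2)-R)^2]$, after which the bounds become the one-variable inequalities $1-\cos(R/2)\le R^2/8$ and $g(u)\ge0$. This is more computational but strictly more informative: you obtain closed forms rather than upper bounds, and the $A$-independence and the miraculous $4s^2+(1-2q_0)^2=1$ collapse are made explicit. The paper's curve method is shorter, avoids the Grover-plane reduction altogether, and generalizes more transparently to the other retractions (one simply re-differentiates), whereas your method would require redoing the $2\times2$ product each time.
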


\begin{proof}

Since $\eta \in \mathcal{W}U$, there exists a unique $\widetilde{\eta}\in\mathcal{W}$ such that $\eta=\widetilde{\eta}U$. We write its coordinates in the basis $\{X_0,Y_0\}$ as $\widetilde{\eta}=xX_0+yY_0,$  $x,y\in\mathbb{R},$ and set $R:=\sqrt{x^2+y^2}.$

Because $X_0$ and $Y_0$ are orthogonal under the Frobenius inner product and both have norm $\left\|X_0\right\|_F=\left\|Y_0\right\|_F=c_0$, we obtain
\begin{equation*}
    \|\eta\|_F=\|\widetilde{\eta}\|_F=c_0 \sqrt{x^2+y^2}=c_0 R .
\end{equation*}

We first establish the first-order bound. Consider the induced curve
\begin{equation}\label{eq-gamma-zero}
    \gamma (t): =\mathrm{R}_U (t \eta)=e^{i a_1 H} e^{i t b_1 \psi_0} \left (e^{-i \pi H} \right)e^{i t b_2 \psi_0} e^{-i a_2 H} U, \qquad t \in[0, 1].
\end{equation}
Notice that the middle term can be written as $e^{-i \pi H}=I+ (e^{-i \pi}-1)H=I-2 H$.
Differentiating $\gamma(t)$ with respect to $t$, only through the matrix exponentials involving $\psi_0$, gives
\begin{align}
    \gamma^{\prime} (t)&=e^{i a_1 H} e^{i t b_1\psi_0}\left (\frac{i R}{2}\left[e^{-i \pi H}, \psi_0\right]\right) e^{i t b_2 \psi_0} e^{-i a_2 H} U \notag\\
&=e^{i a_1 H} e^{i t b_1 \psi_0}\underbrace{\left (-i R X_0\right) }_{M: =}e^{i t b_2 \psi_0} e^{-i a_2 H} U, \label{eq-gamma-first}
\end{align}
where we used $\left[e^{-i \pi H}, \psi_0\right]=\left[I-2 H, \psi_0\right]=-2\left[H, \psi_0\right]=-2 X_0$.
Left and right multiplication by unitaries preserve the Frobenius norm, so $\left\|\gamma^{\prime} (t)\right\|_F=R\left\|X_0\right\|_F=\|\eta\|_F,$ $\forall t \in[0, 1].$ Using the integral form of the first-order Taylor expansion, $\gamma (1)=\gamma (0)+\int_0^1 \gamma^{\prime} (t) d t,$ we obtain
\begin{equation*}
    \left\|\mathrm{R}_U (\eta)-U\right\|_F=\left\|\int_0^1 \gamma^{\prime} (t) d t\right\|_F \leq \int_0^1\left\|\gamma^{\prime} (t)\right\|_F d t=\|\eta\|_F.
\end{equation*}
Because $\gamma^{\prime} (0)=\eta$, it follows that $\left\|\mathrm{R}_U (\eta)-U\right\|_F=\|\eta\|_F+o (\|\eta\|_F) \,(\eta \rightarrow 0).$ Thus, the constant 1 in the bound is globally tight.

We now establish the second-order bound. Differentiating $\gamma^{\prime} (t)$ in \cref{eq-gamma-first} once more, treating $M$ as $e^{-i\pi H}$ in \cref{eq-gamma-zero}, yields
\begin{align*}
    \gamma^{\prime \prime} (t) & =e^{i a_1 H} e^{i t b_1 \psi_0}\left (\frac{i R}{2}\left[M, \psi_0\right]\right) e^{i t b_2 \psi_0} e^{-i a_2 H} U \\
    & =e^{i a_1 H} e^{i t b_1 \psi_0}\left (\frac{R^2}{2}\left[X_0, \psi_0\right]\right) e^{i t b_2 \psi_0} e^{-i a_2 H} U.
\end{align*}
Therefore, $\left\|\gamma^{\prime \prime} (t)\right\|_F=\frac{R^2}{2}\left\|\left[X_0, \psi_0\right]\right\|_F=\frac{R^2}{2} c_0,$ $\forall t \in[0, 1].$ Using the integral form of the second-order Taylor expansion, $\gamma (1)=\gamma (0)+\gamma^{\prime} (0)+\int_0^1 (1-t) \gamma^{\prime \prime} (t) d t,$ we have
\begin{align*}
    \left\|\mathrm{R}_U (\eta)-U-\eta\right\|_F &=\left\| \int_0^1 (1-t) \gamma^{\prime \prime} (t) d t\right\|_F \\
    & \leq \int_0^1 (1-t)\left\|\gamma^{\prime \prime} (t)\right\|_F d t=\frac{1}{2} \cdot \frac{R^2}{2} c_0=\frac{1}{4 c_0}\|\eta\|_F^2.
\end{align*}
Moreover, since $\mathrm{R}_U (\eta)=U+\eta+\frac{1}{2} \gamma^{\prime \prime} (0)+o\left (\|\eta\|_F^2\right),$ and $\left\|\gamma^{\prime \prime} (0)\right\|_F=\frac{R^2}{2} c_0$,  it follows that $ \tfrac{\left\|\mathrm{R}_U (\eta)-U-\eta\right\|_F}{\|\eta\|_F^2}=\tfrac{1}{4 c_0}+o (1)\,(\eta \rightarrow 0),$ which establishes that the constant $1 / (4 c_0)$ is tight.
\end{proof}

We are now ready to compute the Riemannian Lipschitz constant of $f$.
Due to the intricacy of Riemannian geometry, extending Lipschitz continuity/constants to manifolds can be cumbersome, and several definitions appear in the literature. Here, we follow the approach recommended in \cite{boumal2019global}, which uses a pullback descent inequality. This method avoids additional geometric machinery and allows the analysis to be carried out entirely with Euclidean tools while still supporting the convergence results that follow.
The next \cref{prop-pullback-lip} establishes the Riemannian Lipschitz constant of the cost in \cref{pro-1} on the unitary manifold. Notably, such constant $L_{\mathrm{Rie}}$ scales as $\sqrt{N/M}$, increasing with the problem size $N = 2^n$.

\begin{proposition}[Riemannian Lipschitz constant]\label{prop-pullback-lip}
Consider the 5-factor retraction $\mathrm{R}_U \equiv \mathrm{R}_U^{(5)}$ defined in \cref{eq-5factor-retraction}. Let
\begin{equation}\label{eq-Lg}
L_{\mathrm{Rie}} :=2+\frac{N}{\sqrt{2 M (N-M)}} \in\mathcal{O} \left(\sqrt{\frac{N}{M}}\right).
\end{equation}
Then, for all $U \in \mathrm{U} (N)$, the pullbacks $f \circ \mathrm{R}_U: \mathcal{W}U \rightarrow \mathbb{R}$ satisfy
\begin{equation*}
\left|f\left (\mathrm{R}_U (\eta)\right)-\big[f (U)+\langle\operatorname{grad} f (U), \eta\rangle\big]\right| \leq \frac{L_{\mathrm{Rie}}}{2}\|\eta\|_F^2, \qquad \forall \eta \in \mathcal{W} U.
\end{equation*}
\end{proposition}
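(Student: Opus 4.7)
The plan is to pull $f$ back along $\mathrm{R}_U$ and invoke the Euclidean descent inequality furnished by \cref{lem-lip-nablaf}, then absorb the discrepancy between $\mathrm{R}_U(\eta)$ and the naive linearization $U+\eta$ using the two sharp bounds from \cref{lem-5factor-bounds}. This follows the pullback strategy recommended in \cite{boumal2019global} and avoids invoking any second-order Riemannian machinery such as connections or Hessians.

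First, since $\nabla f$ is $L_{\mathrm{Euc}}$-Lipschitz on $\mathbb{C}^{N\times N}$ with $L_{\mathrm{Euc}}=2$, the standard two-sided descent inequality applied to $\xi := \mathrm{R}_U(\eta) - U$ yields
\[
\bigl|f(\mathrm{R}_U(\eta)) - f(U) - \langle \nabla f(U),\, \mathrm{R}_U(\eta)-U\rangle\bigr| \;\le\; \|\mathrm{R}_U(\eta)-U\|_F^2.
\]
I then split $\langle \nabla f(U), \mathrm{R}_U(\eta)-U\rangle = \langle \nabla f(U),\eta\rangle + \langle \nabla f(U), \mathrm{R}_U(\eta)-U-\eta\rangle$. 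Because $\eta \in \mathcal{W}U \subseteq T_U$ and $\mathcal{P}_U$ in \cref{eq-gradf-formula} is an orthogonal projector with respect to the real inner product $\langle\cdot,\cdot\rangle = \Re\Tr(\cdot^\dagger\cdot)$, the first piece collapses to $\langle \operatorname{grad} f(U),\eta\rangle$, supplying exactly the linear term demanded by the claim.

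After a triangle inequality two error contributions remain. The first-order bound $\|\mathrm{R}_U(\eta)-U\|_F \le \|\eta\|_F$ from \cref{lem-5factor-bounds} converts the descent remainder into $\|\eta\|_F^2$, which accounts for the constant $2$ inside $L_{\mathrm{Rie}}$. For the cross term I apply Cauchy--Schwarz with $\|\nabla f(U)\|_F = 2\|HU\psi_0\|_F \le 2\|H\|_2 \|U\|_2 \|\psi_0\|_F = 2$ and the second-order bound $\|\mathrm{R}_U(\eta)-U-\eta\|_F \le \tfrac{1}{4c_0}\|\eta\|_F^2$; the contribution is $\tfrac{1}{2c_0}\|\eta\|_F^2$. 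Summing both error terms and substituting $c_0 = \sqrt{2M(N-M)}/N$ from \cref{eq-c0} gives exactly $\tfrac{L_{\mathrm{Rie}}}{2}\|\eta\|_F^2$, with $L_{\mathrm{Rie}} = 2 + N/\sqrt{2M(N-M)}$.

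The principal obstacle is conceptual rather than algebraic: since $\|\nabla f(U)\|_F \le 2$ uniformly in $N$, the entire $\sqrt{N/M}$ growth of $L_{\mathrm{Rie}}$ must be carried by the second-order retraction defect $\tfrac{1}{4c_0}$, where $c_0 = \|[X_0,\psi_0]\|_F$ encodes the geometry of the Grover plane. This is what makes the sharpness of the bounds in \cref{lem-5factor-bounds} non-negotiable for this proof: any loose upper bound there would inflate $L_{\mathrm{Rie}}$ and, through $\kappa = L_{\mathrm{Rie}}/\mu$, spoil the Grover-type quadratic speedup pursued in the remainder of \cref{sec-complexity}. A minor technical point is that the argument uses only the two inequalities of \cref{lem-5factor-bounds} and nothing specific to the five-factor representation, so the same constant $L_{\mathrm{Rie}}$ applies to every retraction in \cref{prop-grover-retr} whose first- and second-order constants match.
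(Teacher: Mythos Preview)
Your proposal is correct and follows essentially the same route as the paper: apply the Euclidean descent inequality from \cref{lem-lip-nablaf} at $W=\mathrm{R}_U(\eta)$, split $\langle\nabla f(U),\mathrm{R}_U(\eta)-U\rangle$ into the tangent part $\langle\operatorname{grad} f(U),\eta\rangle$ plus the defect $\langle\nabla f(U),\mathrm{R}_U(\eta)-U-\eta\rangle$, and then bound the two remaining error terms using $\|\nabla f(U)\|_F\le 2$ together with the first- and second-order bounds of \cref{lem-5factor-bounds}. The paper's proof is identical in structure and in the constants obtained.
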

\begin{proof}
By \cref{lem-lip-nablaf}, the Euclidean gradient $\nabla f(U)$ is Lipschitz continuous over the entire Euclidean space $\mathbb{C}^{N \times N}$ with constant $L_{\mathrm{Euc}}= 2$. Hence, by the well-known descent lemma (see, e.g., \cite[Lemma 1.2.3]{nesterov_introductory_2004}), for all $U, W \in \mathrm{U} (N) \subseteq \mathbb{C}^{N \times N}$,
\begin{equation*}
    |f (W) -\left[f (U) + \langle \nabla f (U), W - U \rangle\right]| \leq \frac{L_{\mathrm{Euc}}}{2} \|W - U\|_F^2=\|W - U\|_F^2.
\end{equation*}
In particular, this inequality holds for $W := \mathrm{R}_U (\eta)$ with $\eta \in \mathcal{W}U$. Let $\Delta: = \mathrm{R}_U (\eta) - U \in \mathbb{C}^{N \times N}$, then
\begin{equation}\label{eq-1721}
    |f (\mathrm{R}_U (\eta)) -[f (U) + \langle \nabla f (U), \Delta \rangle]| \leq  \|\Delta\|_F^2 .
\end{equation}
Since the Riemannian gradient $\operatorname{grad} f (U)$ is the orthogonal projection of $\nabla f (U)$ onto $T_U$, see \cref{eq-gradf-formula}, we have
\begin{align}\label{eq-1457}
    \langle \nabla f (U), \Delta \rangle
    = \langle \nabla f (U), \eta +(\Delta - \eta)  \rangle
    = \langle \operatorname{grad} f (U), \eta \rangle + \langle \nabla f (U), \Delta - \eta \rangle .
\end{align}
Substituting \eqref{eq-1457} into \eqref{eq-1721} yields
\begin{equation*}
    |\underbrace{f (\mathrm{R}_U (\eta)) -[f (U) + \langle \operatorname{grad} f (U), \eta \rangle ]}_{A:=}-\underbrace{\langle \nabla f (U), \Delta - \eta \rangle}_{B:=} |
    \leq \|\Delta\|_F^2 .
\end{equation*}
Then, we have $|A|
    \leq |A - B| + |B|
    \leq \|\Delta\|_F^2 + |\langle \nabla f (U), \Delta - \eta \rangle| .$ By the Cauchy--Schwarz inequality and \cref{lem-5factor-bounds}, we obtain
\begin{align*}
    \big|A\big|
    & \leq \|\mathrm{R}_U (\eta) - U\|_F^2
    + \|\nabla f (U)\|_F  \|\mathrm{R}_U (\eta) - U - \eta\| _F \\
    & \leq \|\eta\|_F^2
    +  \frac{2}{4 c_0}\|\eta\|_F^2=(1+\frac{1}{2 c_0})\|\eta\|_F^2,
\end{align*}
where we used the fact $\|\nabla f (U)\|_F =\left\|2 H U \psi_0\right\|_F \leq 2\|H\|_2\|U\|_2\left\|\psi_0\right\|_F=2$. Setting $L_{\mathrm{Rie}}: =2 (1+\frac{1}{2 c_0}) =2+\frac{1}{c_0}$ and combining this with \cref{eq-c0} complete the proof.
\end{proof}

\subsection{Complexity results}

Regarding complexity, researchers in quantum computing and those in optimization focus on different aspects. Optimization researchers typically fix the problem size $N$ and study how the number of iterations $T$ depends on the target accuracy $\varepsilon$. In contrast, quantum computing researchers usually fix the accuracy $\varepsilon$ and investigate how the iteration count scales with the problem size $N$, namely, the number of qubits $n$. This point should be kept in mind when interpreting the complexity results below.

Using \cref{prop-pullback-lip} and extending the standard nonconvex complexity technique to the manifold setting \cite{boumal2019global}, we first obtain the following baseline complexity.

\begin{theorem}[Baseline Complexity]\label{thm-bese-complexity}
Consider \cref{pro-1} in the Grover setting, where the initial success probability is $q_0 = f (U_0) = M/N$. Suppose we run \cref{alg-grover-ret} with the 5-factor retraction $\mathrm{R}_U \equiv \mathrm{R}_U^{(5)}$ defined in \cref{eq-5factor-retraction}, and choose a fixed step size $t_k= 1 / L_{\mathrm{Rie}}$, where $L_{\mathrm{Rie}} = \mathcal{O}(\sqrt{N/M})$ is given in \cref{eq-Lg}.
Then, for any $\varepsilon>0$,
\begin{equation*}
    T \geq \left\lceil\frac{2L_{\mathrm{Rie}}}{\varepsilon^2} \right\rceil \quad \Longrightarrow \quad \min _{0 \leq k \leq T-1}\left\|\operatorname{grad} f\left (U_k\right)\right\| \leq \varepsilon.
\end{equation*}
\end{theorem}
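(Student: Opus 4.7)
The plan is a direct application of the pullback descent inequality from \cref{prop-pullback-lip}, adapted to the maximization setting, followed by the standard nonconvex baseline argument via averaging. The structure is classical; the only non-routine ingredient has already been absorbed into \cref{prop-pullback-lip}, which restricts to $\eta\in\mathcal{W}U$ and gives the Lipschitz constant $L_{\mathrm{Rie}}=\mathcal O(\sqrt{N/M})$.

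First I would check admissibility of the search direction: setting $\eta_k := t_k \operatorname{grad} f(U_k)$, I need $\eta_k\in\mathcal{W}U_k$ so that \cref{prop-pullback-lip} applies. This follows from \cref{thm-grad-in-W}, which guarantees $\widetilde{\operatorname{grad}}f(U_k)=[H,\psi_k]\in\mathcal W$ for all iterates produced by \cref{alg-grover-ret}, so $\eta_k\in \mathcal{W}U_k$ inductively. Then \cref{prop-pullback-lip}, reading off the lower bound from its two-sided inequality, gives
\begin{equation*}
f(U_{k+1})\ \geq\ f(U_k) + t_k \|\operatorname{grad} f(U_k)\|_F^2 - \frac{L_{\mathrm{Rie}}}{2}\, t_k^2\, \|\operatorname{grad} f(U_k)\|_F^2.
\end{equation*}
Substituting the step size $t_k=1/L_{\mathrm{Rie}}$ yields the per-step ascent bound
\begin{equation*}
f(U_{k+1}) - f(U_k) \ \geq\ \frac{1}{2L_{\mathrm{Rie}}}\, \|\operatorname{grad} f(U_k)\|_F^2.
\end{equation*}

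Next I would telescope over $k=0,1,\ldots,T-1$ and use the global bound $f(U)\le 1$ on $\mathrm{U}(N)$ established in \cref{thm-optimality}, together with $f(U_0)=q_0\geq 0$, to obtain
\begin{equation*}
\frac{1}{2L_{\mathrm{Rie}}}\sum_{k=0}^{T-1}\|\operatorname{grad} f(U_k)\|_F^2 \ \leq\ f(U_T) - f(U_0) \ \leq\ 1 - q_0 \ \leq\ 1.
\end{equation*}
Bounding the minimum by the average then gives $\min_{0\le k\le T-1}\|\operatorname{grad} f(U_k)\|_F^2 \leq 2L_{\mathrm{Rie}}/T$, and choosing $T\ge \lceil 2L_{\mathrm{Rie}}/\varepsilon^2\rceil$ delivers the claim.

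There is no genuine obstacle in this proof; all technical weight sits in \cref{prop-pullback-lip}, i.e.\ in the tight first- and second-order bounds for the 5-factor retraction that yield the explicit $L_{\mathrm{Rie}}$. The only point requiring care is verifying that the descent inequality can actually be invoked at each step, which reduces to the invariance $\eta_k\in\mathcal{W}U_k$ granted by \cref{thm-grad-in-W}; once that is in hand, the rest is the textbook averaging argument.
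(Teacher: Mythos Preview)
Your proposal is correct and is exactly the approach the paper takes: the paper's proof simply cites \cite[Corollary 2.9]{boumal2019global}, and what you have written is precisely the content of that corollary unpacked (the same per-step ascent inequality you derive appears verbatim as \cref{eq-retr-ascent} in the proof of \cref{thm-best-complexity}). Your check that $\eta_k\in\mathcal{W}U_k$ via \cref{thm-grad-in-W} is the one detail specific to this setting, and you handle it correctly.
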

\begin{proof}
This theorem is a direct application of \cite[Corollary 2.9]{boumal2019global}.
\end{proof}

For a fixed accuracy $\varepsilon$, the above result recovers the well-known $\mathcal{O}(\sqrt{N / M})$ complexity of Grover's algorithm by noting $L_{\rm Rie} = \mathcal{O}(\sqrt{N/M})$.
We will now show that the dependence on $\varepsilon$ can be improved from $1 / \varepsilon^2$ to $\log (1 / \varepsilon)$.
This refinement relies on establishing an explicit Riemannian PL condition for \cref{pro-1}, as shown below.

\begin{proposition}[PL condition of \cref{pro-1}]
The Riemannian PL inequality holds locally for problem \eqref{eq-cost}, namely, for any $U \in \mathrm{U} (N)$ such that $f(U) \geq \frac{1}{2}$,
\begin{equation}\label{eq:pl}
\|\grad f(U)\|^2 \geq 1 - f(U). \end{equation}
\end{proposition}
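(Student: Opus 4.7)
The plan is to reduce the inequality to the already-computed norm of the commutator gradient. Writing $\psi_U := U \psi_0 U^\dagger$, I observe that by \cref{thm-optimality} the skew-Hermitian part of the Riemannian gradient equals $\widetilde{\grad} f(U) = [H,\psi_U]$, and since the metric on $T_U$ satisfies $\|\grad f(U)\|_U^2 = \Re \Tr(\widetilde{\grad} f(U)^\dagger\, \widetilde{\grad} f(U)) = \|[H,\psi_U]\|_F^2$ (the right factor $U$ cancels unitarily), it suffices to evaluate this Frobenius norm.

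Next I would apply \cref{lem-X0YO} with the pure state $\psi = \psi_U$, a legitimate substitution because that lemma assumes only that $\psi$ is a pure state. Setting $q := \langle \psi_U | H | \psi_U \rangle = f(U)$, the lemma immediately gives the exact identity
\begin{equation*}
\|\grad f(U)\|^2 = \|[H,\psi_U]\|_F^2 = 2q(1-q) = 2 f(U)\bigl(1 - f(U)\bigr).
\end{equation*}

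Finally, under the local assumption $f(U) \geq 1/2$, we have $2 f(U) \geq 1$, so multiplying both sides of this by the nonnegative quantity $1 - f(U) \geq 0$ yields
\begin{equation*}
\|\grad f(U)\|^2 = 2 f(U)\bigl(1 - f(U)\bigr) \geq 1 - f(U),
\end{equation*}
which is exactly the claimed PL inequality \eqref{eq:pl} with $\mu = 1/2$. I do not foresee a genuine obstacle here: essentially all of the content has already been absorbed into \cref{lem-X0YO}, and the remaining step is the elementary observation that $2q(1-q) \geq 1-q$ iff $q \geq 1/2$. The only point meriting a brief remark is that the assumption $f(U) \geq 1/2$ is what localizes the PL condition, reflecting the fact that the problem is globally nonconvex and admits the spurious critical level $f \equiv 0$ identified in \cref{thm-optimality}.
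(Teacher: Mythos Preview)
Your proposal is correct and follows essentially the same approach as the paper: invoke \cref{lem-X0YO} to obtain the exact identity $\|\grad f(U)\|^2 = 2 f(U)(1-f(U))$, then use $f(U)\ge 1/2$ to conclude. Your write-up is slightly more explicit about why the $U$-factor drops out of the norm and about the nonnegativity of $1-f(U)$, but the substance is identical.
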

\begin{proof}
It follows from \cref{lem-X0YO} that $\|\operatorname{grad} f (U)\|^2
    = 2 f(U) (1-f(U)).$
Applying $f(U)\geq \frac{1}{2}$ yields the desired result.
\end{proof}

The $\mu$-PL inequality \cref{eq:pl}, first introduced in \cite{polyak1963gradient,lojasiewicz1963topological}, has become a central tool for establishing global linear convergence of $1 -\kappa^{-1}$ with $\kappa = L/\mu$ (where $L$ could be $L_{\rm Euc}$ or $L_{\rm Rie}$) for gradient-type methods \cite{karimi2016linear,rebjock2024fast} beyond the strongly convex setting. In the theorem below, we use it to obtain the global linear convergence of RGA.

\begin{theorem}[Best complexity]\label{thm-best-complexity}
Consider the same assumptions as in \cref{thm-bese-complexity}.
Let $q_k:=f\left(U_k\right) \in[0,1]$.
Then, for any $0<\varepsilon \leq q_0$, the iterates satisfy $1- q_T \leq \varepsilon$ in at most
\[
T=\left\lceil 6L_{\mathrm{Rie}} \log \left(\frac{1}{\varepsilon}\right) \right\rceil .
\]
\end{theorem}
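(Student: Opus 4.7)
The plan is to combine the pullback descent inequality of \cref{prop-pullback-lip} with the closed-form identity $\|\grad f(U_k)\|_F^2 = 2q_k(1-q_k)$ from \cref{thm-grad-in-W}, and then split the analysis into a warm-up phase and a PL phase, since the local Riemannian PL inequality \cref{eq:pl} is vacuous when $q_k<\tfrac{1}{2}$.

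First I would substitute $\eta_k = t_k\,\grad f(U_k)\in \mathcal{W}U_k$ with $t_k=1/L_{\mathrm{Rie}}$ into the pullback descent inequality, which is applicable because the Grover-compatible retraction acts exactly on $\mathcal{W}U_k$. The usual bookkeeping yields the one-step gain
\[
  q_{k+1} \;\geq\; q_k + \frac{1}{2L_{\mathrm{Rie}}}\|\grad f(U_k)\|_F^2 \;=\; q_k + \frac{q_k(1-q_k)}{L_{\mathrm{Rie}}}.
\]
In particular this shows monotonicity $q_{k+1}\geq q_k$, so I may treat the two regimes $q_k<\tfrac{1}{2}$ and $q_k\geq \tfrac{1}{2}$ separately without worrying about iterates crossing back and forth across $\tfrac{1}{2}$.

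Next I would handle the warm-up phase $q_k<\tfrac{1}{2}$ by using $1-q_k>\tfrac{1}{2}$ to reduce the above inequality to the geometric growth $q_{k+1}\geq q_k\bigl(1+\tfrac{1}{2L_{\mathrm{Rie}}}\bigr)$; a logarithm estimate (with $\log(1+x)\geq x/2$ for $x\in[0,1]$) then bounds the length of this phase by about $4L_{\mathrm{Rie}}\log(1/(2q_0))$ iterations. For the PL phase $q_k\geq \tfrac{1}{2}$, inequality \cref{eq:pl} combined with the same descent inequality yields the contraction
\[
  1-q_{k+1} \;\leq\; \Bigl(1-\frac{1}{2L_{\mathrm{Rie}}}\Bigr)\bigl(1-q_k\bigr),
\]
so telescoping together with $(1-x)^T\leq e^{-xT}$ gives $1-q_T\leq \varepsilon$ after roughly another $2L_{\mathrm{Rie}}\log(1/\varepsilon)$ iterations. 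The standing hypothesis $\varepsilon\leq q_0$ then lets me absorb $\log(1/(2q_0))$ into $\log(1/\varepsilon)$, and summing the two contributions gives an upper bound of $6L_{\mathrm{Rie}}\log(1/\varepsilon)$.

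The main obstacle is precisely the warm-up phase: the PL inequality of \cref{eq:pl} supplies no contraction when $q_k<\tfrac{1}{2}$, so the clean $2L_{\mathrm{Rie}}\log(1/\varepsilon)$ rate one would get purely from PL does not apply from the start. The rescue is that the stronger identity $\|\grad f\|_F^2=2q_k(1-q_k)$ forces geometric progress in $q_k$ (rather than in $1-q_k$) throughout the warm-up phase, and the assumption $\varepsilon\leq q_0$ is what makes the warm-up and PL contributions of the same $\mathcal{O}\bigl(L_{\mathrm{Rie}}\log(1/\varepsilon)\bigr)$ order, producing the stated constant $6$.
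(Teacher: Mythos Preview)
Your proposal is correct and follows essentially the same argument as the paper: the same one-step gain from \cref{prop-pullback-lip} combined with $\|\grad f(U_k)\|_F^2=2q_k(1-q_k)$, the same two-phase split (geometric growth of $q_k$ while $q_k<\tfrac12$, then PL contraction of $1-q_k$ once $q_k\geq\tfrac12$), and the same final absorption of $\log(1/(2q_0))$ into $\log(1/\varepsilon)$ via $\varepsilon\leq q_0$. The only cosmetic difference is that the paper tracks the additive $+1$ from the ceiling in Phase~I explicitly and uses $L_{\mathrm{Rie}}\geq 2$ to swallow it, whereas you leave this implicit.
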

\begin{proof}
Let $\delta_k: = 1 - q_k$. For each iterate $U_k$, by \cref{lem-X0YO} (or claim (2) of \cref{thm-grad-in-W}), we have $  \|\operatorname{grad} f (U_k) \|^2
    = 2 q_k (1 - q_k).$ By \cref{prop-pullback-lip}, for each $k$, the pullbacks $f \circ \mathrm{R}_{U_k}: \mathcal{W}U_k \rightarrow \mathbb{R}$ satisfy
\begin{align}
f\left (U_{k+1}\right)  = f\left (\mathrm{R}_{U_k} (\eta_k)\right) &\geq f (U_k)+\langle\operatorname{grad} f (U_k), \eta_k\rangle - \frac{L_{\mathrm{Rie}}}{2}\|\eta_k\|^2  \notag  \\
    &\geq f\left (U_k\right)+
(t_k-\frac{L_{\mathrm{Rie}}}{2} t_k^2)\left\|\operatorname{grad} f\left (U_k\right)\right\|^2 \notag  \\
& = f\left (U_k\right)+\frac{1}{2 L_{\mathrm{Rie}}}\left\|\operatorname{grad} f\left (U_k\right)\right\|^2,  \label{eq-retr-ascent}
\end{align}
where we used $\eta_k=t_k \operatorname{grad} f\left (U_k\right)$ together with the step size choice $t_k = 1/L_{\mathrm{Rie}}$ from \cref{alg-grover-ret}. In terms of $q_k = f (U_k)$, \cref{eq-retr-ascent} can be written as
\begin{equation}\label{eq-qk-0}
    q_{k+1}
     \geq
    q_k
    + \frac{1}{2 L_{\mathrm{Rie}}} \cdot 2 q_k (1 - q_k)
    = q_k + \frac{q_k (1 - q_k)}{L_{\mathrm{Rie}}}.
\end{equation}
In particular, the sequence $\{q_k\}_{k \geq 0}$ is nondecreasing.

\textbf{Phase I: From $q_0 = M/N$ to $q_k \geq \tfrac{1}{2}$.} Whenever $q_k \leq \tfrac{1}{2}$, we have $1 - q_k \geq \tfrac{1}{2}$, and thus \cref{eq-qk-0} implies
\begin{equation}\label{eq-half-1}
    q_{k+1}
     \geq
    q_k + \frac{q_k}{2 L_{\mathrm{Rie}}}
    = q_k \left (1 + \frac{1}{2 L_{\mathrm{Rie}}}\right).
\end{equation}
Let $K: = \min\left\{ k \geq 0: q_k \geq \tfrac{1}{2} \right\}$ be the first index at which the sequence reaches the level $\frac{1}{2}$. We construct an explicit upper bound for $K$. Define
\begin{equation*}
K_0
: = \left\lceil
    \frac{\log \left (\frac{1}{2 q_0}\right)}
    {\log \left (1 + \frac{1}{2 L_{\mathrm{Rie}}}\right)}
    \right\rceil \leq
\frac{\log \left (\frac{1}{2 q_0}\right)}
    {\log \left (1 + \frac{1}{2 L_{\mathrm{Rie}}}\right)}
+ 1.
\end{equation*}
We now verify that $K \leq K_0$. If some $k<K_0$ already satisfies $q_k \geq \frac{1}{2}$, then by definition $K \leq k<K_0$.
Otherwise, if $q_j<\frac{1}{2}$ holds for all $j<K_0$, then \cref{eq-half-1} applies to every such $j$, and induction yields
\begin{equation*}
    q_{K_0}
    \geq
    q_0 \left (1 + \frac{1}{2 L_{\mathrm{Rie}}}\right)^{K_0}
    \geq
    q_0 \cdot \frac{1}{2 q_0}
    = \frac{1}{2},
\end{equation*}
which again implies $K \leq K_0$.

Using the inequality $\log (1+x) \geq x / 2$ for $0<x \leq 1$, and substituting $x:=1 /\left(2 L_{\mathrm{Rie}}\right) \leq 1 / 4$, we obtain $\log \left(1 + \frac{1}{2 L_{\mathrm{Rie}}}\right)
\geq \frac{1}{4 L_{\mathrm{Rie}}}
\Rightarrow
\frac{1}{\log \left(1 + \frac{1}{2 L_{\mathrm{Rie}}}\right)}
\leq 4 L_{\mathrm{Rie}}.$ Hence,
\begin{equation}\label{eq-1006}
  K \leq K_0 \leq 4 L_{\mathrm{Rie}} \cdot \log \left (\frac{1}{2 q_0}\right) + 1.
\end{equation}
Therefore, the sequence $\left\{q_k\right\}$ reaches the level $q_k \geq \tfrac{1}{2}$ in at most $4 L_{\mathrm{Rie}} \log(1/(2q_0)) + 1$ iterations. This completes the analysis of Phase~I.

\textbf{Phase II: Linear convergence by the PL property in the region $q_k \geq \tfrac{1}{2}$.}
From Phase I, we have that for all $k \geq K$, $q_k = f(U_k) \geq \frac{1}{2}$.
Using \cref{eq-retr-ascent} and the PL inequality \cref{eq:pl}, it holds
\begin{equation*}
    \delta_{k+1} = 1 -f(U_{k+1}) \leq 1 - f(U_k) - \frac{1}{2L_{\rm Rie}} \| \grad f(U_k) \|^2
    \leq \left (1 - \frac{1}{2 L_{\mathrm{Rie}}}\right) \delta_k
\end{equation*}
for all $k \geq K$. By induction, this yields
\begin{equation*}
    \delta_k
    \leq \delta_K \left (1 - \frac{1}{2 L_{\mathrm{Rie}}}\right)^{k - K}\leq \left (1 - \frac{1}{2 L_{\mathrm{Rie}}}\right)^{k - K},
    \qquad  \forall k \geq K.
\end{equation*}
To hope $\delta_{T} \leq \varepsilon$ for some iteration $T \geq K$, it is sufficient to require $\left (1 - \frac{1}{2 L_{\mathrm{Rie}}}\right)^{T - K}
    \leq \varepsilon.$ Taking logarithms on both sides yields $(T - K)\log \left (1 - \frac{1}{2 L_{\mathrm{Rie}}}\right)
    \leq \log \left (\varepsilon\right),$ i.e.,
\begin{equation}\label{eq-T-K}
     T - K \geq \frac{\log \left (\varepsilon\right)}{\log \left (1 - \frac{1}{2 L_{\mathrm{Rie}}}\right)} = \frac{\log \left (\frac{1}{\varepsilon}\right)}{-\log \left (1 - \frac{1}{2 L_{\mathrm{Rie}}}\right)},
\end{equation}
since $\log \left (1 - \frac{1}{2L_{\mathrm{Rie}}}\right) < 0$. Since $\log (1-x) \leq-x$ for all $0<x<1$, we have $\frac{1}{-\log (1-x)} \leq \frac{1}{x}$.  Substituting $x:=1 /\left(2 L_{\mathrm{Rie}}\right) \leq 1 / 4$, we obtain $2 L_{\mathrm{Rie}}
    \geq \tfrac{1}{-\log \left (1 - \frac{1}{2 L_{\mathrm{Rie}}}\right)}.$ Thus a sufficient condition for \cref{eq-T-K} is
\begin{equation}\label{eq-1007}
    T - K
    \geq 2 L_{\mathrm{Rie}} \cdot \log\left(\frac{1}{\varepsilon}\right).
\end{equation}
In other words, if we choose $T \geq K + 2 L_{\mathrm{Rie}} \log\left(\frac{1}{\varepsilon}\right),$ then $\delta_{T} \leq \varepsilon.$

\textbf{Phase III: A unified complexity result.} Combining the bounds \cref{eq-1006,eq-1007} obtained in Phases~I and II, we obtain the unified sufficient condition
\begin{equation}\label{eq-4L2}
    T \geq 4 L_{\mathrm{Rie}} \log \left (\frac{1}{2 q_0}\right)+1+2 L_{\mathrm{Rie}} \log  \left (\frac{1}{\varepsilon}\right).
\end{equation}
We now show that the simpler requirement
\begin{equation}\label{eq-6Llog}
    T \geq 6 L_{\mathrm{Rie}} \log  \left (\frac{1}{\varepsilon}\right)
\end{equation}
implies \cref{eq-4L2} under the assumption $0<\varepsilon \leq q_0.$ Since $\frac{2 q_0}{\varepsilon} \geq 2$ and $L_{\mathrm{Rie}} \geq 2$, we have
\begin{align*}
    4 L_{\mathrm{Rie}} \log  \left (\frac{1}{\varepsilon}\right)-4 L_{\mathrm{Rie}} \log \left (\frac{1}{2 q_0}\right)
    =4 L_{\mathrm{Rie}} \log \left (\frac{2 q_0}{\varepsilon}\right) \geq 4 L_{\mathrm{Rie}} \log 2 \geq 8 \log 2>1 .
\end{align*}
Consequently, $4 L_{\mathrm{Rie}} \log \left (\frac{1}{\varepsilon}\right)+2 L_{\mathrm{Rie}} \log \left (\frac{1}{\varepsilon}\right) \geq \left (4 L_{\mathrm{Rie}} \log \frac{1}{2 q_0}+1\right)+2 L_{\mathrm{Rie}} \log \left (\frac{1}{\varepsilon}\right) .$
Therefore, whenever the condition \cref{eq-6Llog} holds, the more detailed sufficient condition \cref{eq-4L2} is automatically satisfied. Hence, under \cref{eq-6Llog}, we indeed have $\delta_{T} \leq \varepsilon.$
\end{proof}

It is worth noting that the obtained complexity $\mathcal{O}\bigl(\sqrt{N/M} \log(1/\varepsilon)\bigr)$ not only shows that the Grover-compatible Riemannian gradient ascent achieves a linear convergence rate for this nonconvex problem, but also matches the optimal $\sqrt{N/M}$ scaling of Grover-type algorithms.

We now clarify the distinction between two different types of conclusions. The invariance of the Grover plane $\mathcal{S}$ is a geometric property of the dynamics (\cref{thm-2d-dynamics,thm-grad-in-W}): once the current state lies in $\mathcal{S}$, the iterates generated by a Grover-compatible retraction remain in $\mathcal{S}$. This provides a two-dimensional description of the induced dynamics, in the same spirit as the exact analysis of original Grover's algorithm and its variants. In the original setting, this exact dynamics allows one to directly analyze the fixed angle choices $\alpha_k$ and $\beta_k$ (i.e., $\pi$), which yields the quadratic speedup. In contrast, the results in this section are optimization guarantees for the Riemannian gradient ascent method, where $\alpha_k$ and $\beta_k$ incorporate adaptive gradient information. These guarantees rely on the pullback Lipschitz constant and the PL inequality, and yield the same quadratic speedup from a different angle selection mechanism. Therefore, \cref{thm-bese-complexity,thm-best-complexity} should be understood as convergence guarantees from the optimization perspective, rather than as replacements for the exact global rotation analysis of original Grover's algorithm and its variants.

\section{Numerical simulations}\label{sec-experiments}

In this section, we validate the theoretical findings using numerical simulations.
Because \cref{alg-grover-ret} is classically simulatable due to \cref{thm-simulability}, all experiments are implemented directly using NumPy. All simulations were performed on a MacBook Pro (2021) with an Apple M1 Pro processor, 16 GB RAM. The source code is publicly available.\footnote{\texttt{\url{https://github.com/GALVINLAI/RGA_Grover}}}

Let $N=2^n$ for an $n$-qubit system, and unless otherwise noted we always take size of marked items $M=1$. The progress of the algorithm is measured by the cost value $q_k=f(U_k)=\operatorname{Tr}\left(H\psi_k \right)$ which corresponds to the success probability of identifying the marked item. Given an accuracy $\varepsilon>0$, the algorithm terminates once $\left|1-q_k\right|<\varepsilon$.

\paragraph{Representative optimization trajectory}
We begin by illustrating a typical optimization process. Set  $n=6$ and $\varepsilon=10^{-4}$, we apply the 5-factor Grover-compatible retraction with a fixed step size $t_k=1 / L_{\mathrm{Rie}}$; see \cref{eq-Lg}. \cref{fig:evolution_qk_fid_gradnorm} reports the evolution of the cost value $q_k$, and the norm of the Riemannian gradient  $\left\|\operatorname{grad}f(U_k)\right\|=\left\|[H,\psi_k]\right\|$. The curve of $q_k$ exhibits the smooth and monotone improvement characteristic of typical gradient ascent methods.

\begin{figure}
    \centering
    \includegraphics[width=0.6\linewidth]{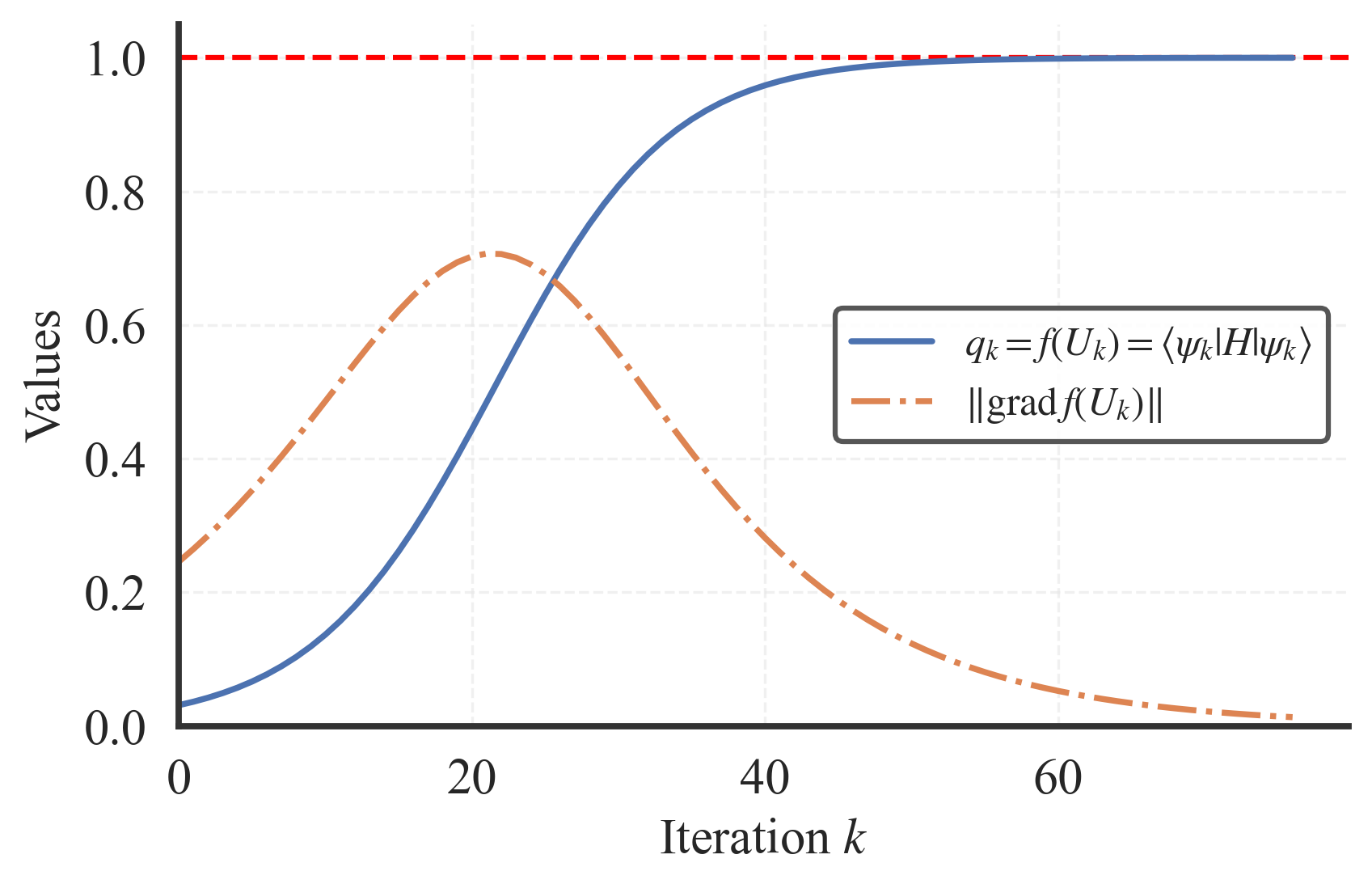}
    \caption{Representative optimization trajectory for $n = 6$ qubits using the 5-factor retraction with fixed step size $t_k = 1 / L_{\mathrm{Rie}}$. The cost value $q_k$ increases smoothly toward its maximal value $1$, illustrating the characteristic behavior of a typical gradient ascent method.}
    \label{fig:evolution_qk_fid_gradnorm}
\end{figure}

\paragraph{Scaling with problem size}
Next, we investigate how the iteration complexity scales with the problem size $N$.
We fix $\varepsilon=10^{-4}$ and vary the number of qubits from $n=2$ to $n=25$, so that $\sqrt{N}$ ranges from $2$ to $\sqrt{2^{25}} \approx 5792$.
In \cref{alg-grover-ret}, we consider two strategies for step sizes.
The first is the fixed $t_k = 1 / L_{\mathrm{Rie}}$, applied together with the 5-factor retraction.
The second is \textit{exact line search} (els), which requires solving a one-dimensional subproblem to determine the optimal step, i.e.,
\[
t_k := \arg\max_{t \geq 0}  f\bigl(\mathrm{R}_{U_k}(t\grad f(U_k))\bigr).
\]
Note that, thanks to the classical simulability stated in \cref{thm-simulability}, this line search computation can be carried out efficiently on a classical computer.
Recall the explicit process $F  (x_k, y_k, q_k; t_k) =(x_{k+1}, y_{k+1}, q_{k+1})$
described in \cref{eq-F}.
For fixed values of $(x_k, y_k, q_k)$, we employ a differential evolution solver to find $t_k$ that maximizes the resulting value of $q_{k+1}$, and then return the corresponding updated gradient coordinates $(x_{k+1}, y_{k+1})$.
The exact line search strategy does not require evaluating the Lipschitz constant $L_{\mathrm{Rie}}$. Thus, it can be conveniently applied to any of the three Grover-compatible retractions introduced in \cref{prop-grover-retr}: the 5-factor, 6-factor, and 8-factor retractions.

\cref{fig:iteration_comparison_total_H_calls} shows that, in all cases, the total number of $H$-exp (namely $e^{i \theta H}$) calls, equal to the iteration count $T$ multiplied by $2$, $3$, or $4$ respectively, exhibits a linear dependence on $\sqrt{N}$.
The slopes $s$ in the legends quantify this linear scaling.
\cref{fig:iteration_comparison_total_H_calls} (left) demonstrates that exact line search significantly improves efficiency, reducing the constant factor hidden in the complexity $T=\mathcal{O}(\sqrt{N})$, even though its theoretical analysis is more involved than that of the fixed $1 / L_{\mathrm{Rie}}$ strategy.
\cref{fig:iteration_comparison_total_H_calls} (right) zooms in on the three retractions under exact line search.
Interestingly, although the 6-factor and 8-factor retractions require more than two $H$-exp calls per iteration, they achieve fewer iterations overall, and thus yield a smaller total number of $H$-exp calls than the 5-factor retraction.

\begin{figure}
    \centering
    \includegraphics[width=0.85\linewidth]{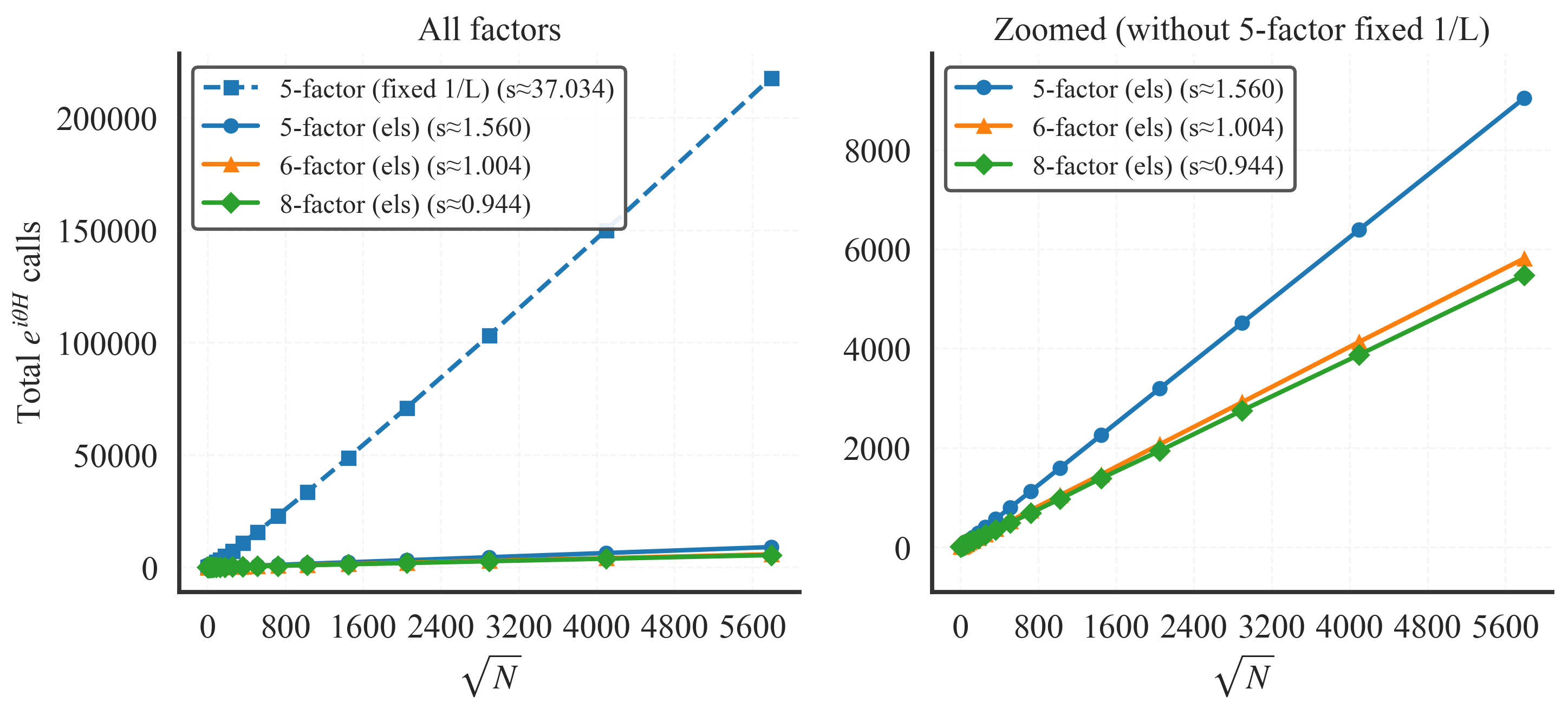}
    \caption{
Scaling of the total number of $H$-exp calls with problem size $\sqrt{N}$ for different Grover-compatible retractions and step size strategies.
Left: total $H$-exp calls for the 5-, 6-, and 8-factor retractions under fixed step size $t_k = 1 / L_{\mathrm{Rie}}$ and exact line search (els).
The fitted slopes $s \approx \text{const}$, shown in the legend, confirm the linear dependence $T = \mathcal{O}(\sqrt{N})$. Right: zoom on the exact line search, highlighting that the 6- and 8-factor retractions, despite requiring more $H$-exp evaluations per iteration, achieve fewer iterations overall and hence a smaller total number of $H$-exp calls than the 5-factor retraction.}
    \label{fig:iteration_comparison_total_H_calls}
\end{figure}

\paragraph{Scaling with accuracy}
Finally, we evaluate how the iteration complexity scales with the target accuracy $\varepsilon$.
We fix $n=15$ qubits and vary $\varepsilon$ from $10^{-2}$ to $10^{-12}$ in powers of 10.
We continue to use the same combination of retractions and step size strategies as in the previous experiments.
\cref{fig:H_calls_vs_tol_4methods} shows that, for all methods, the total number of applications of the $H$-exp operator grows linearly with $\log _{10}(1 / \varepsilon)$, in agreement with the theoretical bound. Over the range $\varepsilon \in\left[10^{-2}, 10^{-12}\right]$, the exact line search variants exhibit only a very mild increase (5-factor (els): $261 \rightarrow 290$ calls; 6-factor (els): $178 \rightarrow 210$; 8-factor (els): $162 \rightarrow 192$), whereas the fixed-step 5-factor retraction shows a much steeper growth (from 3942 to 9888 calls). This indicates that, even at high accuracy, employing exact line search can substantially improve efficiency.
Taken together, \cref{fig:iteration_comparison_total_H_calls,fig:H_calls_vs_tol_4methods} verify the complexity $\mathcal{O}(\sqrt{N} \log \left(1/\varepsilon\right))$ stated in \cref{thm-best-complexity}, and indicate that this bound is tight in practice.

\begin{figure}
    \centering
    \includegraphics[width=0.85\linewidth]{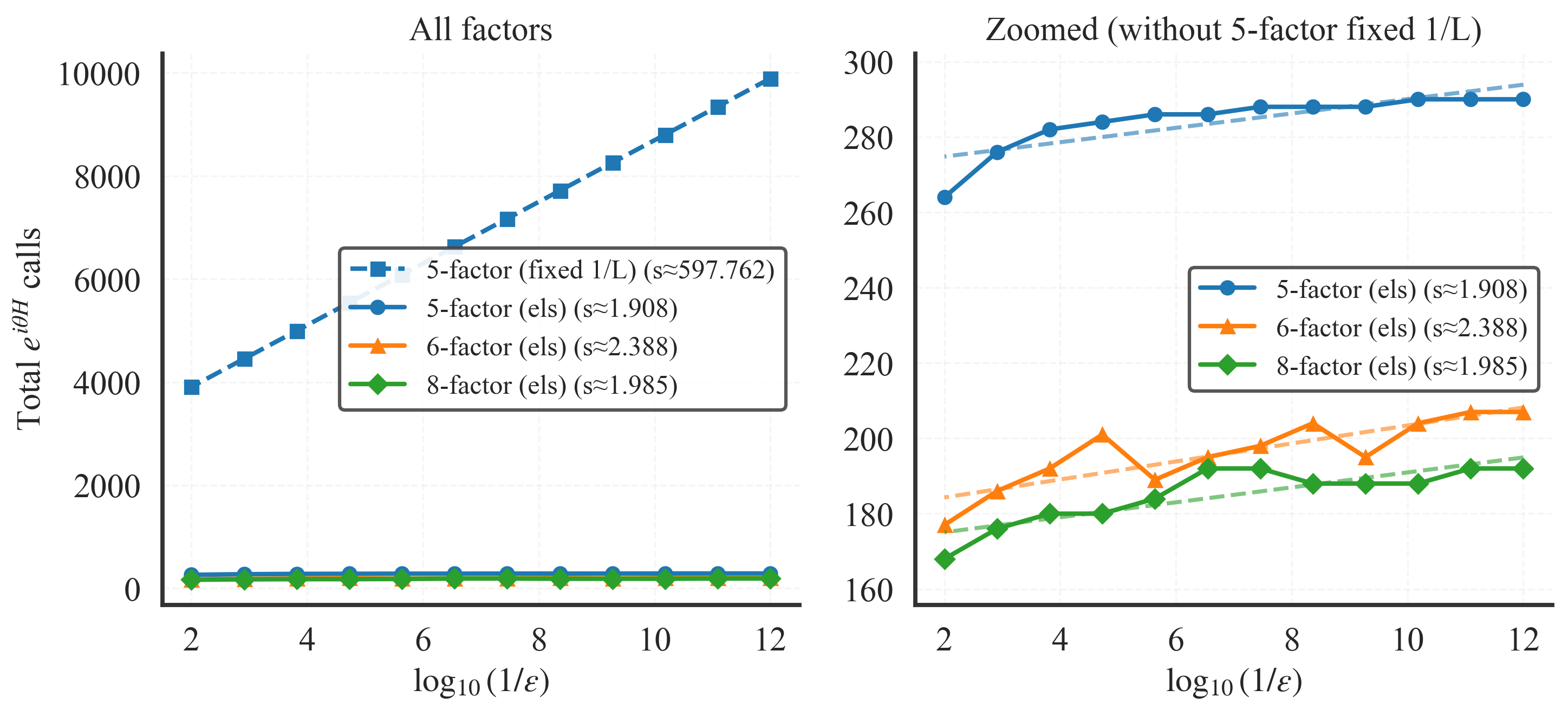}
    \caption{
Scaling of the total number of $H$-exp calls with the target accuracy $\varepsilon$ for different Grover-compatible retractions and step size strategies at fixed problem size $N = 2^{15}$.
Left: total $H$-exp calls for the 5-factor retraction with fixed step size $t_k = 1 / L_{\mathrm{Rie}}$ and for the 5-, 6-, and 8-factor retractions under exact line search (els), plotted against $\log_{10}(1/\varepsilon)$.
The fitted slopes $s \approx \text{const}$, shown in the legend, quantify the effective dependence on $\log_{10}(1/\varepsilon)$. Right: zoom on the three exact line search schemes, together with their fitted lines.
The iteration count $T$ almost grows linearly with $\log_{10}(1/\varepsilon)$.
}
\label{fig:H_calls_vs_tol_4methods}
\end{figure}

\section{Conclusions}\label{sec-discussion}

This work presents a new viewpoint on Grover's search algorithm through the lens of manifold optimization. Our key contribution is the Grover-compatible retraction, a physically implementable class of retractions. We further show that the optimization dynamics remain confined to a two-dimensional invariant subspace, rendering the algorithm classically simulatable. Using standard nonconvex optimization analysis, our approach recovers the optimal $\mathcal{O}(\sqrt{N})$ quadratic speedup, providing an optimization-theoretic explanation for this quantum advantage.
From an optimization standpoint, our Grover-compatible RGA method maps each mathematical iteration directly to a quantum circuit: each update appends gates generated by the current Riemannian gradient, with the rotation angle given by the step size. This provides a direct pathway for optimization theory to inform quantum circuit design on contemporary hardware.
We have several directions for future research:

\begin{itemize}

\item A systematic study of how different retractions affect performance. Our experiments show that the choice of retraction directly influences the pullback Lipschitz constant $L_{\text {Rie}}$, which determines the allowable step size and the resulting complexity bound. An optimal retraction would minimize this constant, yielding the sharpest guarantees and the fastest provable convergence.

\item Extending this framework to more advanced optimization methods. Although we focused on fixed-step size RGA (which depends on $L_{\text {Rie}}$), this viewpoint naturally invites the use of more sophisticated techniques, such as adaptive step sizes (e.g., Barzilai-Borwein \cite{fletcher2005barzilai,iannazzo2018riemannian}) and even second-order schemes such as Riemannian Newton methods.

\end{itemize}

\section*{Acknowledgment}
The authors used AI-based tools for polishing the English language.
Some proof strategies were interactively explored with large language models; all proofs were ultimately derived, checked, and written by the authors.
A detailed account of the human–AI collaboration is provided in the companion paper \cite{li2025groverAI}, with all prompts available at \texttt{\url{https://github.com/optsuite/MathResearchPrompts}}.

\bibliographystyle{siamplain}
\bibliography{references}

\end{document}


\maketitle

\section{Construction of Grover-compatible retractions}

This section gives a self-contained derivation of the Grover-compatible retractions used in the main text. The main idea is that constructing such a retraction is a first-order synthesis problem: given a tangent vector in a two-dimensional subspace, we construct a finite product of implementable Grover-type gates whose initial velocity equals that tangent vector.

\subsection{Basic setting}

Let \(H\in\mathbb C^{N\times N}\) and \(\psi_0\in\mathbb C^{N\times N}\) be Hermitian projectors, i.e., $H=H^\dagger = H^2,$ $\psi_0=\psi_0^\dagger = \psi_0^2.$ In the quantum search setting, \(H\) represents the marked-state projector and \(\psi_0=|\psi_0\rangle\langle\psi_0|\) is the rank-one projector onto the initial state. We work on the unitary manifold,
\[
    \mathrm{U} (N)=\bigl\{ U\in \C^{N\times N} \big| U^{\dagger}U=I_N\bigr\}.
\]
Its tangent space at \(U\in \mathrm U(N)\) is $T_U=\mathfrak u(N)U,$ where
\[
\mathfrak u(N)=\{\Omega\in\mathbb C^{N\times N}:\Omega^\dagger=-\Omega\}
\]
is the space of skew-Hermitian matrices. Define
\[
  X_0:=[H,\psi_0],\qquad
  Y_0:=i[H,X_0],\qquad
  \mathcal W:=\operatorname{span}_{\mathbb R}\{X_0,Y_0\}
  \subseteq \mathfrak u(N).
\]
Both \(X_0\) and \(Y_0\) are skew-Hermitian, so \(\mathcal W\) is a real two-dimensional subspace of \(\mathfrak u(N)\). The restricted tangent subspace of interest at \(U\) is \(\mathcal WU\subseteq T_U\). For any \(\eta\in\mathcal WU\), there are unique real coefficients \(x,y\) such that the skew-Hermitian component of $\eta$ satisfies
\[
  \widetilde\eta:=\eta U^\dagger=xX_0+yY_0\in\mathcal W.
\]
Thus, we identify \(\mathcal W\cong\mathcal WU\cong\mathbb R^2\) via the coordinate pair \((x,y)\).

We now recall the definition of Grover-compatible retractions as in the main text (see \cref{defn-grover-retraction}). Let \(U\in \mathrm U(N)\). A mapping
\[
 \mathrm{R}_U:\mathcal WU\rightarrow \mathrm U(N)
\]
is called a Grover-compatible retraction if it satisfies the following two
conditions.

\begin{enumerate}
    \item First, \(\mathrm{R}_U\) is a retraction on the restricted tangent subspace
\(\mathcal WU\). Namely, for every \(\eta\in \mathcal WU\), the curve
\(\gamma(t):=\mathrm{R}_U(t\eta)\), \(t\geq 0\), satisfies
\[
  \gamma(0)=U,\qquad \dot\gamma(0)=\eta.
\]

\item Second, the retraction is implementable by Grover-type gates. More precisely,
if \(\eta U^\dagger=xX_0+yY_0\), then
\[
 \mathrm{R}_U(t\eta)=V(t;x,y)U,
\]
where \(V(t;x,y)\) is a finite product of gates generated only by \(H\) and \(\psi_0\):
\begin{equation}\label{eq-Vtxy}
    V(t;x,y)
    =
    \prod_{\ell=1}^K
    e^{i\theta_\ell^{(1)}(t;x,y)H}
    e^{i\theta_\ell^{(2)}(t;x,y)\psi_0},
\end{equation}
for some finite integer \(K\) and real-valued parameter functions \(\theta_\ell^{(1)}\) and \(\theta_\ell^{(2)}\).
\end{enumerate}

The above definition reduces the construction of Grover-compatible retractions to a first-order synthesis problem. Indeed, \(\gamma(0)=U\) is equivalent to \(V(0;x,y)=I\). Moreover,
\[
  \left.\frac{d}{dt}\mathrm{R}_U(t\eta)\right|_{t=0}
  =
  \dot V(0;x,y)U.
\]
Hence the initial velocity condition \( \dot\gamma(0)=\eta\) is equivalent to $\dot V(0;x,y)=xX_0+yY_0.$ Therefore, it suffices to construct a finite product \(V(t;x,y)\) of \(H\)- and \(\psi_0\)-generated gates satisfying
\[
  V(0;x,y)=I,\qquad \dot V(0;x,y)=xX_0+yY_0.
\]
This is the first-order synthesis solved below.

\subsection{The elementary building block}

For fixed real numbers \(a\) and \(b\), consider the following one-parameter curve on the unitary manifold:
\begin{equation}\label{23213212}
  C_{a,b}(t):=e^{iaH}e^{itb\psi_0}e^{-iaH}.
\end{equation}
Here \(t\) is the curve parameter. In this form, the two \(H\)-generated conjugation gates are independent of \(t\); the parameter \(t\) appears only in the middle \(\psi_0\)-generated gate. Later, we will use this ``building block'' as the basic component for constructing the 5-factor, 6-factor, and 8-factor Grover-compatible retractions in the main text.

The usefulness of this block comes from its first-order behavior at \(t=0\). It starts from the identity,
\[
  C_{a,b}(0)=I,
\]
and its initial velocity is
\begin{equation}\label{111dfs}
  \left.\frac{d}{dt}C_{a,b}(t)\right|_{t=0}
  =
  b\,i e^{iaH}\psi_0e^{-iaH}
  =
  b\bigl(i\psi_0-\sin(a)X_0+(\cos(a)-1)Y_0\bigr).
\end{equation}
A proof of this identity is given below. This formula reveals the basic construction strategy. A single block \(C_{a,b}(t)\) produces first-order components in the desired \(X_0\) and \(Y_0\) directions, but it also produces an unwanted \(i\psi_0\) component. Therefore, we will combine several such blocks so that the \(i\psi_0\) components cancel, while the remaining \(X_0\) and \(Y_0\) components match the prescribed coefficients \(x\) and \(y\). In the resulting constructions, the parameters \(a\) and \(b\) will be
specified explicitly in terms of the target coordinates \((x,y)\).

\begin{proof}[Proof of Initial Velocity Equation \cref{111dfs}]
Since \(a\) and \(b\) are fixed parameters, only the middle factor depends on
\(t\). Therefore,
\begin{align*}
  \left.\frac{d}{dt}C_{a,b}(t)\right|_{t=0}
  =
  e^{iaH}\left(
  \left.\frac{d}{dt}e^{itb\psi_0}\right|_{t=0}\right)
  e^{-iaH}
  =
  e^{iaH}(ib\psi_0)e^{-iaH}
  =
  b\,i e^{iaH}\psi_0e^{-iaH}.
\end{align*}
It remains to expand the conjugating factors $e^{\pm iaH}$. Since \(H^2=H\), then $ e^{\pm iaH}=I+(e^{\pm ia}-1)H.$ Thus
\begin{align*}
  i e^{iaH}\psi_0e^{-iaH}
  &=
  i\Bigl(I+(e^{ia}-1)H\Bigr)
  \psi_0
  \Bigl(I+(e^{-ia}-1)H\Bigr) \\
  &=
  i\psi_0
  +i(e^{ia}-1)H\psi_0
  +i(e^{-ia}-1)\psi_0H
  +i(e^{ia}-1)(e^{-ia}-1)H\psi_0H.
\end{align*}
Using $e^{\pm ia}-1=(\cos a-1)\pm i\sin a,$ and $(e^{ia}-1)(e^{-ia}-1)=-2(\cos a-1),$ we obtain
\begin{align*}
  i e^{iaH}\psi_0e^{-iaH}
  &=
  i\psi_0
  +\bigl(i(\cos a-1)-\sin a\bigr)H\psi_0 \\
  &\quad
  +\bigl(i(\cos a-1)+\sin a\bigr)\psi_0H
  -2i(\cos a-1)H\psi_0H \\
  &=
  i\psi_0
  -\sin a\,(H\psi_0-\psi_0H) \\
  &\quad
  +(\cos a-1)i
  \bigl(H\psi_0+\psi_0H-2H\psi_0H\bigr).
\end{align*}
Finally, by the definitions
\[
  X_0:=H\psi_0-\psi_0H,\qquad
  Y_0:=i[H,X_0]=i(H\psi_0+\psi_0H-2H\psi_0H),
\]
we get
\[
  i e^{iaH}\psi_0e^{-iaH}
  =
  i\psi_0-\sin(a)X_0+(\cos(a)-1)Y_0.
\]
Substituting this into the derivative formula proves
\[
  \left.\frac{d}{dt}C_{a,b}(t)\right|_{t=0}
  =
  b\bigl(i\psi_0-\sin(a)X_0+(\cos(a)-1)Y_0\bigr).
\]
\end{proof}

\begin{remark}
The particular choice of \(C_{a,b}(t)\) in \cref{23213212} is motivated by its simple structure and by the fact that, as shown below, it is already flexible enough to generate several useful Grover-compatible retractions. We do not claim that this is the only possible elementary building block. For example, one may also interchange the roles of \(H\) and \(\psi_0\) and consider the alternative curve
\[
  \widetilde C_{a,b}(t)
  :=
  e^{ia\psi_0}e^{itbH}e^{-ia\psi_0}.
\]
Such variants may lead to different but analogous first-order synthesis schemes.
\end{remark}

\begin{remark}
The key initial velocity identity in \cref{111dfs} is also the underlying ingredient in the proof of \cref{prop-grover-retr} in the main text. Due to space limitations, the main text does not expand this calculation in detail. Instead, \cref{prop-grover-retr} directly states several concrete Grover-compatible retractions and verifies them by checking the retraction conditions. In contrast, this section derives these formulas constructively: starting from the definition, we use the elementary block \(C_{a,b}(t)\) and the coefficient matching principle to explain how the 5-factor, 6-factor, and 8-factor formulas arise.
\end{remark}

\subsection{General coefficient matching principle}

Consider a product of \(J\) elementary blocks,
\[
  V_J(t):=\prod_{j=1}^J C_{a_j,b_j}(t).
\]
In the following constructions, we build the product \(V(t;x,y)\) in \cref{eq-Vtxy} precisely by choosing suitable \(a_j\) and \(b_j\) in such a product. Since \(C_{a_j,b_j}(0)=I\) for each \(j\), we have
\[
  V_J(0)=I.
\]
Moreover, the product rule, with differentiation taken with respect to \(t\), gives
\[
  \dot V_J(0)
  =  \sum_{j=1}^J\left.\frac{d}{d t} C_{a_j,b_j}(t)\right|_{t=0}
  =
  \sum_{j=1}^J
  b_j\bigl(i\psi_0-\sin(a_j)X_0+(\cos(a_j)-1)Y_0\bigr).
\]
Thus, the condition \(\dot V_J(0)=xX_0+yY_0\) if and only if
\[
\left\{
\begin{aligned}
  \sum_{j=1}^J b_j &=0, \\
  -\sum_{j=1}^J b_j\sin(a_j)&=x, \\
  \sum_{j=1}^J b_j(\cos(a_j)-1)&=y.
\end{aligned}
\right.
\]
Because the first equation gives \(\sum_j b_j=0\), the last equation becomes \(\sum_j b_j\cos(a_j)=y\). Hence the conditions may also be written compactly as \textit{matching equations}:
\[
\left\{
\begin{aligned}
  \sum_{j=1}^J b_j &= 0, \\
  -\sum_{j=1}^J b_j\sin(a_j) &= x, \\
  \sum_{j=1}^J b_j\cos(a_j) &= y.
\end{aligned}
\right.
\]
Every construction below is obtained by choosing the number of blocks \(J\) and then solving the coefficient matching equations for the parameters \(a_j\) and \(b_j\) in terms of the target coordinates \((x,y)\).

\subsection{One building block: failure}

For \(J=1\), the matching equations are
\[
  b_1=0,\qquad -b_1\sin(a_1)=x,\qquad b_1 \cos(a_1)=y.
\]
Therefore \(x=y=0\). A single building block can only produce the zero tangent vector. This failure is unavoidable: the undesired \(i\psi_0\) component in \cref{111dfs} from one block cannot be cancelled by any other term. Thus, any nontrivial Grover-compatible retraction constructed from
these elementary blocks requires at least two blocks.

\subsection{Two building blocks: a 5-factor family}

For \(J=2\), we take
\[
  V_2(t)=C_{a_1,b_1}(t)C_{a_2,b_2}(t).
\]
The matching equations are
\[
\left\{
\begin{aligned}
b_1+b_2&=0, \\
  -b_1\sin a_1-b_2\sin a_2&=x, \\
  b_1 \cos a_1 +b_2 \cos a_2 &=y.
\end{aligned}
\right.
\]
The first equation cancels the unwanted \(i\psi_0\) component. We set $b_1=-\beta,$  $b_2=\beta.$ Then the remaining two equations reduce to
\[
  \beta(\sin a_1-\sin a_2)=x,\qquad
  \beta(\cos a_2-\cos a_1)=y.
\]
Introduce the midpoint and half difference of the two angles:
\[
  \mu:=\frac{a_1+a_2}{2},\qquad
  \delta:=\frac{a_1-a_2}{2}.
\]
Equivalently,
\[
  a_1=\mu+\delta,\qquad a_2=\mu-\delta.
\]
Using the trigonometric identities, we obtain
\begin{gather*}
  x
=\beta(\sin a_1-\sin a_2)
  =\beta\bigl(\sin(\mu+\delta)-\sin(\mu-\delta)\bigr)
  =2\beta\sin\delta\cos\mu,\\
  y
  =\beta(\cos a_2-\cos a_1)
  =\beta\bigl(\cos(\mu-\delta)-\cos(\mu+\delta)\bigr)
  =2\beta\sin\delta\sin\mu.
\end{gather*}
Hence the two-block construction produces
\[
  (x,y)=2\beta\sin\delta\,(\cos\mu,\sin\mu).
\]
On the other hand, we may write $(x,y)$ by polar coordinates:
\[
  (x,y)=R(\cos A,\sin A),
  \qquad
  R:=\sqrt{x^2+y^2},\qquad
  A:=\operatorname{atan2}(y,x).
\]
Thus, one should force
\[
  2\beta\sin\delta=R,\qquad
  \mu=A.
\]
Therefore, for any \(\delta\) satisfying \(\sin\delta\neq 0\), we may choose
\[
  a_1=A+\delta,\qquad
  a_2=A-\delta,\qquad
  b_1=-\frac{R}{2\sin\delta},\qquad
  b_2=\frac{R}{2\sin\delta}.
\]
Finally, \(\delta\) is a free parameter of the two-block
construction, subject only to \(\sin\delta\neq 0\).

The 5-factor retraction in the main text is obtained by the convenient choice
\(\delta=\pi/2\). Then
\[
  a_1=A+\frac{\pi}{2},\qquad
  a_2=A-\frac{\pi}{2},\qquad
  b_1=-\frac R2,\qquad
  b_2=\frac R2.
\]
Moreover \(a_2-a_1=-\pi\), so the middle \(H\)-generated factor becomes \(e^{-i\pi H}\). This gives
\[
  V_2(t)
  =
  e^{ia_1H}e^{itb_1\psi_0}
  e^{-i\pi H}
  e^{itb_2\psi_0}e^{-ia_2H}.
\]
Setting \(t=1\) gives
\[
 \mathrm{R}_U^{(5)}(\eta)
  =
  e^{ia_1H}e^{ib_1\psi_0}
  e^{-i\pi H}
  e^{ib_2\psi_0}e^{-ia_2H}U.
\]
This derivation shows that the 5-factor formula is not unique. Rather, it is a one-parameter family.

\subsection{Three building blocks: a 7-factor family}

For \(J=3\), we consider
\[
  V_3(t)=C_{a_1,b_1}(t)C_{a_2,b_2}(t)C_{a_3,b_3}(t).
\]
In this case, the matching equations can be written as a linear system in \(b_1,b_2,b_3\):
\[
  \begin{bmatrix}
  1 & 1 & 1\\
  -\sin a_1 & -\sin a_2 & -\sin a_3\\
  \cos a_1 & \cos a_2 & \cos a_3
  \end{bmatrix}
  \begin{bmatrix}
  b_1\\ b_2\\ b_3
  \end{bmatrix}
  =
  \begin{bmatrix}
  0\\ x\\ y
  \end{bmatrix}.
\]
We denote this coefficient matrix by \(M(a_1,a_2,a_3)\). The solutions
\(b_1,b_2,b_3\) are uniquely determined whenever \(M(a_1,a_2,a_3)\) is
nonsingular.

Indeed, the nonsingularity condition admits a simple algebraic characterization.
Expanding the determinant along the first row and using trigonometric
identities, we obtain
\[
  \det M(a_1,a_2,a_3)
  =
  4
  \sin\frac{a_1-a_2}{2}
  \sin\frac{a_2-a_3}{2}
  \sin\frac{a_3-a_1}{2}.
\]
Therefore, $M\left(a_1, a_2, a_3\right)$ is nonsingular if and only if the three angles \(a_1,a_2,a_3\) are pairwise distinct modulo
\(2\pi\).

This gives a general construction method.
Fix any three angles \(a_1,a_2,a_3\) that are pairwise distinct
modulo \(2\pi\). These three angles serve as free parameters of the construction. Define
\[
   \begin{bmatrix}
  b_1(x,y)\\ b_2(x,y)\\ b_3(x,y)
  \end{bmatrix}\equiv
 \begin{bmatrix}
  b_1\\ b_2\\ b_3
  \end{bmatrix}
  :=
  M(a_1,a_2,a_3)^{-1}
  \begin{bmatrix}
  0\\ x\\ y
  \end{bmatrix}.
\]
Then the resulting \(V_3(t)\) satisfies \(\dot V_3(0)=xX_0+yY_0\).
Expanding \(V_3(t)\) and merging adjacent \(H\)-generated factors gives
\[
  V_3(t)
  =
  e^{ia_1H}e^{itb_1\psi_0}
  e^{i(a_2-a_1)H}e^{itb_2\psi_0}
  e^{i(a_3-a_2)H}e^{itb_3\psi_0}
  e^{-ia_3H}.
\]
Thus, setting \(t=1\), we obtain the general \(7\)-factor retraction
\[
 \mathrm{R}_U^{(7)}(\eta)
  =
  e^{ia_1H}e^{ib_1\psi_0}
  e^{i(a_2-a_1)H}e^{ib_2\psi_0}
  e^{i(a_3-a_2)H}e^{ib_3\psi_0}
  e^{-ia_3H}U,
\]
where \(b_1,b_2,b_3\) are determined by the above linear system.

The \(6\)-factor retraction introduced in the main text is obtained as a special case by choosing
\[
  a_1=\frac{\pi}{2},\qquad
  a_2=-\frac{\pi}{2},\qquad
  a_3=0.
\]
For this choice, the matching system becomes
\[
  \begin{bmatrix}
  1 & 1 & 1\\
  -1 & 1 & 0\\
  0 & 0 & 1
  \end{bmatrix}
  \begin{bmatrix}
  b_1\\ b_2\\ b_3
  \end{bmatrix}
  =
  \begin{bmatrix}
  0\\ x\\ y
  \end{bmatrix}.
\]
Solving this system gives
\[
  b_1=-\frac{x+y}{2},\qquad
  b_2=\frac{x-y}{2},\qquad
  b_3=y.
\]
Substituting the angles into the expanded product, we get
\[
  V_3(t)
  =
  e^{i\frac{\pi}{2}H}e^{itb_1\psi_0}
  e^{-i\pi H}e^{itb_2\psi_0}
  e^{i\frac{\pi}{2}H}e^{itb_3\psi_0}.
\]
The final factor \(e^{-ia_3H}\) disappears because \(a_3=0\). This is exactly the \(6\)-factor retraction $\mathrm{R}_U^{(6)}(\eta)
  =
  e^{i\frac{\pi}{2}H}e^{ib_1\psi_0}
  e^{-i\pi H}e^{ib_2\psi_0}
  e^{i\frac{\pi}{2}H}e^{iy\psi_0}U$ introduced in the main text.

\subsection{Four building blocks: a 9-factor family}

For \(J=4\), we consider
\[
  V_4(t)
  =
  C_{a_1,b_1}(t)
  C_{a_2,b_2}(t)
  C_{a_3,b_3}(t)
  C_{a_4,b_4}(t).
\]
The matching equations can be written as
\[
  \begin{bmatrix}
  1 & 1 & 1 & 1\\
  -\sin a_1 & -\sin a_2 & -\sin a_3 & -\sin a_4\\
  \cos a_1 & \cos a_2 & \cos a_3 & \cos a_4
  \end{bmatrix}
  \begin{bmatrix}
  b_1\\ b_2\\ b_3\\ b_4
  \end{bmatrix}
  =
  \begin{bmatrix}
  0\\ x\\ y
  \end{bmatrix}.
\]
If the above \(3\times 4\) coefficient matrix has rank 3, then
the matching equations are solvable for every \((x,y)\), but the solutions \(b_1,b_2,b_3,b_4\) are not unique.
A general four block product expands to
\[
  V_4(t)
  =
  e^{ia_1H}e^{itb_1\psi_0}
  e^{i(a_2-a_1)H}e^{itb_2\psi_0}
  e^{i(a_3-a_2)H}e^{itb_3\psi_0}
  e^{i(a_4-a_3)H}e^{itb_4\psi_0}
  e^{-ia_4H}.
\]
Hence a generic four block construction gives a \(9\)-factor family.

The \(8\)-factor retraction introduced in the main text is obtained from a special fixed-angle choice. Specifically, choose
\[
  a_1=0,\qquad
  a_2=\frac{\pi}{2},\qquad
  a_3=-\frac{\pi}{2},\qquad
  a_4=-\pi.
\]
Then the matching matrix becomes
\[
  \begin{bmatrix}
  1 & 1 & 1 & 1\\
  0 & -1 & 1 & 0\\
  1 & 0 & 0 & -1
  \end{bmatrix}.
\]
For this fixed-angle choice, take
\[
  b_1=\frac y2,\qquad
  b_2=-\frac x2,\qquad
  b_3=\frac x2,\qquad
  b_4=-\frac y2.
\]
Then
\[
  \begin{bmatrix}
  1 & 1 & 1 & 1\\
  0 & -1 & 1 & 0\\
  1 & 0 & 0 & -1
  \end{bmatrix}
  \begin{bmatrix}
  y/2\\ -x/2\\ x/2\\ -y/2
  \end{bmatrix}
  =
  \begin{bmatrix}
  0\\ x\\ y
  \end{bmatrix}.
\]
Therefore the matching equations are satisfied, and hence $\dot V_4(0)=xX_0+yY_0.$ Finally, we have
\[
  V_4(t)
  =
  e^{it\frac y2\psi_0}
  e^{i\frac{\pi}{2}H}
  e^{-it\frac x2\psi_0}
  e^{-i\pi H}
  e^{it\frac x2\psi_0}
  e^{-i\frac{\pi}{2}H}
  e^{-it\frac y2\psi_0}
  e^{i\pi H}.
\]
The first \(H\)-generated factor $e^{ia_1H}$ is absent because \(a_1=0\).
Setting \(t=1\), we obtain the \(8\)-factor retraction introduced in the main text
\[
  \mathrm{R}_U^{(8)}(\eta)
  =
  e^{i\frac y2\psi_0}
  e^{i\frac{\pi}{2}H}
  e^{-i\frac x2\psi_0}
  e^{-i\pi H}
  e^{i\frac x2\psi_0}
  e^{-i\frac{\pi}{2}H}
  e^{-i\frac y2\psi_0}
  e^{i\pi H}U.
\]

\subsection{Summary}

The above constructions are all based on the same coefficient matching principle. Given \(\eta\in\mathcal WU\), write $\eta U^\dagger=xX_0+yY_0.$ We construct a product \(V(t;x,y)\) generated only by \(H\) and \(\psi_0\) such that $V(0;x,y)=I,$ $\dot V(0;x,y)=xX_0+yY_0.$ Then \(\mathrm{R}_U(\eta):=V(1;x,y)U\) is a retraction on the restricted tangent subspace \(\mathcal WU\).
The elementary block we used
\[
  C_{a,b}(t)=e^{iaH}e^{itb\psi_0}e^{-iaH}
\]
has initial velocity
\[
  \left.\frac{d}{dt}C_{a,b}(t)\right|_{t=0}
  =
  b\bigl(i\psi_0-\sin(a)X_0+(\cos(a)-1)Y_0\bigr).
\]
Thus, for a product of \(J\) such blocks, the retraction condition reduces to the coefficient matching equations
\[
  \begin{bmatrix}
  1 & 1 & \cdots & 1\\
  -\sin a_1 & -\sin a_2 & \cdots & -\sin a_J\\
  \cos a_1 & \cos a_2 & \cdots & \cos a_J
  \end{bmatrix}
  \begin{bmatrix}
  b_1\\ b_2\\ \vdots\\ b_J
  \end{bmatrix}
  =
  \begin{bmatrix}
  0\\ x\\ y
  \end{bmatrix}.
\]
The first row cancels the unwanted \(i\psi_0\) component, while the second and third rows match the prescribed \(X_0\) and \(Y_0\) components. One block cannot produce a nonzero tangent vector. Two blocks yield the \(5\)-factor family; three blocks yield the general \(7\)-factor family and, for a special choice of angles, the \(6\)-factor formula used in the main text. Four blocks give a general \(9\)-factor family, with the \(8\)-factor formula obtained from a particular fixed-angle choice.

\section{Cayley and Polar retractions preserve the Grover plane}

In this section, we show that, when the Cayley or polar retraction is used for \cref{pro-1}, the resulting update is equivalent to a Riemannian exponential update with a rescaled tangent direction. Hence, these retractions also preserve the Grover plane (i.e., all $\ket{\psi_k}$ lie in Grover plane). Nevertheless, their direct circuit implementations are not straightforward, which motivates the construction of a Grover-compatible retraction using the Grover-type gates.

We first recall the notation. We work on the unitary manifold,
\[
    \mathrm{U} (N)=\bigl\{ U\in \mathbb C^{N\times N} \big| U^{\dagger}U=I_N\bigr\}.
\]
Its tangent space at \(U\in \mathrm U(N)\) is $T_U=\mathfrak u(N)U,$ where
\[
\mathfrak u(N)=\{\Omega\in\mathbb C^{N\times N}:\Omega^\dagger=-\Omega\}
\]
is the space of skew-Hermitian matrices. Thus every tangent vector at \(U\) is written as \(\eta=\Omega U\), where \(\Omega\in\mathfrak u(N)\). The Riemannian exponential map is given by
\[
\operatorname{Exp}_U(\eta)=e^\Omega U.
\]
The Cayley retraction \cite{wen2013feasible} is
\begin{equation*}
\mathrm{R}_U^{\mathrm{cay}}(\eta)
=
\left(I-\frac12\Omega\right)^{-1}
\left(I+\frac12\Omega\right)U,
\end{equation*}
and polar retraction \cite{absil2012projection,absil2008optimization} is
\begin{equation*}
\mathrm{R}_U^{\mathrm{polar}}(\eta)
=(U+\eta)\left(I+\eta^{\dagger} \eta\right)^{-1 / 2}.
\end{equation*}

All functions of Hermitian matrices are defined by spectral calculus. In particular, if Hermitian \(A=Q\operatorname{diag}(\lambda_1,\ldots,\lambda_N)Q^\dagger\), then
\[
f(A)=Q\operatorname{diag}(f(\lambda_1),\ldots,f(\lambda_N))Q^\dagger.
\]
The quotients \(f(A)=2\arctan(A/2)/A\) and \(f(A)=\arctan A/A\) are understood by continuous extension at \(0\), with value \(1\).

\subsection{General skew-Hermitian directions}

We first record a general spectral form of the Cayley and polar retractions on the unitary group.  When written in terms of an arbitrary skew-Hermitian \(\Omega\), both retractions admit exponential representations via functional calculus.  In the special case where the nonzero spectrum of \(\Omega\) is \(\{\pm i\rho\}\), these representations reduce to the rescalings of the same tangent direction.  No Grover structure is used in this result.

\begin{theorem}
[General spectral form of Cayley and polar retractions]
\label{thm:general}
Let \(U\in\mathrm{U}(N)\), \(\Omega\in\mathfrak u(N)\), and \(\eta=\Omega U\). Set \(A :=-i\Omega\). Then \(A=A^\dagger\), and
\begin{equation*}
\mathrm{R}_U^{\mathrm{cay}}(\eta)
=
\exp\left(i\,2\arctan\frac A2\right)U,
\qquad
\mathrm{R}_U^{\mathrm{polar}}(\eta)
=
\exp(i\arctan A)U.
\end{equation*}
If all nonzero eigenvalues of \(A\) belong to \(\{\pm\rho\}\) for some \(\rho\ge0\),
or equivalently, all nonzero eigenvalues of \(\Omega\) belong to \(\{\pm i\rho\}\), then
\begin{align*}
\mathrm{R}_U^{\mathrm{cay}}(\eta)
&=
\exp\bigl(\gamma_C(\Omega)\Omega\bigr)U,
&
\gamma_C(\Omega)
&:=
\frac{\arctan(\rho/2)}{\rho/2},
\\
\mathrm{R}_U^{\mathrm{polar}}(\eta)
&=
\exp\bigl(\gamma_P(\Omega)\Omega\bigr)U,
&
\gamma_P(\Omega)
&:=
\frac{\arctan\rho}{\rho}.
\end{align*}
At \(\rho=0\), both factors are defined by continuous extension and equal \(1\).
\end{theorem}

\begin{proof}
We write the spectral decomposition of Hermitian $A$ as
\[
A=Q\operatorname{diag}(\lambda_1,\ldots,\lambda_N)Q^\dagger,
\qquad
\lambda_j\in\mathbb R,
\qquad
Q\in \mathrm U(N).
\]
Since \(A=-i\Omega\), we have \(\Omega=iA\). For the Cayley retraction, we have
\begin{align*}
\left(I-\frac12\Omega\right)^{-1}
\left(I+\frac12\Omega\right)
&=
\left(I-\frac12 iA \right)^{-1}
\left(I+\frac12 iA\right) \\
&=
Q\operatorname{diag}\left(
\frac{1+i\lambda_j/2}{1-i\lambda_j/2}
\right)Q^\dagger \\
&=
Q\operatorname{diag}\left(
e^{2i\arctan(\lambda_j/2)}
\right)Q^\dagger \\
&=
\exp\left(i\,2\arctan\tfrac A2\right),
\end{align*}
where we used $\frac{1+it}{1-it}=e^{2i\arctan t}$ for any $t\in\mathbb R.$\footnote{Indeed, let \(\theta :=\arctan t\). Then \(\theta\in(-\pi/2,\pi/2)\), \(t=\tan\theta\). By Euler's formula, \(\cos\theta\pm i\sin\theta=e^{\pm i\theta}\).
Hence,
\(\frac{1+it}{1-it}
=\frac{1+i\tan\theta}{1-i\tan\theta}
=\frac{\cos\theta+i\sin\theta}{\cos\theta-i\sin\theta}=e^{2i\theta}=e^{2i\arctan t}\).} Hence,
\[
\mathrm{R}_U^{\mathrm{cay}}(\eta)
=
\exp\left(i\,2\arctan\frac A2\right)U.
\]
For the polar retraction, since \(U+\eta=(I+\Omega)U\) and $I+\eta^\dagger\eta = U^\dagger(I+\Omega^\dagger\Omega)U,$ we get
\[
(I+\eta^\dagger\eta)^{-1/2}
=(U^\dagger(I+\Omega^\dagger\Omega)U)^{-1/2}
=
U^\dagger(I+\Omega^\dagger\Omega)^{-1/2}U.
\]
Therefore
\begin{align*}
\mathrm{R}_U^{\mathrm{polar}}(\eta)
&=
(I+\Omega)(I+\Omega^\dagger\Omega)^{-1/2}U  \\
&=
(I+iA)(I+A^2)^{-1/2}U  \\
&=
Q\operatorname{diag}\left(
\frac{1+i\lambda_j}{\sqrt{1+\lambda_j^2}}
\right)Q^\dagger U  \\
&=
Q\operatorname{diag}\left(
e^{i\arctan\lambda_j}
\right)Q^\dagger U  \\
&=
\exp(i\arctan A)U,
\end{align*}
where we used $\frac{1+it}{\sqrt{1+t^2}}=e^{i\arctan t}$ for any $t\in\mathbb R.$\footnote{Indeed, let \(\theta :=\arctan t\). Then \(\theta\in(-\pi/2,\pi/2)\), \(t=\tan\theta\), and \(\cos\theta=1/\sqrt{1+t^2}\) and \(\sin\theta=t/\sqrt{1+t^2}\). Therefore, \(\frac{1+it}{\sqrt{1+t^2}}=\cos\theta+i\sin\theta=e^{i\theta}=e^{i\arctan t}\).}

Now assume that all nonzero eigenvalues $\lambda_j$ of \(A\) belong to
\(\{\rho,-\rho\}\). If \(\rho>0\), then every eigenvalue $\lambda_j$ of \(A\) belongs to
\(\{0,\rho,-\rho\}\). Hence, by the oddness of \(\arctan (\cdot)\),
\[
2\arctan\frac{\lambda_j}{2}
=
\gamma_C(\Omega)\lambda_j,
\qquad
\gamma_C(\Omega)
:=
\frac{\arctan(\rho/2)}{\rho/2},
\]
for every \(j\). Thus $2\arctan\frac A2=\gamma_C(\Omega)A.$ Multiplying by \(i\) and using \(\Omega=iA\), we obtain
\[
i\,2\arctan\frac A2=\gamma_C(\Omega)\Omega.
\]
Therefore, $\mathrm{R}_U^{\mathrm{cay}}(\eta) = \exp\bigl(\gamma_C(\Omega)\Omega\bigr)U.$ Similarly, $\mathrm{R}_U^{\mathrm{polar}}(\eta) = \exp\bigl(\gamma_P(\Omega)\Omega\bigr)U.$
If \(\rho=0\), then \(A=0\), hence \(\Omega=0\), and both retractions equal \(U\). The two factors are defined by continuous extension, with
\[
\lim_{\rho\to0}\frac{\arctan(\rho/2)}{\rho/2}
=
\lim_{\rho\to0}\frac{\arctan\rho}{\rho}
=
1.
\]
Thus the formulas also hold for \(\rho=0\).
\end{proof}

\subsection{The Grover two-dimensional gradient subspace}

We now specialize to the Grover setting of interest.
Let \(H\in\mathbb C^{N\times N}\) be an orthogonal projector, so \(H=H^\dagger=H^2\).
Let \(|\psi_0\rangle\in\mathbb C^N\) be normalized, set \(\psi_0=|\psi_0\rangle\langle\psi_0|\), and assume the nontrivial regime \(0<\langle\psi_0|H|\psi_0\rangle<1\). Define
\[
X_0:=[H,\psi_0],
\qquad
Y_0:=i[H,X_0],
\qquad
\mathcal W:=\operatorname{span}_{\mathbb R}\{X_0,Y_0\}\subset\mathfrak u(N).
\]

The following theorem shows that, when the tangent direction is restricted to the Grover two-dimensional gradient subspace \(\mathcal W\), both the Cayley and polar retractions reduce to the Riemannian exponential map with a scalar rescaling of the tangent vector. The key point is that every \(\Omega\in\mathcal W\) acts nontrivially only on the Grover plane $\mathcal{S}=\operatorname{span}_{\mathbb{C}}\left\{\left|\psi_0\right\rangle, H\left|\psi_0\right\rangle\right\}$, where its spectrum has $\{ \pm i \rho\}$.

\begin{theorem}[Cayley and polar retractions on the Grover gradient subspace]
\label{thm:grover}
Let \(U\in\mathrm{U}(N)\), \(\Omega\in\mathcal W\), and \(\eta=\Omega U\). Then
\begin{align}
\mathrm{R}_U^{\mathrm{cay}}(\eta)
&=
\operatorname{Exp}_U(\gamma_C(\Omega)\eta),
&
\gamma_C(\Omega)
&=
\frac{
\arctan\left(\|\Omega\|_F/(2\sqrt2)\right)
}{
\|\Omega\|_F/(2\sqrt2)
},
\label{eq:cayley-grover}
\\
\mathrm{R}_U^{\mathrm{polar}}(\eta)
&=
\operatorname{Exp}_U(\gamma_P(\Omega)\eta),
&
\gamma_P(\Omega)
&=
\frac{
\arctan\left(\|\Omega\|_F/\sqrt2\right)
}{
\|\Omega\|_F/\sqrt2
}.
\label{eq:polar-grover}
\end{align}
At \(\Omega=0\), both scaling factors are defined by continuous extension and equal \(1\).
\end{theorem}

\begin{proof}
Let $u:=H|\psi_0\rangle,$ $v:=(I-H)|\psi_0\rangle.$ Then \(|\psi_0\rangle=u+v\), \(Hu=u\), \(Hv=0\), and \(u\perp v\). Set \(q_0 :=\|u\|^2\) and \(p_0:=\|v\|^2=1-q_0\). The Grover plane is $\mathcal S=\operatorname{span}_{\mathbb C}\{u,v\}.$ We now compute the action of \(\Omega\) on the Grover plane \(\mathcal S\). In the non-normalized orthogonal basis $\mathcal{B}:=\{u, v\}$ of \(\mathcal S\), the matrices of \(H\) and \(\psi_0\) restricted to \(\mathcal S\) are
\[
[H]_{\mathcal{B}}=\begin{pmatrix}
1 & 0 \\
0 & 0
\end{pmatrix},
\qquad
\left[\psi_0\right]_{\mathcal{B}}=
\begin{pmatrix}
q_0&p_0\\
q_0&p_0
\end{pmatrix}.
\]
Therefore
\begin{align*}
\left[X_0\right]_{\mathcal{B}}=\left[H, \psi_0\right]_{\mathcal{B}}=
\begin{pmatrix}
0&p_0\\
-q_0&0
\end{pmatrix},
\qquad
\left[Y_0\right]_{\mathcal{B}}=i\left[H, X_0\right]_{\mathcal{B}}
=
\begin{pmatrix}
0&ip_0\\
iq_0&0
\end{pmatrix}.
\end{align*}
For \(\Omega=xX_0+yY_0\), define \(\alpha:=x+iy\). Then
\[
[\Omega]_{\mathcal{B}}=x [X_0]_{\mathcal{B}} +y [Y_0]_{\mathcal{B}}=
\begin{pmatrix}
0&p_0\alpha\\
-q_0\overline{\alpha}&0
\end{pmatrix}.
\]
A direct multiplication gives
\begin{align*}
[\Omega^2]_{\mathcal{B}}=
\begin{pmatrix}
0&p_0\alpha\\
-q_0\overline{\alpha}&0
\end{pmatrix}^2
=
-p_0q_0|\alpha|^2
\begin{pmatrix}
1&0\\
0&1
\end{pmatrix}.
\end{align*}
Thus, on \(\mathcal S\), the operator \(\Omega\) satisfies
\[
\left.\Omega^2\right|_{\mathcal S}=-p_0q_0|\alpha|^2 I_{\mathcal S}.
\]

It remains to verify the spectral situation in Theorem~\ref{thm:general}. By \cref{lem-subspace-inv}, both \(X_0\) and \(Y_0\) vanish on \(\mathcal S^\perp\). Hence \(\Omega\in\mathcal W\) also vanishes on \(\mathcal S^\perp\). Consequently, the only possible nonzero eigenvalues of \(\Omega\) come from its restriction to \(\mathcal S\). From the identity above, these eigenvalues are
\[
\pm i\sqrt{p_0q_0}\,|\alpha|.
\]
Moreover, $\|\eta\|_F^2=\|\Omega\|_F^2=2p_0q_0|\alpha|^2.$ Hence, let
\[
\rho :=\sqrt{p_0q_0}|\alpha|=\frac{\|\Omega\|_F}{\sqrt2}.
\]
Applying Theorem~\ref{thm:general} with this \(\rho\) gives exactly \eqref{eq:cayley-grover} and \eqref{eq:polar-grover}.
\end{proof}

\section{QR retraction does not preserve the Grover plane}

Let \(U \in \mathrm{U}(N)\), \(\Omega \in \mathfrak{u}(N)\), and \(\eta=\Omega U\), where $\mathfrak{u}(N) = \left\{\Omega \in \mathbb{C}^{N \times N}: \Omega^{\dagger}=-\Omega\right\}.$
The QR retraction \cite{sato2019cholesky,absil2008optimization} is defined by
\[
\mathrm{R}_U^{\mathrm{qr}}(\eta)=\operatorname{qf}(U+\eta),
\]
where \(\operatorname{qf}\) denotes the \(Q\)-factor in the QR decomposition, with the diagonal entries of the \(R\)-factor chosen to be positive. Although this is a standard retraction on the unitary manifold, it does not, in general, preserve the Grover plane (i.e., all $\ket{\psi_k}$ lie in Grover plane). We demonstrate this by an explicit counterexample.

Let
\[
N=3,\qquad
H=
\begin{pmatrix}
1&0&0\\
0&0&0\\
0&0&0
\end{pmatrix},
\qquad
\ket{\psi_0}
=
\frac{1}{\sqrt3}
\begin{pmatrix}
1\\1\\1
\end{pmatrix}.
\]
The choice \(N=3\) is only for simplicity; analogous examples can be embedded into higher dimensional spaces, including dimensions \(N=2^n\). Define
\[
u:=H\ket{\psi_0}
=
\frac{1}{\sqrt3}
\begin{pmatrix}
1\\0\\0
\end{pmatrix},
\qquad
v:=(I-H)\ket{\psi_0}
=
\frac{1}{\sqrt3}
\begin{pmatrix}
0\\1\\1
\end{pmatrix}.
\]
Thus the Grover plane is
\[
\mathcal S
:=
\operatorname{span}_{\mathbb C}
\left\{
u,v
\right\}
=
\operatorname{span}_{\mathbb C}
\left\{
\begin{pmatrix}
1\\0\\0
\end{pmatrix},
\begin{pmatrix}
0\\1\\1
\end{pmatrix}
\right\},
\]
and its orthogonal complement is
\[
\mathcal S^\perp
=
\operatorname{span}_{\mathbb C}\{n\},
\qquad
n:=
\frac{1}{\sqrt2}
\begin{pmatrix}
0\\1\\-1
\end{pmatrix}.
\]
For this choice of \(H\) and \(\psi_0=\ket{\psi_0}\bra{\psi_0}\), we have
\[
X_0=[H,\psi_0]
=
\frac13
\begin{pmatrix}
0&1&1\\
-1&0&0\\
-1&0&0
\end{pmatrix}.
\]
Take \(U=I\) and \(\Omega=tX_0\in\mathcal W\), with \(t\neq0\). Set \(a:=t/3\). Then
\[
I+\Omega
=
\begin{pmatrix}
1&a&a\\
-a&1&0\\
-a&0&1
\end{pmatrix}.
\]
Let \(Q:=\operatorname{qf}(I+\Omega)\), where \(\operatorname{qf}\) denotes the \(Q\)-factor in the standard QR decomposition, with the diagonal entries of \(R\)-factor chosen to be positive. The columns of \(I+\Omega\) are
\[
c_1=
\begin{pmatrix}
1\\-a\\-a
\end{pmatrix},
\qquad
c_2=
\begin{pmatrix}
a\\1\\0
\end{pmatrix},
\qquad
c_3=
\begin{pmatrix}
a\\0\\1
\end{pmatrix}.
\]
Applying the Gram--Schmidt procedure gives
\[
q_1=
\frac{1}{\sqrt{1+2a^2}}
\begin{pmatrix}
1\\-a\\-a
\end{pmatrix},
\qquad
q_2=
\frac{1}{\sqrt{1+a^2}}
\begin{pmatrix}
a\\1\\0
\end{pmatrix},
\qquad
q_3
=
\sqrt{\frac{1+a^2}{1+2a^2}}
\begin{pmatrix}
\dfrac{a}{1+a^2}\\[3pt]
-\dfrac{a^2}{1+a^2}\\[3pt]
1
\end{pmatrix}.
\]
Hence \(Q=(q_1,q_2,q_3)\), and
\[
Q\ket{\psi_0}
=
\frac{1}{\sqrt3}(q_1+q_2+q_3).
\]
We now compute the component of \(Q\ket{\psi_0}\) along \(\mathcal S^\perp\). First,
\[
\langle n,q_1\rangle=0,
\qquad
\langle n,q_2\rangle=
\frac{1}{\sqrt{2(1+a^2)}},
\qquad
\langle n,q_3\rangle
=
-\frac{\sqrt{1+2a^2}}{\sqrt{2(1+a^2)}}.
\]
Therefore
\begin{align*}
\left\langle n,Q | \psi_0 \right\rangle
=
\frac{1}{\sqrt3}
\left(
\langle n,q_1\rangle+\langle n,q_2\rangle+\langle n,q_3\rangle
\right)
=
\frac{1-\sqrt{1+2a^2}}{\sqrt{6(1+a^2)}}.
\end{align*}
Since \(a\neq0\), we have \(\sqrt{1+2a^2}>1\), and hence $\left\langle n,Q | \psi_0 \right\rangle \neq0.$ Thus
\[
Q\ket{\psi_0}\notin\mathcal S.
\]
Equivalently, the QR update
\[
\mathrm{R}_I^{\mathrm{qr}}(\Omega)\ket{\psi_0}
=
\operatorname{qf}(I+\Omega)\ket{\psi_0}
=
Q\ket{\psi_0}
\notin \mathcal S.
\]
Thus the QR retraction update leaves the Grover plane, even though \(\Omega\in\mathcal W\). This shows that the QR retraction does not, in general, preserve the Grover 2D dynamics.

\bibliographystyle{siamplain}
\bibliography{references}